\newcommand\myoverset[2]{\overset{\textstyle #1\mathstrut}{#2}}
\newenvironment{customlem}[1]
  {\innercustomlem}
  {\endinnercustomlem}
\newenvironment{customthm}[1]
  {\innercustomthm}
  {\endinnercustomthm}
\newenvironment{customcor}[1]
  {\innercustomcor}
  {\endinnercustomcor}
\newenvironment{customprop}[1]
  {\innercustomprop}
  {\endinnercustomprop}
\renewcommand{\vec}[1]{\mathbf #1}
\def\phi{\varphi}
\newcommand{\sect}{Section} %Section
\newcommand{\dfn}{Definition} %Definition
\newcommand{\lem}{Lemma} %Lemma
\newcommand{\thm}{Theorem} %Theorem
\newcommand{\prp}{Proposition} %Property
\newcommand{\fig}{Figure} %Figure
\newcommand{\iffi}{\textit{iff} } %if and only if
\definecolor{tim}{RGB}{0, 0, 250}
\newcommand{\ch}{\mathbf{Ch}}
\newcommand{\chk}[2]{\mathbf{Ch}_{#1}(#2)}
\newcommand{\ro}{(\mathsf{r})}
\newcommand{\rop}[1]{(\mathsf{r}_{#1})}
\newcommand{\at}{\mathrm{At}}
\newcommand{\lbl}{\mathrm{L}}
\newcommand{\nodes}{\mathrm{V}}
\newcommand{\edges}{\mathrm{E}}
\newcommand{\dg}{G_{\der}}
\newcommand{\os}{\Upsigma}
\newcommand{\empstr}{\varepsilon}
\newcommand{\CR}{(\mathsf{cr})}
\newcommand{\TR}{(\mathsf{tr})}
\newcommand{\AR}{(\mathsf{ar})}
\newcommand{\RO}{(\mathsf{r})}
\newcommand{\ilt}{<}
\newcommand{\N}{\vec{n}}
\newcommand{\subos}{\sqsubseteq}
\newcommand{\dlen}[1]{|#1|}
\newcommand{\vn}{n}
\newcommand{\vm}{m}
\newcommand{\vk}{k}
\newcommand{\der}{\delta}
 \newcommand{\drel}[2]{\mathop{\smash{\stackrel{\phantom{\underline{g}}{#1}\phantom{\underline{g}}}{\smash{\longrightarrow}}_{#2}}}} % old version
\newcommand{\ruleset}{\mathcal{R}}
\newcommand{\head}{\mathit{head}}
\newcommand{\body}{\mathit{body}}
\newcommand{\gbts}{\mathbf{gbts}}
\newcommand{\bts}{\mathbf{bts}}
\newcommand{\sbts}{\mathbf{fts}}
\newcommand{\adgs}{\mathbf{cdgs}}
\newcommand{\wadgs}{\mathbf{wcdgs}}
\newcommand{\cdgs}{\mathbf{cdgs}}
\newcommand{\wcdgs}{\mathbf{wcdgs}}
\newcommand{\wgbts}{\mathbf{wgbts}}
\newcommand{\wandgbts}{\mathbf{(w)gbts}}
\newcommand{\wandcdgs}{\mathbf{(w)cdgs}}
\newcommand{\applic}{\tau}
\newcommand{\rsii}{\mathcal{R}_{2}}
\newcommand{\rsiii}{\mathcal{R}_{3}}
\newcommand{\dbii}{\mathcal{D}_{\dag}}
\newcommand{\dbiii}{\mathcal{D}_{\ddag}}
\newcommand{\conset}{\mathbf{Con}} %\mathrm{Con}}
\newcommand{\cons}{C}
\newcommand{\nonc}{\overline{C}}
\newcommand{\nullset}{\mathbf{Nul}} %\mathrm{Nul}}
\newcommand{\termset}{\mathbf{Ter}} %\mathrm{Ter}}
\newcommand{\varset}{\mathbf{Var}} %\mathrm{Var}}
\newcommand{\con}[1]{#1}
\newcommand{\var}[1]{#1}
\newcommand{\cona}{\con{a}}
\newcommand{\conb}{\con{b}}
\newcommand{\conc}{\con{c}}
\newcommand{\varx}{\var{x}}
\newcommand{\vary}{\var{y}}
\newcommand{\varz}{\var{z}}
\newcommand{\sig}{\Sigma}
\newcommand{\arf}{ar}
\newcommand{\ins}{\mathcal{I}}
\newcommand{\db}{\mathcal{D}}
\newcommand{\atseti}{\mathcal{X}}
\newcommand{\atsetii}{\mathcal{Y}}
\newcommand{\imp}{\rightarrow}
\newcommand{\kb}{\mathcal{K}}
\newcommand{\rs}{\mathcal{R}}
\newcommand{\calr}{\mathcal{R}}
\newcommand{\fr}{\mathit{fr}}
\newcommand{\grd}[1]{G(#1)}
\newcommand{\grdv}{V}
\newcommand{\grde}{E}
\newcommand{\td}{T}
\newcommand{\tdv}{V}
\newcommand{\tde}{E}
\newcommand{\w}[1]{w(#1)}
\newcommand{\tw}[1]{tw(#1)}
\newcommand{\maxofset}[1]{\max\{#1\}}
\newcommand{\minofset}[1]{\min\{#1\}}
\begin{document}
%
%Syntactic Characterizations of Existential Rule Classes with Bounded Treewidths
\title{Derivation-Graph-Based Characterizations of Decidable Existential Rule Sets\thanks{Work supported by the ERC through Consolidator Grant 771779 %\textit{A Grand Unified Theory of Decidability in Logic-Based Knowledge Representation} 
 (DeciGUT).}}
%
%\titlerunning{Abbreviated paper title}
% If the paper title is too long for the running head, you can set
% an abbreviated paper title here
%
\author{Tim S. Lyon\orcidID{0000-0003-3214-0828} \and
Sebastian Rudolph\orcidID{0000-0002-1609-2080}} %\Letter
\authorrunning{T. S. Lyon \and S. Rudolph}
% First names are abbreviated in the running head.
% If there are more than two authors, 'et al.' is used.
%
\institute{Computational Logic Group, TU Dresden, Germany
\email{\{timothy\_stephen.lyon,sebastian.rudolph\}@tu-dresden.de}}
\maketitle              % typeset the header of the contribution
\begin{abstract}
This paper establishes alternative characterizations of very expressive classes of existential rule sets with decidable query entailment. We consider the notable class of greedy bounded-treewidth sets ($\gbts$) and a new, generalized variant, called weakly gbts ($\wgbts$). Revisiting and building on the notion of derivation graphs, we define (weakly) cycle-free derivation graph sets ($\wandcdgs$) and employ elaborate proof-theoretic arguments to obtain that $\gbts$ and $\cdgs$ coincide, as do $\wgbts$ and $\wcdgs$. These novel characterizations advance our analytic proof-theoretic understanding of existential rules and will likely be instrumental in practice.
%The abstract should briefly summarize the contents of the paper in 150--250 words.
\keywords{TGDs \and query entailment \and bounded treewidth \and proof-theory}
\end{abstract}
%
%
%

%%%%%%%%%%%%%%%%%%%%%%%%%%%%%%%%%%%%%%%%%%%%%%%%%%%%%%%%%
%%%%%%%%%%%%%%%%%%%%%  MAIN PAPER  %%%%%%%%%%%%%%%%%%%%%%
%%%%%%%%%%%%%%%%%%%%%%%%%%%%%%%%%%%%%%%%%%%%%%%%%%%%%%%%%

%=============
%Introduction
%=============
\section{Introduction}

%Ideas for things to discuss:
%1. Would be good to give an explanation of what a "proof-theoretical characterization" is and explain that fes, as well as all other classes of rule sets (with the exception of bts) admit such a characterization.
%2. We should explain *why* admitting a proof-theoretical characterization is interesting and/or useful. Could argue, e.g. that much work has been done on recognizing decidable query entailment using semantic criteria; we look at syntactic criteria
%3. Can discuss how proof theoretical arguments are constructive, e.g. to show wgbts properly subsumes gbts we give an algorithm that transforms non-greedy derivations into greedy ones (relative to a given, conrete rule set however)
%4. New things done in paper: (1) We show that fes does not subsume pbts and aruge that sbts is proper characterization of bts, (2) We show that wgbts is a new rule set and use proof-theoretic, i.e. permutation arguments, to estbalish that wgbts contains rule sets that gbts does not, (3) We show how gbts and wgbts can be characterized in terms of derivation graphs, (4) We introduce a new reduction operation, cycle removal, that can be applied to derivation graphs and show that it is the only rule required to reduce derivation graphs

 The formalism of existential rules has come to prominence as an effective approach for both specifying and querying knowledge. Within this context, a knowledge base takes the form $\mathcal{K} = (\db,\ruleset)$, where $\db$ is a finite collection of atomic facts (called a \emph{database}) and $\ruleset$ is a finite set of \emph{existential rules} (called a \emph{rule set}), which are first-order formulae of the form $\forall \vec{x} \vec{y} (\phi(\vec{x},\vec{y}) \imp \exists \vec{z} \psi(\vec{y},\vec{z}))$. Although existential rules are written in a relatively simple language, they are expressive enough to generalize many important languages used in knowledge representation, including rule-based formalisms as well as such based on description logics. Moreover, existential rules have meaningful applications within the domain of ontology-based query answering~\cite{BagLecMugSal09}, data exchange and integration~\cite{FagKolMilPop05}, and have proven beneficial in the study of general decidability criteria~\cite{FelLyoOstRud23}.

\begin{figure}[b]\label{fig:ER-classes-and-inclusions}
~\hfill\begin{tabular}{c}
\begin{tikzpicture}
\pgftransformscale{.65}

% BTS
%\draw[very thick,fill={rgb:black,1;white,16}] 
\draw[] (-4.5,0) parabola bend (0,4.5) (4.5,0);
\node at (0,4) {$\bts$};

%FES
%\draw[fill={rgb:black,1;white,8}] 
%\draw[gray] (-7,0) parabola bend (-3,5) (1,0);
%\node at (-3,4.2) 
%{$\fes$};

%PBTS
%\draw[fill={rgb:black,1;white,8}] 
%\draw[gray] (-5,0) parabola bend (1,6) (7,0);
%\node at (1,5.2)  
%{$\pbts$};

%WGBTS = WADGS
%\draw[dashed,fill={rgb:black,1;white,4}]
\draw[gray] (-3.5,0) parabola bend (0,3.5) (3.5,0);
\node %[rotate=-45] 
at (0,2.4) {$\wgbts = \wadgs$};

%GBTS = ADGS
%\draw[fill={rgb:black,1;white,2}] 
\draw[gray] (-2,0) parabola bend (0,2) (2,0);
\node at (0,.75) {$\gbts = \adgs$};

%% Horizontal bar
\draw[very thick] (4.5,0) -- (-4.5,0);
\end{tikzpicture}
\end{tabular}\hfill~

\caption{A graphic depicting the containment relations between the classes of rule sets considered. The solid edges represent strict containment relations.}
\end{figure}
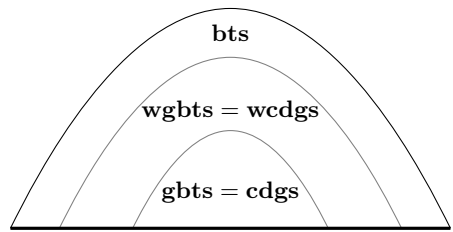

The \emph{Boolean conjunctive query entailment problem} consists of taking a knowledge base $\kb$, a Boolean conjunctive query (BCQ) $q$, and determining if $\kb \models q$. As this problem is known to be undecidable for arbitrary rule sets~\cite{ChandraLM81}, much work has gone into identifying existential rule fragments for which decidability can be reclaimed. Typically, such classes of rule sets are described in one of two ways: either, a rule set's membership in said class can be established through easily verifiable \emph{syntactic properties} (such classes are called \emph{concrete classes}), or the property is more \emph{abstract} (which is often defined on the basis of semantic notions) and may be hard or even impossible to algorithmically determine (such classes are called \emph{abstract classes}). Examples of concrete classes include functional/inclusion dependencies~\cite{JohKlu82}, datalog, and guarded rules~\cite{CaliGK13}. Examples of abstract classes include finite expansion sets~\cite{BagMug02}, finite unification sets~\cite{BagLecMug09}, and bounded-treewidth sets ($\bts$)~\cite{CaliGK13}.

Yet, there is another means of establishing the decidability of query entailment: only limited work has gone into identifying classes of rule sets with decidable query entailment based on their \emph{proof-theoretic characteristics}, in particular, based on specifics of the derivations such rules produce. To the best of our knowledge, only the class of \emph{greedy bounded treewidth sets} ($\gbts$) has been identified in such a manner (see~\cite{ThoBagMugRud12}). A rule set qualifies as $\gbts$ when every derivation it produces is \emph{greedy}, in a sense that it is possible to construct a tree decomposition of finite width in a ``greedy'' fashion alongside the derivation, ensuring the existence of a model with finite treewidth for the knowledge base under consideration, thus warranting the decidability of query entailment~\cite{CaliGK13}. 

In this paper, we investigate the $\gbts$ class and three new classes of rule sets where decidability is determined proof-theoretically. First, we define a weakened version of $\gbts$, dubbed $\wgbts$, where the rule set need only produce \emph{at least one greedy derivation} relative to any given database. Second, we investigate two new classes of rule sets, dubbed \emph{cycle-free derivation graph sets} ($\adgs$) and \emph{weakly cycle-free derivation graph sets} ($\wadgs$), which are defined relative to the notion of a \emph{derivation graph}. Derivation graphs were introduced by Baget et al.~\cite{BagLecMugSal11} and are directed acyclic graphs encoding \emph{how} certain facts are derived in the course of a derivation. Notably, via the application of \emph{reduction operations}, a derivation graph may be reduced to a tree, which serves as a tree decomposition of a model of the considered knowledge base. Such objects helped establish that (weakly) frontier-guarded rule sets are $\bts$~\cite{BagLecMugSal11}. In short, our key contributions are: 
\begin{enumerate}
%(1) 
\item We investigate how proof-theoretic structures gives rise to decidable query entailment and propose three new classes of rule sets. 
%(2) 
\item We show that $\gbts = \adgs$ and $\wgbts = \wadgs$, establishing a correspondence between greedy derivations and reducible derivation graphs. 
%(3)
\item 
We show that $\wgbts$ \emph{properly subsumes} $\gbts$ via a novel proof transformation argument. Therefore, by the former point, we also find that $\wadgs$ properly subsumes $\adgs$. 
%(4) We show that the purported proof-theoretical characterization of $\bts$ fails to subsume $\fes$, and thus, fails to coincide with $\bts$.

\end{enumerate}

The paper is organized accordingly: In \sect~\ref{sec:prelims}, we define preliminary notions. %, and in \sect~\ref{sec:fes-bts}, we discuss issues surrounding a proof-theoretic characterization of $\bts$. 
 %Subsequently, 
 We study $\gbts$ and $\wgbts$ in \sect~\ref{sec:greedy}, and show that the latter class properly subsumes the former via an intricate proof transformation argument. In \sect~\ref{sec:derivation-graphs}, we define $\adgs$ and $\wadgs$ as well as show that $\gbts = \adgs$ and $\wgbts = \wadgs$. Last, in \sect~\ref{sec:conclusion}, we conclude and discuss future work.

%=============
%Log. Prelims
%=============
\section{Preliminaries}\label{sec:prelims}

\medskip \noindent {\bf Syntax and formulae.} We let $\termset$ be a set of \emph{terms}, which is the the union of three countably infinite, pairwise disjoint sets, namely, the set of \emph{constants} $\conset$, the set of \emph{variables} $\varset$, and the set of \emph{nulls} $\nullset$. We use $\cona$, $\conb$, $\conc$, $\ldots$ (occasionally annotated) to denote constants, and $\varx$, $\vary$, $\varz$, $\ldots$ (occasionally annotated) to denote both variables and nulls. A \emph{signature} $\sig$ is a set of \emph{predicates} $p$, $q$, $r$, $\ldots$ (which may be annotated) such that for each $p \in \sig$, $\arf(p) \in \mathbb{N}$ is the \emph{arity} of $p$. %We assume that the distinguished, unary predicate $\top$, which holds of every term, occurs in every signature.\todo{Rather nonstandard for plain ER work, we should get rid of it if it is not needed.} 
 For simplicity, we assume a fixed signature $\sig$ throughout the paper.

An \emph{atom} over $\sig$ is defined to be a formula of the form $p(t_{1}, \ldots, t_{n})$, where $p \in \sig$, $\arf(p) = n$, and $t_{i} \in \termset$ for each $i \in \{1, \ldots, n\}$. A \emph{ground atom} over $\sig$ is an atom $p(a_{1}, \ldots, a_{n})$ such that $a_{i} \in \conset$ for each $i \in \{1, \ldots, n\}$. We will often use $\vec{t}$ to denote a tuple $(t_{1}, \ldots, t_{n})$ of terms and $p(\vec{t})$ to denote a (ground) atom $p(t_{1}, \ldots, t_{n})$. An \emph{instance} over $\sig$ is defined to be a (potentially infinite) set $\ins$ of atoms over constants and nulls, and a \emph{database} $\db$ is a finite set of ground atoms. We let $\atseti$, $\atsetii$, $\ldots$ (occasionally annotated) denote (potentially infinite) sets of atoms with $\termset(\atseti)$, $\conset(\atseti)$, $\varset(\atseti)$, and $\nullset(\atseti)$ denoting the set of terms, constants, variables, and nulls occurring in the atoms of $\atseti$, respectively.

\medskip \noindent {\bf Substitutions and homomorphisms.} A \emph{substitution} is a partial function over the set of terms $\termset$. A \emph{homomorphism} $h$ from a set $\atseti$ of atoms to a set $\atsetii$ of atoms, is a substitution $h : \termset(\atseti) \to \termset(\atsetii)$ such that (i) $p(h(t_{1}), \ldots, h(t_{n})) \in \atsetii$, if $p(t_{1}, \ldots, t_{n}) \in \atseti$, and (ii) $h(a) = a$ for each $a \in \conset$. If $h$ is a homomorphism from $\atseti$ to $\atsetii$, we say that $h$ \emph{homomorphically maps} $\atseti$ to $\atsetii$. Atom sets $\atseti,\atsetii$ are \emph{homomorphically equivalent}, written $\atseti \equiv \atsetii$, \iffi $\atseti$ homomorphically maps to $\atsetii$, and vice versa. An \emph{isomorphism} is a bijective homomorphism $h$ where $h^{-1}$ is a homomorphism.

\medskip \noindent {\bf Existential rules.} Whereas databases encode assertional knowledge, ontologies consist in the current setting of \emph{existential rules}, which we will frequently refer to as \emph{rules} more simply. An existential rule is a first-order sentence of the form:
$$
\rho = \forall \vec{x} \vec{y} (\phi(\vec{x},\vec{y}) \imp \exists \vec{z} \psi(\vec{y},\vec{z}))
$$
where $\vec{x}$, $\vec{y}$, and $\vec{z}$ are pairwise disjoint collections of variables, $\phi(\vec{x},\vec{y})$ is a conjunction of atoms over constants and the variables $\vec{x},\vec{y}$, and $\psi(\vec{y},\vec{z})$ is a conjunction of atoms over constants and the variables $\vec{y},\vec{z}$. We define $\body(\rho) = \phi(\vec{x},\vec{y})$ to be the \emph{body} of $\rho$, and $\head(\rho) = \psi(\vec{y},\vec{z})$ to be the \emph{head} of $\rho$. For convenience, we will often interpret a conjunction $p_{1}(\vec{t_{1}}) \land \cdots \land p_{n}(\vec{t_{n}})$ of atoms (such as the body or head of a rule) as a set $\{p_{1}(\vec{t_{1}}), \cdots, p_{n}(\vec{t_{n}})\}$ of atoms; if $h$ is a homomorphism, then $h(p_{1}(\vec{t_{1}}) \land \cdots \land p_{n}(\vec{t_{n}})) := \{p_{1}(h(\vec{t_{1}})), \cdots, p_{n}(h(\vec{t_{n}}))\}$ with $h$ applied componentwise to each tuple $\vec{t_{i}}$ of terms. The \emph{frontier} of $\rho$, written $\fr(\rho)$, is the set of variables $\vec{y}$ that the body and head of $\rho$ have in common, that is, $\fr(\rho) = \varset(\body(\rho)) \cap \varset(\head(\rho))$. We define a \emph{frontier atom} in a rule $\rho$ to be an atom containing at least one frontier variable. We use $\rho$ and annotated versions thereof to denote rules, as well as $\ruleset$ and annotated versions thereof to denote finite sets of rules (simply called \emph{rule sets}).

\medskip \noindent {\bf Models.} We note that sets of atoms (which include instances and databases) may be seen as first-order interpretations, and so, we may use $\models$ to represent the satisfaction of formulae on such structures. A set of atoms $\atseti$ satisfies a set of atoms $\atsetii$ (or, equivalently, $\atseti$ is a model of $\atsetii$), written $\atseti \models \atsetii$, \iffi there exists a homomorphic mapping from $\atsetii$ to $\atseti$. A set of atoms $\atseti$ satisfies a rule $\rho$ (or, equivalently, $\atseti$ is a model of $\rho$), written $\atseti \models \rho$, \iffi for any homomorphism $h$, if $h$ is a homomorphism from $\body(\rho)$ to $\atseti$, then it can be extended to a homomorphism $\overline{h}$ that also maps $\head(\rho)$ to $\atseti$. A set of atoms $\atseti$ satisfies a rule set $\rs$ (or, equivalently, $\atseti$ is a model of $\rs$), written $\atseti \models \rs$, \iffi $\atseti \models \rho$ for every rule $\rho \in \rs$. If a  model $\atseti$ of a set of atoms, a rule, or a rule set homomorphically maps into \emph{every} model of that very set of atoms, rule, or rule set, then we refer to $\atseti$ as a \emph{universal model} of that set of atoms, rule, or rule set~\cite{DeuNasRem08}.

\medskip \noindent {\bf Knowledge bases and querying.} A \emph{knowledge base (KB)} $\kb$ is defined to be a pair $(\db,\rs)$, where $\db$ is a database and $\rs$ is a rule set. An instance $\ins$ is a \emph{model} of $\kb = (\db,\rs)$ \iffi $\db \subseteq \ins$ and $\ins \models \rs$. We consider querying knowledge bases with \emph{conjunctive queries (CQs)}, that is, with formulae of the form $q(\vec{y}) = \exists \vec{x} \phi(\vec{x},\vec{y})$, where $\phi(\vec{x},\vec{y})$ is a non-empty conjunction of atoms over the variables $\vec{x},\vec{y}$ and constants. We refer to the variables $\vec{y}$ in $q(\vec{y})$ as \emph{free} and define a \emph{Boolean conjunctive query (BCQ)} to be a CQ without free variables, i.e. a BCQ is a CQ of the form $q = \exists \vec{x} \phi(\vec{x})$. A knowledge base $\kb = (\db, \rs)$ \emph{entails} a CQ $q(\vec{y})= \exists \vec{x} \phi(\vec{x},\vec{y})$, written $\kb \models q(\vec{y})$, \iffi $\phi(\vec{x},\vec{y})$ homomorphically maps into every model $\ins$ of $\kb$; we note that this is equivalent to $\phi(\vec{x},\vec{y})$ homomorphically mapping into a universal model of $\db$ and $\calr$.

As we are interested in extracting implicit knowledge from the explicit knowledge presented in a knowledge base $\kb = (\db,\rs)$, we are interested in deciding the \emph{BCQ entailment problem}:\footnote{We recall that entailment of non-Boolean CQs or even query answering can all be reduced to BCQ entailment in logarithmic space.}
\begin{flushleft}
(BCQ Entailment) Given a KB $\kb$ and a BCQ $q$, is it the case that $\kb \models q$?
\end{flushleft}
While it is well-known that the BCQ entailment problem is undecidable in general~\cite{ChandraLM81}, restricting oneself to certain classes of rule sets (e.g. datalog or finite unification sets~\cite{BagLecMugSal11}) may recover decidability. We refer to classes of rule sets for which BCQ entailment is decidable as \emph{query-decidable classes}.

\medskip \noindent {\bf Derivations.} One means by which we can extract implicit knowledge from a given KB is through the use of \emph{derivations}, that is, sequences of instances obtained by sequentially applying rules to given data. We say that a rule $\rho = \forall \vec{x} \vec{y} (\phi(\vec{x},\vec{y}) \imp \exists \vec{z} \psi(\vec{y},\vec{z}))$ is \emph{triggered} in an instance $\ins$ via a homomorphism $h$, written succinctly as $\applic(\rho,\ins,h)$, \iffi $h$ homomorphically maps $\phi(\vec{x},\vec{y})$ to $\ins$. 
In this case, we define $$\ch(\ins,\rho,h) = \ins \cup \overline{h}(\psi(\vec{y},\vec{z}))$$, where $\overline{h}$ is an extension of $h$ mapping every variable $z$ in $\vec{z}$ to a fresh null.
Consequently, we define an \emph{$\rs$-derivation} to be a sequence $\ins_{0}, (\rho_{1}, h_{1}, \ins_{1}), \ldots, (\rho_{n}, h_{n}, \ins_{n})$ such that (i) $\rho_{i} \in \rs$ for each $i \in \{1, \ldots, n\}$, (ii) $\applic(\rho_{i},\ins_{i-1},h_{i})$ holds for $i \in \{1, \ldots, n\}$, and (iii) $\ins_{i} = \ch(\ins_{i-1},\rho,h_{i})$ for $i \in \{1, \ldots, n\}$. We will use $\der$ and annotations thereof to denote $\rs$-derivations, and we define the length of an $\rs$-derivation $\der = \ins_{0}, (\rho_{1}, h_{1}, \ins_{1}), \ldots, (\rho_{n}, h_{n}, \ins_{n})$, denoted $\dlen{\der}$, to be $n$. Furthermore, for instances $\ins$ and $\ins'$, we write $\ins \drel{\der}{\rs} \ins'$ to mean that there exists an $\rs$-derivation $\der$ of $\ins'$ from $\ins$. Also, if $\ins''$ can be derived from $\ins'$ by means of a rule $\rho \in \rs$ and homomorphism $h$, we abuse notation and write $\ins \drel{\der}{\rs} \ins', (\rho,h,\ins'')$ to indicate that $\ins \drel{\der}{\rs} \ins'$ and $\ins' \drel{\der'}{\rs} \ins''$ with $\der' = \ins', (\rho,h,\ins'')$. Derivations play a fundamental role in this paper as we aim to identify (and analyze the relationships between) query-decidable classes of rule sets based on \emph{how} such rule sets derive information, i.e. we are interested in classes of rule sets that may be \emph{proof-theoretically characterized}. 

\medskip \noindent {\bf Chase.} A tool that will prove useful in the current work is the \emph{chase}, which in our setting is a procedure that (in essence) simultaneously constructs all $\kb$-derivations in a breadth-first manner. Although many variants of the chase exist~\cite{BagLecMugSal11,FagKolMilPop05,MaiMenSag79}, we utilize the chase procedure (also called the \emph{k-Saturation}) from Baget et al.~\cite{BagLecMugSal11}. We use the chase in the current work as a purely technical tool for obtaining universal models of knowledge bases, proving useful in separating certain query-decidable classes of rule sets.

We define the \emph{one-step application} of all triggered rules from some $\rs$ in $\ins$ by
$$
%\ch_{1}(\ins,\rs) = \!\!\!\!\!\!\!\! \bigcup_{\rho \in \rs, \applic(\rho,\ins,h)} \!\!\!\!\!\!\!\! \ch(\ins,\rho,h),
\ch_{1}(\ins,\rs) = {\bigcup}_{\rho \in \rs, \applic(\rho,\ins,h)} \ch(\ins,\rho,h),
$$
assuming all nulls introduced in the ``parallel'' applications of $\ch$ to $\ins$ are distinct.
 We let $\ch_{0}(\ins,\rs) = \ins$, as well as let $\ch_{i+1}(\ins,\rs) = \ch_{1}(\ch_{i}(\ins,\rs),\rs)$, and define the \emph{chase} to be
$$
\ch_{\infty}(\ins,\rs) = {\bigcup}_{i \in \mathbb{N}} \ch_{i}(\ins,\rs).
$$
 For any KB $\kb = (\db,\rs)$, the chase $\ch_{\infty}(\db,\rs)$ is a universal model of $\kb$, that is, $\db \subseteq \ch_{\infty}(\db,\rs)$, $\ch_{\infty}(\db,\rs) \models \rs$, and $\ch_{\infty}(\db,\rs)$ homomorphically maps into every model of $\db$ and $\rs$.

\medskip \noindent {\bf Rule dependence.} Let $\rho$ and $\rho'$ be rules. We say that $\rho'$ \emph{depends} on $\rho$ \iffi there exists an instance $\ins$ such that (i) $\rho'$ is not triggered in $\ins$ via any homomorphism, (ii) $\rho$ is triggered in $\ins$ via a homomorphism $h$, and (iii) $\rho'$ is triggered in $\ch(\ins,\rho,h)$ via a homomorphism $h'$. We define the \emph{graph of rule dependencies}~\cite{Bag04} of a set $\rs$ of rules to be $\grd{\rs} = (\grdv,\grde)$ such that (i) $\grdv = \rs$ and (ii) $(\rho,\rho') \in \grde$ \iffi $\rho'$ depends on $\rho$. %The class $\agrd$ consists of all rule sets $\rs$ such that $\grd{\rs}$ is \emph{cycle-free}, that is, $\grd{\rs}$ is free of undirected and directed cycles~\cite{Bag04}.

\medskip \noindent {\bf Treewidth.} A \emph{tree decomposition} of an instance $\ins$ is defined to be a tree $\td = (\tdv,\tde)$ such that $V \subseteq 2^{\termset(\ins)}$ (where each element of $V$ is called a \emph{bag}) and $E \subseteq V \times V$, satisfying the following three conditions: (i) $\bigcup_{X \in V} X = \termset(\ins)$, (ii)~for each $p(t_{1}, \ldots,t_{n}) \in \ins$, there is an $X \in V$ such that $\{ t_{1}, \ldots,t_{n} \}\subseteq X$, and (iii) for each $t \in \termset(\ins)$, the subgraph of $\td$ induced by the bags $X \in V$ with $t \in X$ is connected (this condition is referred to as the \emph{connectedness condition}). We define the \emph{width} of a tree decomposition $\td = (\tdv,\tde)$ of an instance $\ins$ as follows:
$$
\w{\td} := \maxofset{|X| : X \in V} - 1
$$
i.e.~the width is equal to the cardinality of the largest node in $\td$ minus 1. We let $\w{\td} = \infty$ \iffi for all $n \in \mathbb{N}$, $n \leq \maxofset{|X| : X \in V}$. We define the \emph{treewidth} of an instance $\ins$, written $\tw{\ins}$, as follows:
$$
\tw{\ins} := \minofset{\w{\td} : \text{ $\td$ is a tree decomposition of $\ins$}}
$$
i.e.~the treewidth of an instance equals the minimal width among all its tree de\-com\-positions. If no tree decomposition of $\ins$ has finite width, we set $\tw{\ins}=\infty$. %We define an \emph{optimal tree decomposition} $\td$ of an instance $\ins$ to be a tree decomposition such that $\w{\td} = \tw{\ins}$.

%=============
%FES & BTS 
%=============
%\section{Finite-Expansion and Bounded-Treewidth}\label{sec:fes-bts}

%\input{sec-fes.tex}

%=============
%(W)GBTS
%=============
%\pagebreak
\section{Greediness}\label{sec:greedy}

 We now discuss a property of derivations referred to as \emph{greediness}. In essence, a derivation is greedy when the image of the frontier of any applied rule consists solely of constants from a given KB and/or nulls introduced by a \emph{single} previous rule application. Such derivations were defined by Thomazo et al.~\cite{ThoBagMugRud12} and were used to identify the (query-decidable) class of \emph{greedy bounded-treewidth sets} ($\gbts$), that is, the class of rule sets that produce only \emph{greedy derivations} (defined below) when applied to a database.
 
 In this section, we also identify a new query-decidable class of rule sets, referred to as \emph{weakly greedy bounded-treewidth sets} ($\wgbts$). The $\wgbts$ class serves as a more liberal version of $\gbts$, and contains rule sets that admit at least one greedy derivation of any derivable instance. It is straightforward to confirm that $\wgbts$ generalizes $\gbts$ since if a rule set is $\gbts$, then every derivation of a derivable instance is greedy, implying that every derivable instance has \emph{some} greedy derivation. Yet, what is non-trivial to show is that $\wgbts$ \emph{properly subsumes} $\gbts$. We are going to prove this fact by means of a proof-theoretic argument and counter-example along the following lines: first, we show under what conditions we can permute rule applications in a given derivation (see \lem~\ref{lem:permutation-lemma} below), and second, we provide a rule set which exhibits non-greedy derivations (witnessing that the rule set is not $\gbts$), but for which every derivation can be transformed into a greedy derivation by means of rule permutations and replacements (witnessing $\wgbts$ membership).
 
 Let us now formally define greedy derivations and provide examples to demonstrate the concept of (non-)greediness. Based on this, we then proceed to define the $\gbts$ and $\wgbts$ classes.
 
%\pagebreak

\begin{definition}[Greedy Derivation~\cite{ThoBagMugRud12}]\label{def:greedy-derivation} We define an $\rs$-derivation
$$
\der = \ins_{0},(\rho_{1},h_{1},\ins_{1}), \ldots, (\rho_{n},h_{n},\ins_{n})
$$
to be \emph{greedy} \iffi for each $i$ such that $0 < i \leq n$, there exists a $j < i$ such that $h_{i}(\fr(\rho_{i})) \subseteq \nullset(\overline{h}_{j}(\head(\rho_{j}))) \cup \conset(\ins_{0},\rs) \cup \nullset(\ins_{0})$.
\end{definition}

 To give examples of non-greedy and greedy derivations, let us define the database $\dbii := \{p(a),r(b)\}$ and the rule set $\rsii := \{\rho_{1}, \rho_{2}, \rho_{3}, \rho_{4}\}$, with
$$
\begin{array}{rl@{\qquad}rl}
\rho_{1} = & \, p(x) \rightarrow \exists yz. q(x,y,z)     &    \rho_{3} = & \, p(x) \land r(y) \rightarrow \exists zwuv. q(x,z,w) \land s(y,u,v) \\[0.5ex]
\rho_{2} = & \, r(x) \rightarrow \exists yz. s(x,y,z)     &    \rho_{4} = & \, q(x,y,z) \land s(w,u,v) \rightarrow \exists o. t(x,y,w,u,o) \\
\end{array}
$$
%\begin{description}
%\item[$\rho_{1} =$] $p(x) \rightarrow \exists yz. q(x,y,z)$
%\item[$\rho_{2} =$] $r(x) \rightarrow \exists yz. s(x,y,z)$
%\item[$\rho_{3} =$] $p(x) \land r(y) \rightarrow \exists zwuv. q(x,z,w) \land s(y,u,v)$
%\item[$\rho_{4} =$] $ q(x,y,z) \land s(w,u,v) \rightarrow \exists o. t(x,y,w,u,o)$
%\end{description}
 An example of a non-greedy derivation is the following:
$$
\der_{1} = \dbii, (\rho_{1},h_{1},\ins_{1}), (\rho_{1},h_{2},\ins_{2}), (\rho_{2},h_{3},\ins_{3}), (\rho_{4},h_{4},\ins_{4}),\text{\ \  with }
$$
$$
\ins_4 = \{ \underbrace{p(a),r(b)}_{\smash{\dbii}}  ,  \underbrace{q(a,y_{0},z_{0}) }_{\ins_{1}\setminus \dbii} ,  \underbrace{q(a,y_{1},z_{1})}_{\ins_{2}\setminus \ins_{1}} ,  \underbrace{s(b,y_{2},z_{2})}_{\ins_{3}\setminus \ins_{2}}  ,  \underbrace{t(a,y_{0},b,y_{2},o)}_{\ins_{4}\setminus \ins_{3}}   \}
\text{\ \ \ and }$$
$h_1= h_2= \{x {\mapsto} a\}$, $h_3= \{x {\mapsto} b\}$, $h_4= \{x {\mapsto} a, y {\mapsto} y_0, z{\mapsto} z_0, w{\mapsto} b, u{\mapsto} y_2, v {\mapsto} z_2\}$.

\smallskip

\noindent Note that this derivation is not greedy because
\vspace{-2ex}
$$
h_{4}(\fr(\rho_{4})) = h_{4}(\{x,y,w,u\}) = \{a,\overbrace{y_{0}}^{\mathclap{\hspace{16ex}\in\, \nullset(\overline{h}_{1}(\head(\rho_{1})))}},b,\underbrace{y_{2}}_{\mathclap{\hspace{16ex} \in\, \nullset(\overline{h}_{3}(\head(\rho_{2})))}}\}\\ 
%& \!\!{\subseteq}\!\! & \nullset(\overline{h}_{1}(\head(\rho_{1}))) \cup \nullset(\overline{h}_{3}(\head(\rho_{2})))\\ 
%& & \cup\ \conset(\dbii,\rsii) \cup \nullset(\dbii)
%& = & \{a,y_{0},b,y_{2}\}
$$

%\pagebreak
\noindent That is to say, the image of the frontier from the last rule application (i.e. the application of $\rho_{4}$) contains nulls introduced by \emph{two} previous rule applications (as opposed to containing nulls from just a single previous rule application), namely, the first application of $\rho_{1}$ and the application of $\rho_{2}$. In contrast, the following is an example of a greedy derivation
 $$
\der_{2} = \dbii, (\rho_{3},h_{1}',\ins_{1}'), (\rho_{1},h_{2}',\ins_{2}'), (\rho_{4},h_{3}',\ins_{3}'),\text{\ \  with }
$$
$$
\ins_3' = \{ \underbrace{p(a),r(b)}_{\smash{\dbii}}  ,  \underbrace{q(a,y_{0},z_{0}),s(b,y_{2},z_{2})}_{\ins_{1}'\setminus \dbii} ,  \underbrace{q(a,y_{1},z_{1})}_{\ins_{2}'\setminus \ins_{1}'} ,  \underbrace{t(a,y_{0},b,y_{2},o)}_{\ins_{3}'\setminus \ins_{2}'}\}
\text{\ \ \ and }$$
$h'_1= \{x {\mapsto} a,y {\mapsto} b\}$, $h'_2= \{x {\mapsto} a\}$, $h'_3= \{x {\mapsto} a, y {\mapsto} y_0, z{\mapsto} z_0, w{\mapsto} b, u{\mapsto} y_2, v {\mapsto} z_2\}$.

\medskip
\noindent Greediness of $\der_{2}$ follows from the frontier of any applied rule being mapped to nothing but constants and/or nulls introduced by a sole previous rule application.

\begin{definition}[(Weakly) Greedy Bounded-Treewidth Set] A rule set $\rs$ is a \emph{greedy bounded-treewidth set ($\gbts$)} \iffi if $\db \drel{\der}{\rs} \ins$, then $\der$ is greedy. $\rs$ is a \emph{weakly greedy bounded-treewidth set ($\wgbts$)} \iffi if $\db \drel{\der}{\rs} \ins$, then there exists some greedy $\rs$-derivation $\der'$ such that $\db \drel{\der'}{\rs} \ins$.
\end{definition}

\begin{remark} Observe that $\gbts$ and $\wgbts$ are characterized on the basis of derivations starting from given \emph{databases} only, that is, derivations of the form $\ins_{0},(\rho_{1},h_{1},\ins_{1}), \ldots, (\rho_{n},h_{n},\ins_{n})$ where $\ins_{0} = \db$ is a database. In such a case, a derivation of the above form is greedy \iffi for each $i$ with $0 < i \leq n$, there exists a $j < i$ such that $h_{i}(\fr(\rho_{i})) \subseteq \nullset(\overline{h}_{j}(\head(\rho_{j}))) \cup \conset(\db,\rs)$ %. The reason being, when $\ins_{0} = \db$ is a database, then $\conset(\ins_{0},\rs) \cup \nullset(\ins_{0}) = \conset(\db,\rs) \cup \nullset(\db) = \conset(\db,\rs)$ since $\nullset(\db) = \emptyset$ 
as databases only contain constants (and not nulls) by definition.
\end{remark}

As noted above, it is straightforward to show that $\wgbts$ subsumes $\gbts$. 
%
%\begin{proposition}\label{prop:gbts-is-wgbts}
%Every $\gbts$ ruleset is $\wgbts$.
%\end{proposition}
%
 Still, establishing that $\wgbts$ strictly subsumes $\gbts$, i.e. there are rule sets within $\wgbts$ that are outside $\gbts$, requires more effort. As it so happens, the rule set $\rsii$ (defined above) serves as such a rule set, admitting non-greedy $\rsii$-derivations, but where it can be shown that every instance derivable using the rule set admits a greedy $\rsii$-derivation. As a case in point, observe that the $\rsii$-derivations $\der_{1}$ and $\der_{2}$ both derive the same instance $\ins_{4} = \ins_{3}'$, however, $\der_{1}$ is a non-greedy $\rsii$-derivation of the instance and $\der_{2}$ is a greedy $\rsii$-derivation of the instance. Clearly, the existence of the non-greedy $\rsii$-derivation $\der_{1}$ witnesses that $\rsii$ is not $\gbts$. To establish that $\rsii$ still falls within the $\wgbts$ class, we show that every non-greedy $\rsii$-derivation can be transformed into a greedy $\rsii$-derivation using two operations: (i) rule permutations and (ii) rule replacements.
 
 Regarding rule permutations, we consider under what conditions we may swap consecutive applications of rules in a derivation to yield a new derivation of the same instance. For example, in the $\rsii$-derivation $\der_{1}$ above, we may swap the consecutive applications of $\rho_{1}$ and $\rho_{2}$ to obtain the following derivation: 
$$\der_{1}' = \dbii, (\rho_{1},h_{1},\ins_{1}), (\rho_{2},h_{3},\ins_{1} \cup (\ins_{3} \setminus \ins_{2})),(\rho_{1},h_{2},\ins_{3}), (\rho_{4},h_{4},\ins_{4}).$$
 $\ins_{1} \cup (\ins_{3} \setminus \ins_{2}) = \{p(a),r(b),q(a,y_{0},z_{0}),s(b,y_{2},z_{2})\}$ is derived by applying $\rho_{2}$ and the subsequent application of $\rho_{1}$ reclaims the instance $\ins_{3}$. Therefore, the same instance $\ins_{4}$ remains the conclusion. Although one can confirm that $\der_{1}'$ is indeed an \mbox{$\rsii$-derivation}, thus serving as a successful example of a rule permutation (meaning, the rule permutation yields another $\rsii$-derivation), the following question still remains: for a rule set $\rs$, under what conditions will permuting rules within a given $\rs$-derivation always yield another $\rs$-derivation?
 
 We pose an answer to this question, formulated as the \emph{permutation lemma} below, which states that an application of a rule $\rho$ may be permuted before an application of a rule $\rho'$ so long as the former rule does not depend on the latter (in the sense formally defined in \sect~\ref{sec:prelims} based on the work of Baget~\cite{Bag04}). Furthermore, it should be noted that such rule permutations preserve the greediness of derivations. In the context of the above example, $\rho_{2}$ may be permuted before $\rho_{1}$ in $\der_{1}$ because the former does not depend on the latter.

\begin{lemma}[Permutation Lemma]\label{lem:permutation-lemma}
Let $\rs$ be a rule set with $\ins_{0}$ an instance. Suppose we have a (greedy) $\rs$-derivation of the following form:
$$
\ins_{0}, \ldots, (\rho_{i},h_{i},\ins_{i}), (\rho_{i+1},h_{i+1},\ins_{i+1}), \ldots, (\rho_{n},h_{n},\ins_{n})
$$
If $\rho_{i+1}$ does not depend on $\rho_{i}$, then the following is a (greedy) $\rs$-derivation too:
$$
\ins_{0}, \ldots, (\rho_{i+1},h_{i+1},\ins_{i-1} \cup (\ins_{i+1} \setminus \ins_{i})), (\rho_{i},h_{i},\ins_{i+1}), \ldots, (\rho_{n},h_{n},\ins_{n}).
$$
\end{lemma}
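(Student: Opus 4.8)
The plan is to check directly that the displayed permuted sequence satisfies the three defining conditions of an $\rs$-derivation, reusing the very same homomorphisms $h_{i}$, $h_{i+1}$ and the same fresh nulls, and then to argue separately that greediness is inherited (covering both the plain and the parenthesized ``greedy'' version of the claim). Since the prefix up to $(\rho_{i},h_{i},\ins_{i})$ and the suffix from $(\rho_{i+2},h_{i+2},\ins_{i+2})$ onward are copied verbatim, the verification reduces to confirming that the two swapped applications form a valid derivation reaching $\ins_{i+1}$. Writing $\ins_{i}' := \ins_{i-1}\cup(\ins_{i+1}\setminus\ins_{i})$ for the new intermediate instance, I must show: (a) $\rho_{i+1}$ is triggered in $\ins_{i-1}$ via $h_{i+1}$ and applying it yields $\ins_{i}'$; and (b) $\rho_{i}$ is triggered in $\ins_{i}'$ via $h_{i}$ and applying it yields $\ins_{i+1}$.

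The heart of the argument is the trigger survival in (a), namely $h_{i+1}(\body(\rho_{i+1}))\subseteq\ins_{i-1}$, and this is exactly where non-dependence enters. I would prove it by contraposition: if $h_{i+1}(\body(\rho_{i+1}))\not\subseteq\ins_{i-1}$, then $h_{i+1}$ maps some body atom into the set $\ins_{i}\setminus\ins_{i-1} = \overline{h}_{i}(\head(\rho_{i}))\setminus\ins_{i-1}$ of atoms freshly produced by $\rho_{i}$. Taking as witness the instance $\ins^{*} := h_{i}(\body(\rho_{i}))\cup(h_{i+1}(\body(\rho_{i+1}))\cap\ins_{i-1})$, clause (ii) of dependence holds since $h_{i}(\body(\rho_{i}))\subseteq\ins^{*}$, and clause (iii) holds since $\ch(\ins^{*},\rho_{i},h_{i})$ contains $\overline{h}_{i}(\head(\rho_{i}))$ and hence, under the same null choices, all of $h_{i+1}(\body(\rho_{i+1}))$, so $\rho_{i+1}$ is triggered there via $h_{i+1}$. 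This exhibits $\rho_{i+1}$ as depending on $\rho_{i}$, contradicting the hypothesis; thus the trigger already lies inside $\ins_{i-1}$.

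Everything else is bookkeeping. Once the trigger is in $\ins_{i-1}$, applying $\rho_{i+1}$ there via $h_{i+1}$ (reusing its existential nulls, which is legitimate as they do not occur in $\ins_{i-1}$) produces the intermediate instance $\ins_{i}'$; then $\rho_{i}$ is triggered in $\ins_{i}'$ via $h_{i}$ because $h_{i}(\body(\rho_{i}))\subseteq\ins_{i-1}\subseteq\ins_{i}'$, and applying it returns $\ins_{i}'\cup\overline{h}_{i}(\head(\rho_{i})) = \ins_{i+1}$ by commutativity of union, so the copied suffix stays valid verbatim. For the greediness clause I would re-exhibit a witness index for each step. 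Every step outside positions $i,i+1$ retains its original witness, since the collection of head-images occurring before it is unchanged up to re-indexing; the step now carrying $\rho_{i}$ keeps the witness $j'<i$ it already had. The only delicate case is the step now carrying $\rho_{i+1}$: if its original greedy witness was $\rho_{i}$ itself, I note that $h_{i+1}(\fr(\rho_{i+1}))\subseteq\termset(\ins_{i-1})$ omits the fresh existential nulls of $\rho_{i}$, so its frontier nulls all sit among $h_{i}(\fr(\rho_{i}))$; greediness of the original derivation at step $i$ confines those to a single earlier application $\rho_{j'}$ with $j'<i$, which then serves as the required witness.

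I expect the main obstacle to be clause (i) in the definition of dependence used in the second paragraph: that $\rho_{i+1}$ be triggered in the witness instance $\ins^{*}$ via \emph{no} homomorphism whatsoever, not merely via $h_{i+1}$. Precluding such spurious triggers is what forces $\ins^{*}$ to be chosen as tightly as possible (and, should an accidental homomorphism still threaten, to freeze its nulls to fresh constants so that no atoms can fold together), and tightening this into an airtight argument is the genuinely subtle part; the trigger identity in (b), the null reuse, and the greediness re-indexing are all comparatively mechanical.
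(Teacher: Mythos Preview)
Your overall plan---verify that the two swapped steps form a valid derivation reaching $\ins_{i+1}$, then re-exhibit greediness witnesses---is exactly the paper's. The paper's proof is far terser: it simply asserts that non-dependence of $\rho_{i+1}$ on $\rho_i$ implies $h_{i+1}(\body(\rho_{i+1})) \subseteq \ins_{i-1}$ with no further argument, carries out the same bookkeeping you outline, and for greediness writes only ``it must be the case that $j \neq i$''. Your treatment of greediness (if the original witness is $j=i$, use trigger survival to see that no fresh null of $\rho_i$ appears in $h_{i+1}(\fr(\rho_{i+1}))$ and fall back to the witness $j'<i$ for step~$i$) is a correct unpacking of that one-liner.

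Your worry about clause~(i) of the dependence definition is not merely a matter of tightening the witness instance: it exposes a genuine gap that the paper's bare assertion hides. Under the paper's definition, the implication from non-dependence to $h_{i+1}(\body(\rho_{i+1})) \subseteq \ins_{i-1}$ fails in general. Take $\rho_i = p(x)\land q(y)\to q(x)$ and $\rho_{i+1} = q(x)\to r(x)$: any instance triggering $\rho_i$ already contains a $q$-atom and hence triggers $\rho_{i+1}$, so clause~(i) can never hold and $\rho_{i+1}$ does \emph{not} depend on $\rho_i$. Yet from $\ins_{i-1}=\{p(a),q(b)\}$ one may apply $\rho_i$ (producing $q(a)$) and then $\rho_{i+1}$ via $h_{i+1}\colon x\mapsto a$, whose body image $\{q(a)\}\not\subseteq\ins_{i-1}$; the permuted sequence with this same $h_{i+1}$ is not a derivation. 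So neither your contrapositive construction nor any freezing of nulls can close the step, because it is false at this level of generality. The paper's only use of the lemma is for the concrete rule set $\rsii$, where the rules being permuted have bodies over predicates ($p$, $r$) that no rule in $\rsii$ produces, so their body images land in the database outright; there the desired conclusion holds for a reason that bypasses the dependence notion entirely.
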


 As a consequence of the above lemma, rules may always be permuted in a given $\rs$-derivation so that its structure mirrors the graph of rule dependencies $\grd{\rs}$ (defined in \sect~\ref{sec:prelims}). That is, given a rule set $\rs$ and an $\rs$-derivation $\der$, we may permute all applications of rules  serving as sources in $\grd{\rs}$ (which do not depend on any rules in $\rs$) to the beginning of $\der$, followed by all rule applications that depend only on sources, and so forth, with any applications of rules serving as sinks in $\grd{\rs}$ concluding the derivation. For example, in the graph of rule dependencies of $\rsii$, the rules $\rho_{1}$, $\rho_{2}$, and $\rho_{3}$ serve as source nodes (they do not depend on any rules in $\rsii$) and the rule $\rho_{4}$ is a sink node de\-pen\-ding on each of the aforementioned three rules, i.e. $\grd{\rsii} = (\grdv,\grde)$ with $\grdv = \{\rho_{1}, \rho_{2}, \rho_{3}, \rho_{4}\}$ and $\grde = \{(\rho_{i},\rho_{4}) \ | \ 1 \leq i \leq 3\}$. Hence, in any given $\rsii$-derivation $\der$, any application of $\rho_{1}$, $\rho_{2}$, or $\rho_{3}$ can be permuted backward (toward the beginning of $\der$) and any application of $\rho_{4}$ can be permuted forward (toward the end of $\der$).

 Beyond the use of rule permutations, we also transform $\rsii$-derivations by making use of rule replacements. In particular, observe that $\head(\rho_{3})$ and $\body(\rho_{3})$ correspond to conjunctions of $\head(\rho_{1})$ and $\head(\rho_{2})$, and $\body(\rho_{1})$ and $\body(\rho_{2})$, respectively. Thus, we can replace the first application of $\rho_{1}$ and the succeeding application of $\rho_{2}$ in $\der_{1}'$ above by a single application of $\rho_{3}$, thus yielding the $\rsii$-derivation
$
\der_{1}'' = \dbii, (\rho_{3},h,\ins_{1} \cup (\ins_{3} \setminus \ins_{2})), (\rho_{1},h_{2},\ins_{3}), (\rho_{4},h_{4},\ins_{4}),
$
 where $h(x) = a$ and $h(y) = b$. Interestingly, inspecting the above \mbox{$\rsii$-derivation}, one will find that it is identical to the greedy \mbox{$\rsii$-derivation} $\der_{2}$ defined earlier in the section, and so, we have shown how to take a non-greedy \mbox{$\rsii$-derivation} (viz. $\der_{1}$) and transform it into a greedy \mbox{$\rsii$-derivation} (viz. $\der_{2}$) by means of rule permutations and replacements. In the same way, one can prove in general that any non-greedy \mbox{$\rsii$-derivation} can be transformed into a greedy \mbox{$\rsii$-derivation}, thus giving rise to the following theorem, and demonstrating that $\rsii$ is indeed $\wgbts$. %For the interested reader, a rigorous proof can be found in the appendix.

\begin{theorem}\label{lem:wgbts-contains-gbts} $\rsii$ is $\wgbts$, but not $\gbts$. Thus, $\wgbts$ properly subsumes $\gbts$.
\end{theorem}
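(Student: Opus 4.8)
The plan is to prove the two halves of the claim separately, since the final sentence follows immediately once we know $\rsii \in \wgbts \setminus \gbts$ together with the already-noted fact that $\wgbts$ subsumes $\gbts$. The ``not $\gbts$'' direction is essentially free: the derivation $\der_{1}$ exhibited above is a bona fide $\rsii$-derivation of $\ins_{4}$ starting from the database $\dbii$, and the displayed computation of $h_{4}(\fr(\rho_{4}))$ shows its frontier image meets nulls contributed by \emph{two} distinct prior rule applications (the first application of $\rho_{1}$ and the application of $\rho_{2}$), with no single prior head supplying all of $\{y_{0},y_{2}\}$. Hence $\der_{1}$ is non-greedy, witnessing $\rsii \notin \gbts$ directly from \dfn~\ref{def:greedy-derivation}.

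The substantive work is showing $\rsii \in \wgbts$, i.e.\ that \emph{every} instance $\ins$ with $\dbii \drel{\der}{\rsii} \ins$ admits \emph{some} greedy $\rsii$-derivation. First I would fix the graph of rule dependencies computed in the discussion: $\grd{\rsii}$ has sources $\rho_{1},\rho_{2},\rho_{3}$ and the single sink $\rho_{4}$, with edges only from each source into $\rho_{4}$. Using the Permutation Lemma (\lem~\ref{lem:permutation-lemma}) repeatedly, and the fact that $\rho_{4}$ depends on nothing except via these edges while $\rho_{1},\rho_{2},\rho_{3}$ depend on nothing at all, I would show any $\rsii$-derivation $\der$ can be rearranged (preserving its final instance and its greedy/non-greedy status where relevant) into a \emph{stratified} normal form: a block of $\rho_{1},\rho_{2},\rho_{3}$ applications followed by a block of $\rho_{4}$ applications. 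I would justify that permutation never changes the derived instance because the bodies and fresh nulls involved are disjoint between an independent consecutive pair, so reapplying in swapped order reclaims the same atoms (as illustrated by $\der_{1}\rightsquigarrow\der_{1}'$).

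Next comes the rule-replacement step, which is where the greediness is actually repaired. The key observation, to be stated and verified as a small lemma, is that $\body(\rho_{3}) = \body(\rho_{1}) \land \body(\rho_{2})$ and $\head(\rho_{3}) = \head(\rho_{1}) \land \head(\rho_{2})$ (modulo renaming of the existentials), so whenever the initial block contains an application of $\rho_{1}$ and an application of $\rho_{2}$ whose bodies are satisfied simultaneously over the same $p$- and $r$-atoms, that pair can be \emph{fused} into a single application of $\rho_{3}$ deriving exactly the same $q$- and $s$-atoms (as in $\der_{1}'\rightsquigarrow\der_{1}''=\der_{2}$). The point of fusing is that afterwards a subsequent $\rho_{4}$ application whose frontier image previously straddled the separate $\rho_{1}$- and $\rho_{2}$-heads now draws those nulls from the \emph{single} $\rho_{3}$-head, restoring the condition of \dfn~\ref{def:greedy-derivation}. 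I would argue that after exhaustively permuting into normal form and fusing every $(\rho_{1},\rho_{2})$ pair that jointly feeds some $\rho_{4}$ application, every $\rho_{4}$ application has its frontier image covered by a single prior head, and the source applications are trivially greedy (their frontiers map only into $\conset(\dbii,\rsii)$), so the resulting derivation is greedy and derives the same $\ins$.

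The main obstacle I anticipate is the bookkeeping in the general $\wgbts$ argument: a single $\rho_{4}$ application consumes one $q$-atom and one $s$-atom, but these need not come from a ``matched'' $\rho_{1}/\rho_{2}$ pair, and there may be many interleaved applications, so I must argue that the permutation-plus-fusion procedure can always be orchestrated so that \emph{each} $\rho_{4}$ application is serviced by a fused $\rho_{3}$-head (or, where a $q$-atom is consumed that was never paired, that case still satisfies greediness because a lone $\rho_{1}$ or $\rho_{2}$ head already suffices for the $q$- or $s$-component while the other component comes from the database constants). Handling this case analysis uniformly, and confirming that fusing one pair does not disrupt the frontier coverage of other $\rho_{4}$ applications, is the delicate point; everything else reduces to the two transformations already licensed, namely \lem~\ref{lem:permutation-lemma} and the head/body correspondence of $\rho_{3}$.
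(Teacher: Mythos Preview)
Your outline for the ``not $\gbts$'' half and your high-level strategy (permutations plus rule replacement via $\rho_3$) match the paper's. The paper, however, organizes the $\wgbts$ argument by \emph{induction on derivation length} rather than by a global normalization: given a derivation of length $n{+}1$, the inductive hypothesis yields a greedy derivation of $\ins_n$, and one only has to repair the single last step. If that step is $\rho_4$, one case-splits on where the consumed $q$- and $s$-atoms originate (database, or some prior $\rho_1$, $\rho_2$, or $\rho_3$), permutes the two relevant source applications adjacent via \lem~\ref{lem:permutation-lemma}, and performs \emph{one} local replacement or insertion of a $\rho_3$ application. This inductive framing dissolves exactly the bookkeeping you flag as the main obstacle: any other $\rho_4$ in the (already greedy) prefix that reused one of those source atoms had its frontier covered by a single prior head, and replacing that head by a $\rho_3$-head that contains it preserves coverage.

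Your global ``stratify, then fuse every $(\rho_1,\rho_2)$ pair'' plan has two concrete gaps. First, the atoms feeding a $\rho_4$ may come from $\rho_3$ applications, not just $\rho_1/\rho_2$; fusion restricted to $(\rho_1,\rho_2)$ pairs does not cover this, and the paper treats the four sub-cases $(\rho_1,\rho_2)$, $(\rho_1,\rho_3)$, $(\rho_3,\rho_2)$, $(\rho_3,\rho_3)$ separately---in the last it \emph{inserts} an additional $\rho_3$ rather than replacing anything. Second, pure replacement can go wrong when one source atom feeds several $\rho_4$'s with different partners: fixing one pairing by replacement may leave another non-greedy or drop atoms still needed (this is precisely why the $(\rho_3,\rho_3)$ case requires insertion). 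Finally, note that $\wgbts$ quantifies over \emph{all} databases $\db$, not only $\dbii$; your argument must begin from an arbitrary $\db$, as the paper's does.
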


%=============
%(W)ADGS
%=============
\section{Derivation Graphs}\label{sec:derivation-graphs}

 We now discuss \emph{derivation graphs} -- a concept introduced by Baget et al.~\cite{BagLecMugSal11} and used to establish that certain classes of rule sets (e.g. weakly frontier guarded rule sets~\cite{CaliGK13}) exhibit universal models of bounded treewidth. A derivation graph has the structure of a directed acyclic graph and encodes \emph{how} atoms are derived throughout the course of an $\rs$-derivation. By applying so-called \emph{reduction operations}, a derivation graph may (under certain conditions) be transformed into a treelike graph that serves as a tree decomposition of an $\rs$-derivable instance. %Thus, if one is able to show that the derivation graph of any $\rs$-derivable instance is reducible to a tree decomposition thereof, and has a width that is bounded by a certain constant, then it follows that $\rs$ is $\pbts$, meaning that $\rs$ is $\sbts$ (by \cref{prop:fes-pbts-sbts}). 
 
 Below, we define derivation graphs and discuss how such graphs are transformed into tree decompositions by means of reduction operations. To increase comprehensibility, we provide an example of a derivation graph (shown in \Cref{fig:der-graph-example}) and give an example of applying each reduction operation (shown in \Cref{fig:reducing-der-graph}). After, we identify two (query-decidable) classes of rule sets on the basis of derivation graphs, namely, \emph{cycle-free derivation graph sets} ($\adgs$) and \emph{weakly cycle-free derivation graph sets} ($\wcdgs$). Despite their prima facie distinctness, the $\adgs$ and $\wcdgs$ classes coincide with $\gbts$ and $\wgbts$ classes, respectively, thus showing how the latter classes can be characterized in terms of derivation graphs. Let us now formally define derivation graphs, and after, we will demonstrate the concept by means of an example.

\begin{definition}[Derivation Graph]\label{def:derivation-graph} Let $\db$ be a database, $\rs$ be a rule set, $\cons = \conset(\db,\rs)$, and $\der$ be some $\rs$-derivation $\db, (\rho_{1}, h_{1}, \ins_{1}), \ldots, (\rho_{n}, h_{n}, \ins_{n})$. The \emph{derivation graph of $\der$} is the tuple $\dg := (\nodes,\edges,\at,\lbl)$, where $\nodes := \{X_{0}, \ldots,X_{n}\}$ is a finite set of \emph{nodes}, $\edges \subseteq \nodes \times \nodes$ is a set of \emph{arcs}, and the functions $\at: \nodes \to 2^{\ins_{n}}$ and $\lbl: \edges \to 2^{\termset(\ins_{n})}$ decorate nodes and arcs, respectively, such that:
\begin{enumerate}

\item $\at(X_{0}) := \db$ and $\at(X_{i}) = \ins_{i} \setminus \ins_{i-1}$;

\item $(X_{i},X_{j}) \in \edges$ \iffi there is a $p(\vec{t}) \in \at(X_{i})$ and a frontier atom $p(\vec{t'})$ in $\rho_{j}$ such that $h_{j}(p(\vec{t'})) = p(\vec{t})$. We then set $\lbl(X_{i},X_{j})=
\Big(h_{j}\big(\varset(p(\vec{t'})) \cap \fr(\rho_{j})\big)\Big) \setminus \cons$.

\end{enumerate}
We refer to $X_{0}$ as the \emph{initial node} and define the set of \emph{non-constant terms} associated with a node to be $\nonc(X) = \termset(X) \setminus C$ where $\termset(X_{i}) := \termset(\at(X_{i})) \cup \cons$.
\end{definition}
Toward an example, assume $\dbiii = \{p(a,b)\}$ and $\rsiii = \{\rho_{1},\rho_{2},\rho_{3},\rho_{4}\}$ where 
$$
\begin{array}{rl@{\qquad}rl}
\rho_{1} = & \, p(x,y) \rightarrow \exists z . q(y,z)     &    \rho_{3} = & \, r(x,y) \land q(z,x)  \rightarrow s(x,y) \\[0.0ex]
\rho_{2} = & \, q(x,y)  \rightarrow \exists z .  (r(x,y) \land r(y,z))     &    \rho_{4} = & \, r(x,y) \land s(z,w)  \rightarrow t(y,w) \\
\end{array}
$$
Let us consider the following derivation:
$$
\der = \dbiii, (\rho_{1},h_{1},\ins_{1}), (\rho_{2},h_{2},\ins_{2}), (\rho_{3},h_{3},\ins_{3}), (\rho_{4},h_{4},\ins_{4}) \text{\ \ with}
$$
$$
\ins_4 = \{ \underbrace{p(a,b)}_{\smash{\dbiii}}  ,  \underbrace{q(b,z_{0}) }_{\ins_{1}\setminus \dbiii} ,  \underbrace{r(b,z_{0}),r(z_0,z_{1})}_{\ins_{2}\setminus \ins_{1}} ,  \underbrace{s(z_{0},z_{1})}_{\ins_{3}\setminus \ins_{2}}  ,  \underbrace{t(z_{0},z_{1})}_{\ins_{4}\setminus \ins_{3}}   \}
\text{\ \ \ and }$$ 
$h_1= \{x {\mapsto} a,y {\mapsto} b\}$, $h_2= \{x {\mapsto} b,y {\mapsto} z_0\}$, $h_3= \{x {\mapsto} z_0, y {\mapsto} z_1, z{\mapsto} b\}$, as well as $h_4= \{x {\mapsto} b, y {\mapsto} z_0, z{\mapsto} z_0, w{\mapsto} z_1\}$.
 The derivation graph $\dg = (\nodes,\edges,\at,\lbl)$ corresponding to $\der$ is shown in \Cref{fig:der-graph-example} and has fives nodes, $\nodes = \{X_{0},X_{1},X_{2},X_{3},X_{4}\}$. Each node $X_{i} \in \nodes$ is associated with a set $\at(X_{i})$ of atoms depicted in the associated circle (e.g. $\at(X_{2}) = \{r(b,z_{0}),r(z_{0},z_{1})\}$), and each arc $(X_{i},X_{j}) \in \edges$ is represented as a directed arrow with $\lbl(X_{i},X_{j})$ shown as the associated set of terms (e.g. $\lbl(X_{3},X_{4}) = \{z_{1}\}$). For each node $X_{i} \in \nodes$, the set $\termset(X_{i})$ of terms associated with the node is equal to $\termset(\at(X_{i})) \cup \{a,b\}$ (e.g. $\termset(X_{3}) = \{z_{0},z_{1},a,b\}$) since $C = \conset(\dbiii,\rsiii) = \{a,b\}$.\\[-5ex]

\begin{figure}[h]
\centerline{
%\resizebox{\columnwidth}{!}{
%\resizebox{14cm}{!}{
\scalebox{0.6}{
\large\begin{tikzpicture}[
dot/.style = {circle, minimum size=#1,
              inner sep=0pt, outer sep=0pt},
dot/.default = 1.5cm  % size of the circle diameter 
                    ] 
%%%%%%
%%%DG1
%%%%%%

\node[dot,draw=black,label=above left:$X_{0}$] (z0) [] {$p(a,b)$};

\node[dot,draw=black,label=below left:$X_{1}$] (z1) [below of=z0, yshift=-1.5cm] {$q(b,z_{0})$};
\draw[->] (z0) -- (z1) node [midway,xshift=-1em] {$\emptyset$};

\node[dot,draw=black,label=above right:$X_{2}$] (z2) [right of=z0, xshift=1.5cm] 
{$\myoverset{r(b,z_{0})}{r(z_{0},z_{1})}$};
%\draw[->] (z0) -- (z2) node [midway,yshift=1em] {$\emptyset$};
\draw[->] (z1) -- (z2) node [midway,yshift=1em,xshift=-.5em] {$\{z_{0}\}$};

\node[dot,draw=black,label=below right:$X_{3}$] (z3) [right of=z1, xshift=1.5cm]
 {$s(z_{0},z_{1})$};
\draw[->] (z1) -- (z3) node [midway,yshift=1em] {$\{z_{0}\}$};
\draw[->] (z2) -- (z3) node [midway,xshift=2em] {$\{z_{0},z_{1}\}$};

\node[dot,draw=black,label=right:$X_{4}$] (z4) [right of=z2, xshift=1.5cm, yshift=-1.15cm]
 {$t(z_{0},z_{1})$};
\draw[->] (z2) -- (z4) node [midway,yshift=1em] {$\{z_{0}\}$};
\draw[->] (z3) -- (z4) node [midway,yshift=-1em] {$\{z_{1}\}$};

\end{tikzpicture}}
}
\caption{The derivation graph $\dg$.}
\label{fig:der-graph-example}
\vspace{-2ex}
\end{figure}

 As can be witnessed via the above example, derivation graphs satisfy a set of properties akin to those characterizing tree decompositions~\cite[\prp~12]{BagLecMugSal11}. %As will be discussed shortly, derivation graphs can (in certain circumstances) be transformed into tree decompositions of derived instances via the application of certain operations. 

\begin{lemma}[Decomposition Properties]\label{lem:decomposition-properties}
Let $\db$ be a database, $\rs$ be a rule set, and $C = \conset(\db,\rs)$. If $\db \drel{\der}{\rs} \ins$, then $\dg$ satisfies the following properties:
\begin{enumerate}

\item $\bigcup_{X_{n} \in \nodes} \termset(X_{n}) = \termset(\ins)$;

\item For each $p(\vec{t}) \in \ins$, there is an $X_{n} \in \nodes$ such that $p(\vec{t}) \in \at(X_{n})$;

\item For each term $x \in \nonc(\ins)$, the subgraph of $\dg$ induced by the nodes $X_{n}$ such that $x \in \nonc(X_{n})$ is connected;

\item For each $X_{n} \in \nodes$ the size of $\termset(X_{n})$ is bounded by an integer that only depends on the size of $(\db,\rs)$, \textit{viz.} $\max\{|\termset(\db)|,|\termset(\head(\rho_{i}))|_{\rho_{i} \in \rs}\} + |\cons|$.

\end{enumerate}
\end{lemma}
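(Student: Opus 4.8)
The plan is to verify the four properties in turn, observing first that the family $\{\at(X_i)\}_{0 \le i \le n}$ partitions $\ins$. Since each chase step only enlarges the instance, $\db = \ins_0 \subseteq \ins_1 \subseteq \cdots \subseteq \ins_n = \ins$, so the telescoping identity $\ins = \at(X_0) \cup \bigcup_{i=1}^{n}(\ins_i \setminus \ins_{i-1}) = \bigcup_{i=0}^{n} \at(X_i)$ holds with the bags pairwise disjoint. Property~2 is then immediate: each $p(\vec{t}) \in \ins$ lies in exactly one $\at(X_i)$, namely the step at which it is first produced. Property~1 follows by taking terms on both sides, giving $\bigcup_i \termset(\at(X_i)) = \termset(\ins)$; since each $\termset(X_i) = \termset(\at(X_i)) \cup \cons$ additionally contributes $\cons \subseteq \termset(\ins)$, the total union is still $\termset(\ins)$. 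Property~4 is a counting argument: for $i \ge 1$ the atoms of $\at(X_i) = \ins_i \setminus \ins_{i-1}$ all come from $\overline{h}_i(\head(\rho_i))$, and applying a substitution cannot increase the number of distinct terms, so $|\termset(\at(X_i))| \le |\termset(\head(\rho_i))|$; for $i = 0$ we have $|\termset(\at(X_0))| = |\termset(\db)|$. Adding the $|\cons|$ constants yields the stated bound in every case.

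The substance of the lemma is Property~3, the connectedness (running-intersection) condition, and this is where I expect the real work to lie. I would fix a null $x \in \nonc(\ins)$ and set $S_x := \{X_i : x \in \nonc(X_i)\}$, the bags containing it. Because each existential rule application introduces only \emph{fresh} nulls, $x$ is created at a unique step $k$; hence $x \in \termset(\at(X_k))$ and, crucially, $x$ occurs in no earlier bag, as every $\ins_i$ with $i < k$ predates $x$. Thus every node of $S_x$ has index $\ge k$, and $X_k$ is the unique minimal element of $S_x$ by index.

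I would then establish the key propagation claim: every $X_j \in S_x$ with $j > k$ has an incoming arc, inside the induced subgraph, from a strictly earlier node of $S_x$. Since $x$ is not fresh at step $j$, its occurrence in $\at(X_j) \subseteq \overline{h}_j(\head(\rho_j))$ forces $x = h_j(y)$ for some frontier variable $y \in \fr(\rho_j)$. That variable $y$ also occurs in a body atom of $\rho_j$ — a frontier atom — whose $h_j$-image is an atom of $\ins_{j-1}$ containing $x$; this image sits in a unique bag $\at(X_i)$ with $i < j$, and since $x$ is a null it cannot belong to $\at(X_0) = \db$, so $k \le i < j$. By the arc clause of \Cref{def:derivation-graph} this gives $(X_i,X_j) \in \edges$ with $x \in \lbl(X_i,X_j)$, and $x \in \nonc(X_i)$ gives $X_i \in S_x$. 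Iterating the claim from any $X_j \in S_x$ produces a strictly decreasing sequence of indices within $S_x$, which must terminate at the minimal node $X_k$; hence every node of $S_x$ is joined to $X_k$ by a path inside the induced subgraph, proving it connected.

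The main obstacle is precisely pinning down why a \emph{reused} null must travel along an arc of the derivation graph, i.e.\ matching the observation ``$x = h_j(y)$ for a frontier variable'' to the arc-and-label definition; once this is secured, connectedness reduces to the monotone trace-back to the introduction node. A secondary point to handle with care is the convention $\cons \subseteq \termset(\ins)$ used for Property~1, which holds because all constants of $\db$ occur in $\ins$ and, under the standing assumption on the signature, so do those of $\rs$.
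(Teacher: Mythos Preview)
Your proposal is correct and follows essentially the same route as the paper. The paper does not prove \Cref{lem:decomposition-properties} directly (it cites Baget et al.), but its proof of the more general \Cref{lem:reduct-ops-preserve-decomp-props} declares Properties~1,~2,~4 ``straightforward'' and derives Property~3 from \Cref{lem:connect-t-gen}, whose base case ($\os = \empstr$) is exactly your trace-back argument: locate the unique $x$-generative node, show every later bag containing $x$ has an incoming arc from an earlier such bag, and descend to the generator.
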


%\begin{proof}
% Similar to~\cite[\prp~12]{BagLecMugSal11}.
%\end{proof}

 Let us now introduce our set of \emph{reduction operations}. As remarked above, in certain circumstances such operations can be used to transform derivation graphs into tree decompositions of an instance.

We make use of three reduction operations, namely, (i) \emph{arc removal}, denoted $\AR^{[i,j]}$, (ii) \emph{term removal}, denoted $\TR^{[i,j,k,t]}$, and (iii) \emph{cycle removal}, denoted $\CR^{[i,j,k,\ell]}$. The first two reduction operations were already proposed by Baget et al.~\cite{BagLecMugSal11} (they presented $\TR$ and $\AR$ as a single operation called \emph{redundant arc removal}), whereas cycle removal is introduced by us as a new operation that will assist us in characterizing $\gbts$ and $\wgbts$ in terms of derivation graphs.\footnote{Beyond $\TR$ and $\AR$, we note that Baget et al.~\cite{BagLecMugSal11} introduced an additional reduction operation, referred to as \emph{arc contraction}. We do not consider this rule here however as it is unnecessary to characterize $\gbts$ and $\wgbts$ in terms of derivation graphs and prima facie obstructs the proof of \thm~\ref{thm:(w)gbts-(w)adgs-equivalence}.}

\begin{definition}[Reduction Operations]\label{def:reduction-operations} Let $\db$ be a database, $\rs$ be a rule set, $\db \drel{\der}{\rs} \ins_{n}$, and $\dg$ be the derivation graph of $\der$. We define the set $\mathsf{RO}$ of \emph{reduction operations} as
$
\{\AR^{[i,j]}\!, \TR^{[i,j,k,t]}\!, \CR^{[i,j,k,\ell]} \,|\, i,j,k,\ell {\,\leq\,} n,\, t{\,\in\,}  \termset(\ins_{n})\},
$
 whose effect is further specified below. We let $\ro\os(\dg)$ denote the output of applying the operation $\ro$ to the (potentially reduced) derivation graph $\os(\dg) = (\nodes,\edges,\at,\lbl)$, where $\os \in \mathsf{RO}^{*}$ is a \emph{reduction sequence}, that is, $\os$ is a (potentially empty) sequence of reduction operations.
 
\begin{enumerate}

\item Arc Removal $\AR^{[i,j]}$: Whenever $(X_{i},X_{j}) \in \edges$ and $\lbl(X_{i},X_{j}) = \emptyset$, then $\AR^{[i,j]}\os(\dg) := (\nodes,\edges',\at,\lbl')$ where $\edges' := \edges \setminus \{(X_{i},X_{j})\}$ and $\lbl' = \lbl \restriction \edges'$.

\item Term Removal $\TR^{[i,j,k,t]}$: If $(X_{i},X_{k}),(X_{j},X_{k}) \in \edges$ with $X_{i} \neq X_{j}$ and  $t \in \lbl(X_{i},X_{k}) \cap \lbl(X_{j},X_{k})$, then $\TR^{[i,j,k,t]}\os(\dg) := (\nodes,\edges,\at,\lbl')$ where $\lbl'$ is obtained from $\lbl$ by removing $t$ from $\lbl(X_{j},X_{k})$.

\item Cycle Removal $\CR^{[i,j,k,\ell]}$: If $(X_{i},X_{k}),(X_{j},X_{k}) \in \edges$ and there exists a node $X_{\ell} \in \nodes$ with $\ell \ilt k$ such that 
$
\lbl(X_{i},X_{k}) \cup \lbl(X_{j},X_{k}) \subseteq \termset(X_{\ell})
$
 then, $\CR^{[i,j,k,\ell]}\os(\dg) := (\nodes,\edges',\at,\lbl')$ where
$$
\edges' := \big(\edges \setminus \{(X_{i},X_{k}),(X_{j},X_{k}) \}\big) \cup \{(X_{\ell},X_{k})\}
$$
 and $\lbl'$ is obtained from $\lbl \restriction \edges'$ by setting $L(X_{\ell},X_{k})$ to $\lbl(X_{i},X_{k}) \cup \lbl(X_{j},X_{k})$.

\end{enumerate}
Last, we say that a reduction sequence $\os \in \mathsf{RO}^{*}$ is a \emph{complete reduction sequence} relative to a derivation graph $\dg$ \iffi $\os(\dg)$ is cycle-free. % no reduction operations are applicable to $\os(\dg)$.
\end{definition}

\begin{remark} When there is no danger of confusion, we will take the liberty to write $\TR$, $\AR$, and $\CR$ without superscript parameters. That is, given a derivation graph $\dg$, the (reduced) derivation graph $\CR\TR(\dg)$ is obtained by applying an instance of $\TR$ followed by an instance of $\CR$ to $\dg$. When applying a reduction operation we always explain \emph{how} it is applied, so the exact operation is known.
\end{remark}

\begin{figure}[t]
%\vspace{-2ex}
\resizebox{\columnwidth}{!}{
\includegraphics{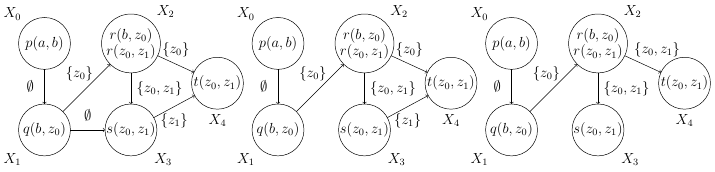}
}

%\vspace{-2ex}
\caption{Left to right: reduced derivation graphs $\TR\!(\dg\!)$, $\AR\!\TR\!(\dg\!)$, and $\CR\!\AR\!\TR\!(\dg\!)$.}
\label{fig:reducing-der-graph}
\end{figure}

 We now describe the functionality of each reduction operation and illustrate each by means of an example. We will apply each to transform the derivation graph $\dg$ (shown in \fig~\ref{fig:der-graph-example}) into a tree decomposition of $\ins_{4}$ (which was defined above). %The $\AR$ operation deletes an arc labeled with the empty set; e.g. we may apply $\AR$ to $\dg$, yielding the reduced derivation graph $\AR(\dg)$, which omits the arc $(X_{0},X_{2})$ as shown in the top-left of \fig~\ref{fig:reducing-der-graph}. 
 The $\TR$ operation deletes a term $t$ within the intersection of the sets labeling two converging arcs. For example, we may apply $\TR$ to the derivation graph $\dg$ from \fig~\ref{fig:der-graph-example}, deleting the term $z_{0}$ from the label of the arc $(X_{1},X_{3})$, and yielding the reduced derivation graph $\TR(\dg)$, which is shown first in \fig~\ref{fig:reducing-der-graph}. We may then apply $\AR$ to $\TR(\dg)$, deleting the arc $(X_{1},X_{3})$, which is labeled with the empty set, to obtain the reduced derivation graph $\AR\TR(\dg)$ shown middle in \fig~\ref{fig:reducing-der-graph}. 
 
 The $\CR$ operation is more complex and works by considering two converging arcs $(X_{i},X_{k})$ and $(X_{j},X_{k})$ in a (reduced) derivation graph. If there exists a node $X_{\ell}$ whose index $\ell$ is less than the index $k$ of the child node $X_{k}$ and $\lbl(X_{i},X_{k}) \cup \lbl(X_{j},X_{k}) \subseteq \termset(X_{\ell})$, then the converging arcs $(X_{i},X_{k})$ and $(X_{j},X_{k})$ may be deleted and the arc $(X_{\ell},X_{k})$ introduced and labeled with $\lbl(X_{i},X_{k}) \cup \lbl(X_{j},X_{k})$. As an example, the reduced derivation graph $\CR\AR\TR(\dg)$ (shown third in \fig~\ref{fig:reducing-der-graph}) is obtained from $\AR\TR(\dg)$ (shown middle in \fig~\ref{fig:reducing-der-graph}) by applying $\CR$ in the following manner to the convergent arcs $(X_{2},X_{4})$ and $(X_{3},X_{4})$: since for $X_{2}$ (whose index $2$ is less than the index $4$ of $X_{4}$) $\lbl(X_{2},X_{4}) \cup \lbl(X_{3},X_{4}) \subseteq \termset(X_{2})$, we may delete the arcs $(X_{2},X_{4})$ and $(X_{3},X_{4})$ and introduce the arc $(X_{2},X_{4})$ labeled with $\lbl(X_{2},X_{4}) \cup \lbl(X_{3},X_{4}) = \{z_{0}\} \cup \{z_{1}\} = \{z_{0},z_{1}\}$. Observe that the reduced derivation graph $\CR\AR\TR(\dg)$ is free of cycles, witnessing that $\os = \CR\AR\TR$ is a complete reduction sequence relative to $\dg$. Moreover, if we replace each node by the set of its terms and disregard the labels on arcs, then $\os(\dg)$ can be read as a tree decomposition of $\ins_{4}$. In fact, one can show that every reduced derivation graph satisfies the decomposition properties mentioned in \lem~\ref{lem:decomposition-properties} above.

\begin{lemma}\label{lem:reduct-ops-preserve-decomp-props}
Let $\db$ be a database and $\rs$ be a rule set. If $\db \drel{\rs}{\der} \ins$, then for any reduction sequence $\os$, $\os(\dg) = (\nodes,\edges,\at,\lbl)$ satisfies the decomposition properties 1-4 in \lem~\ref{lem:decomposition-properties}.
%\begin{enumerate}
%\item $\bigcup_{X_{\vn} \in \nodes} \termset(X_{\vn}) = \termset(\ins)$;
%\item For each $p(\vec{t}) \in \ins$, there is an $X_{\vn} \in \nodes$ such that $p(\vec{t}) \in \at(X_{\vn})$;
%\item For each term $t \in \termset(\ins)$, the subgraph of $\os(\dg)$ induced by the nodes $X_{\vn}$ such that $t \in \termset(X_{\vn})$ is connected;
%\item For each $X_{\vn} \in \nodes$ the size of $\termset(X_{\vn})$ is bounded by an integer that only depends on the size of $(\db,\rs)$, \textit{viz.} $max\{|\termset(\db)|,|\termset(\head(\rho_{i}))|_{\rho_{i} \in \rs}\} + |\cons|$.
%\end{enumerate}
\end{lemma}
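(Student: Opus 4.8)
The plan is to argue by induction on the length of the reduction sequence $\os$, carrying along a structural invariant slightly stronger than property~3. First I would record the key observation that every operation in $\mathsf{RO}$ modifies only the arc set $\edges$ and the labelling $\lbl$, and never touches $\nodes$ or $\at$. Since properties~1, 2 and~4 of \lem~\ref{lem:decomposition-properties} are stated purely in terms of $\nodes$, $\at$ and the fixed constant set $\cons$ (recall $\termset(X_i) = \termset(\at(X_i)) \cup \cons$ and $\nonc(X_i) = \termset(X_i) \setminus \cons$), they are literally unchanged by any reduction operation and so transfer immediately from $\dg$ to $\os(\dg)$ for every $\os$. Likewise $\TR$ alters only a label and leaves $\edges$ intact, so it cannot affect any induced subgraph; thus only $\AR$ and $\CR$, which modify $\edges$, can threaten the connectedness property~3.

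The heart of the argument is therefore property~3, which I would strengthen to the following directed invariant, maintained by the induction: (i) every arc $(X_a,X_b)$ satisfies $a \ilt b$; (ii) for every arc $(X_a,X_b)$ and every $x \in \lbl(X_a,X_b)$ we have $x \in \nonc(X_a) \cap \nonc(X_b)$; and (iii) for each non-constant term $x$ and each node $X_b$ with $x \in \nonc(X_b)$ other than the unique minimal-index node $X_c$ at which $x$ is introduced as a fresh null, there is an arc $(X_a,X_b)$ with $a \ilt b$ and $x \in \lbl(X_a,X_b)$. Granting this invariant, property~3 follows: by (ii) every $x$-labelled arc runs between two nodes containing $x$, and by (i) and (iii) one may follow incoming $x$-labelled arcs backwards from any node containing $x$, strictly decreasing the index at each step, until reaching $X_c$; hence all nodes containing $x$ are connected to $X_c$ using only $x$-labelled arcs, and these form a subgraph of the subgraph induced by those nodes, so the latter is connected.

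For the base case I would verify the invariant for $\dg$ itself: clauses (i) and (ii) are immediate from \dfn~\ref{def:derivation-graph}, while for (iii), if $x \in \nonc(X_b)$ with $x$ introduced before step $b$, then $x$ cannot be a fresh null of $\rho_b$ and so must equal $h_b(y)$ for some frontier variable $y \in \fr(\rho_b)$; a body atom of $\rho_b$ containing $y$ is mapped by $h_b$ to an atom of $\ins_{b-1}$ lying in some $\at(X_a)$, and this frontier atom induces exactly an arc $(X_a,X_b)$ with $x \in \lbl(X_a,X_b)$. For the inductive step I would check each operation preserves the invariant: $\AR$ removes only an empty-labelled arc, which is an $x$-labelled arc for no $x$, so (iii) is untouched; $\TR$ deletes a term $t$ from $\lbl(X_j,X_k)$ only when $t \in \lbl(X_i,X_k)$ for some $X_i \neq X_j$, so $X_k$ retains an incoming $t$-labelled arc from $X_i$ and (iii) survives; and $\CR$ replaces the two arcs into $X_k$ by the single arc $(X_\ell,X_k)$ whose label is their union, where the side condition $\lbl(X_i,X_k) \cup \lbl(X_j,X_k) \subseteq \termset(X_\ell)$ guarantees $x \in \nonc(X_\ell)$ for every $x$ formerly carried into $X_k$, so each such $x$ is still delivered to $X_k$ by an incoming arc from the $x$-node $X_\ell$ with $\ell \ilt k$, re-establishing (ii) and (iii) for the new arc while affecting the incoming arcs of no node other than $X_k$.

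I expect the main obstacle to be the cycle-removal case, and more precisely the decision to phrase property~3 as the directed reachability invariant~(iii) rather than as bare connectedness of the induced subgraph. A direct connectedness argument is awkward here because $\CR$ simultaneously deletes two arcs and adds one, and deleting the two arcs incident to $X_k$ could a priori sever $X_k$ from the rest of an $x$-component; the directed formulation sidesteps this by tracking, for each term, a single re-routable incoming arc witnessing connectivity to the creation node $X_c$, which $\CR$ redirects cleanly through $X_\ell$.
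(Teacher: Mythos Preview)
Your proposal is correct and follows essentially the same route as the paper. The paper likewise dismisses properties~1, 2 and~4 as immediate (reductions touch only $\edges$ and $\lbl$) and obtains property~3 from an auxiliary result (\lem~\ref{lem:connect-t-gen}) asserting that from the unique $x$-generative node there is a directed path through $x$-nodes to every node containing $x$; that lemma is proved by induction on the pair $(|\der|,|\os|)$ and invokes a one-step helper (\lem~\ref{lem:existing-parent-node}) whose content is precisely your clause~(iii). Your packaging---carrying (i)--(iii) as a single invariant under a plain induction on $|\os|$ and reconstructing the path only at the end---is a bit more self-contained than the paper's two-lemma, double-induction layout, but the substantive idea (tracking, for each non-constant term, an incoming $x$-labelled arc that can be rerouted through $X_\ell$ under $\CR$) is identical.
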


 As illustrated above, derivation graphs can be used to derive tree decompositions of $\rs$-derivable instances. By the fourth decomposition property (see \lem~\ref{lem:decomposition-properties} above), the width of such a tree decomposition is bounded by a constant that depends only on the given knowledge base. Thus, if a rule set $\rs$ always yields derivation graphs that are reducible to \emph{cycle-free} graphs -- meaning that (un)directed cycles do not occur within the graph -- then all $\rs$-derivable instances have tree decompositions that are uniformly bounded by a constant. 
 This establishes that the rule set $\rs$ falls within the $\bts$ class, confirming that query entailment is decidable with $\rs$. We define two classes of rule sets by means of reducible derivation graphs:

\begin{definition}[(Weakly) Cycle-free Derivation Graph Set]\label{def:adgs-wadgs} A rule set $\rs$ is a \emph{cycle-free derivation graph set ($\cdgs$)} \iffi if $\db\! \drel{\der}{\rs\!} \ins$, then $\dg$ can be re\-duced to a cycle-free graph by the reduction operations. $\rs$ is a \emph{weakly cycle-free de\-rivation graph set ($\wcdgs$)} \iffi if $\db\! \drel{\der}{\rs\!} \ins$, then there is a derivation $\der'\!$ where $\db\! \drel{\der'}{\rs\!} \ins$ and $G_{\!\der'}\!$ can be reduced to a cycle-free graph by the reduction operations.
\end{definition}

 It is straightforward to confirm that $\wcdgs$ subsumes $\cdgs$, and that both classes are subsumed by $\bts$.

\begin{proposition}\label{prop:(w)adgs-relationships}
Every $\cdgs$ rule set is $\wcdgs$ and every $\wcdgs$ rule set is $\bts$. 
\end{proposition}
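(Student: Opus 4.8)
The plan is to prove each of the two inclusions separately, both of which follow fairly directly from unpacking the relevant definitions together with the decomposition lemmas already established.

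For the first inclusion, that every $\cdgs$ rule set is $\wcdgs$, the argument is almost immediate from the definitions. Suppose $\rs$ is $\cdgs$ and take any derivation $\db \drel{\der}{\rs} \ins$. By the $\cdgs$ property, $\dg$ can already be reduced to a cycle-free graph by the reduction operations. But then the witness derivation $\der'$ required by the definition of $\wcdgs$ can be taken to be $\der$ itself: we have $\db \drel{\der}{\rs} \ins$ and $G_{\der}$ reduces to a cycle-free graph, which is exactly what $\wcdgs$ demands. Hence $\rs$ is $\wcdgs$. This mirrors the analogous (and explicitly ``straightforward'') observation that $\wgbts$ subsumes $\gbts$.

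For the second inclusion, that every $\wcdgs$ rule set is $\bts$, the strategy is to show that every $\rs$-derivable instance has a tree decomposition whose width is bounded by a constant depending only on $(\db,\rs)$. Fix an arbitrary derivation $\db \drel{\der}{\rs} \ins$. Since $\rs$ is $\wcdgs$, there is a derivation $\der'$ with $\db \drel{\der'}{\rs} \ins$ such that $G_{\der'}$ can be reduced, via some complete reduction sequence $\os$, to a cycle-free graph $\os(G_{\der'})$. First I would argue that replacing each node $X$ of $\os(G_{\der'})$ by its term set $\termset(X)$ and forgetting the arc labels yields a genuine tree decomposition of $\ins$: properties 1--3 of \lem~\ref{lem:decomposition-properties} are preserved under all reduction operations by \lem~\ref{lem:reduct-ops-preserve-decomp-props}, and these are precisely conditions (i)--(iii) in the definition of a tree decomposition (coverage of terms, coverage of atoms, and connectedness), once one checks that cycle-freeness of the underlying graph makes it a tree. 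I would then invoke property 4 of \lem~\ref{lem:decomposition-properties}, which bounds $|\termset(X)|$ for every node by the constant $\max\{|\termset(\db)|,|\termset(\head(\rho_i))|_{\rho_i \in \rs}\} + |\cons|$ depending only on $(\db,\rs)$; this caps the width of the resulting decomposition uniformly across all $\rs$-derivable instances. Since $\bts$ is exactly the class of rule sets all of whose derivable instances admit such uniformly bounded-treewidth models, we conclude $\rs \in \bts$.

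The main obstacle lies in the second inclusion, specifically in justifying that a cycle-free reduced derivation graph really is a \emph{tree} (as opposed to merely an acyclic graph) and hence yields a bona fide tree decomposition. One must verify that the reduction operations, while removing cycles, do not disconnect the graph or otherwise violate the connectedness condition, so that the underlying undirected graph is connected and acyclic. Here the precise formulation of cycle-freeness in \dfn~\ref{def:adgs-wadgs} and the preservation guarantee of \lem~\ref{lem:reduct-ops-preserve-decomp-props} do the heavy lifting, but care is needed to confirm that property 3 (connectedness of the induced subgraph for each non-constant term) combined with cycle-freeness forces the treelike structure required by condition (iii) of a tree decomposition. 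The remaining steps are routine translations between the derivation-graph vocabulary and the tree-decomposition vocabulary.
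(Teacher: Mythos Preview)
Your argument for the first inclusion is correct and matches the paper's proof exactly.

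For the second inclusion your outline is broadly on the right track but has two gaps, one of which you flag yourself and one of which you do not.

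The flagged gap (forest versus tree): you are right that a cycle-free reduced derivation graph need not be connected; the reduction operations (notably $\AR$) can disconnect it, so in general $\os(G_{\der'})$ is a finite \emph{forest}. Your hope that property~3 of \lem~\ref{lem:reduct-ops-preserve-decomp-props} forces connectedness of the whole graph is unfounded: that property only guarantees connectedness of the subgraph induced by the nodes containing any fixed non-constant term, not of the full graph. The paper's fix is simply to link the roots of the separate trees in a line; since every bag already contains all of $\cons$, these added edges do not violate the connectedness condition, and one obtains a genuine tree decomposition.

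The unflagged gap (compactness): your final sentence asserts that $\bts$ is ``exactly the class of rule sets all of whose derivable instances admit such uniformly bounded-treewidth models,'' but that is not the definition. A rule set is $\bts$ if for every database $\db$ the knowledge base $(\db,\rs)$ has a \emph{universal model} of finite treewidth, and the universal model $\ch_\infty(\db,\rs)$ is typically infinite. Your argument establishes only that every \emph{finite} $\rs$-derivable instance has treewidth $\leq n$. To pass to the infinite chase the paper invokes Thomas's treewidth compactness theorem: since $\tw{\chk{k}{\db,\rs}} \leq n$ for all $k$ and these form an increasing chain of finite instances exhausting $\ch_\infty(\db,\rs)$, compactness yields $\tw{\ch_\infty(\db,\rs)} \leq n$. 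Without this step your proof does not reach $\bts$.
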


 Furthermore, as mentioned above, $\gbts$ and $\wgbts$ coincide with $\cdgs$ and $\wcdgs$, respectively. By making use of the $\CR$ operation, one can show that the derivation graph of any greedy derivation is reducible to a cycle-free graph, thus establishing that $\gbts \subseteq \cdgs$ and $\wgbts \subseteq \wcdgs$. To show the converse (i.e. that $\cdgs \subseteq \gbts$ and $\wcdgs \subseteq \wgbts$) however, requires more work. In essence, one shows that for every (non-source) node $X_{i}$ in a cycle-free (reduced) derivation graph there exists another node $X_{j}$ such that $j < i$ and the frontier of the atoms in $\at(X_{i})$ only consist of constants and/or nulls introduced by the atoms in $\at(X_{j})$. This property is preserved under \emph{reverse} applications of the reduction operations, and thus, one can show that if a derivation graph is reducible to a cycle-free graph, then the above property holds for the original derivation graph, implying that the derivation graph encodes a greedy derivation. Based on such arguments, one can prove the following:

\begin{theorem}\label{thm:(w)gbts-(w)adgs-equivalence}
$\gbts$ coincides with $\cdgs$ and $\wgbts$ coincides with $\wcdgs$. Membership in $\cdgs$, $\gbts$, $\wcdgs$, or $\wgbts$ warrants decidable BCQ entailment.
\end{theorem}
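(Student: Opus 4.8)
The plan is to prove each equality by a per-derivation correspondence between greediness and reducibility, and then read off decidability from containment in $\bts$. For the forward inclusions $\gbts\subseteq\cdgs$ and $\wgbts\subseteq\wcdgs$ I would show that the derivation graph of any \emph{greedy} derivation reduces to a cycle-free graph. Fix a greedy $\der$ and its graph $\dg$. For a non-source node $X_i$, the union of the labels on its incoming arcs equals $h_i(\fr(\rho_i))\setminus C$ (every frontier variable occurs in some matched body atom), and greediness supplies a single $j<i$ with $h_i(\fr(\rho_i))\setminus C\subseteq\nonc(X_j)\subseteq\termset(X_j)$. Hence the combined label of any two arcs entering $X_i$ lies in $\termset(X_j)$, so $\CR$ is applicable with witness $X_j$; after clearing shared terms and emptied arcs via $\TR$ and $\AR$, I iterate $\CR$ to collapse all arcs into $X_i$ into a single arc from $X_j$. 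Since every arc (original or $\CR$-introduced) runs from a smaller to a larger index, performing this for all $X_i$ leaves each non-source node with a unique predecessor, i.e.\ a forest, which is cycle-free. The weak inclusion is identical, applied to the greedy $\der'$ furnished by $\wgbts$-membership.

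The reverse inclusions $\cdgs\subseteq\gbts$ and $\wcdgs\subseteq\wgbts$ are the crux. The target is: if $\dg$ is reducible to a cycle-free graph, then $\der$ is greedy. I would phrase greediness as a node-wise invariant $P$: for every non-source $X_i$ there is a $j<i$ with $U(X_i)\subseteq\nonc(X_j)$, where $U(X_i)$ is the union of the labels on the arcs entering $X_i$; on $\dg$ one has $U(X_i)=h_i(\fr(\rho_i))\setminus C$, so $P$ is exactly greediness. I would establish $P$ for $\dg$ in two steps. First, a base case: a cycle-free reduced graph satisfies $P$, obtained from acyclicity together with the connectedness clause of \lem~\ref{lem:reduct-ops-preserve-decomp-props}, which lets me trace all incoming labels of $X_i$ back to one common earlier node. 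Second, a reverse-induction step along the complete reduction sequence $\os$: whenever $\ro\,\os'(\dg)$ satisfies $P$, so does $\os'(\dg)$; this is verified operation by operation, with $\CR$ the delicate case since it alone alters the arc set and reroutes labels through its witness $X_\ell$. Composing these from $\os(\dg)$ back to $\dg$ yields $P$ for $\dg$, i.e.\ greediness of $\der$; the weak version follows by running the argument on the reducible $\der'$ provided by $\wcdgs$-membership.

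For decidability I would combine the two equalities with \prp~\ref{prop:(w)adgs-relationships}: since $\gbts=\cdgs\subseteq\wcdgs=\wgbts\subseteq\bts$, each of the four classes is a subclass of $\bts$, and as $\bts$ is a query-decidable class, membership in $\cdgs$, $\gbts$, $\wcdgs$, or $\wgbts$ warrants decidable BCQ entailment.

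I expect the main obstacle to be the backward direction, and within it the interaction between the base case and the $\CR$-case of the reverse induction: one must show that collapsing two converging arcs through a witness node neither discards nor invents incoming labels, so that the single-source invariant $P$ genuinely transfers backwards across $\CR$, and that acyclicity together with connectedness already suffices to seed $P$ in a cycle-free reduced graph. The remaining operations, $\TR$ and $\AR$, leave $U(X_i)$ unchanged and should be routine.
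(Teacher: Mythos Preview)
Your plan is essentially the paper's own strategy: forward inclusion by showing greediness supplies a $\CR$-witness at every convergence, backward inclusion by reverse-inducting a per-node invariant along a complete reduction sequence, and decidability via $\bts$-containment through \prp~\ref{prop:(w)adgs-relationships}.

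Two organisational points where the paper is crisper than your sketch. First, the paper establishes separately (its \lem~\ref{lem:derivation-graph-reduction-properties}) that for \emph{every} reduction sequence $\os$ one has $\fr(X_i)=\bigcup_{(X_m,X_i)\in\edges}\lbl(X_m,X_i)$ and $\lbl(X_m,X_i)\subseteq\termset(X_m)$. Since $\fr(X_i)$ depends only on $\der$, your quantity $U(X_i)$ is therefore \emph{invariant} under reductions; once you have this, the reverse-induction step across $\CR$ is trivial (nodes don't change, $\fr$ doesn't change), not the delicate case you anticipate. Second, your base case is simpler than you suggest: in the paper ``cycle-free'' means free of \emph{undirected} cycles, so a cycle-free reduced graph is already a forest; each non-source $X_i$ has a unique parent $X_m$, and $\fr(X_i)=\lbl(X_m,X_i)\subseteq\termset(X_m)$ immediately gives the invariant---no appeal to connectedness is needed. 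On the forward side, note that the paper dispenses with $\TR$ and $\AR$ entirely and reduces using $\CR$ alone; this is what yields \cor~\ref{corollaries}.
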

Note that by \Cref{lem:wgbts-contains-gbts}, this also implies that $\wcdgs$ properly contains $\cdgs$.

 An interesting consequence of the above theorem concerns the redundancy of $\AR$ and $\TR$ in the presence of $\CR$. In particular, since we know that (i) if a derivation graph can be reduced to a cycle-free graph, then the derivation graph encodes a greedy derivation, and (ii) the derivation graph of any greedy derivation can be reduced to an cycle-free graph by means of applying the $\CR$ operation only, it follows that if a derivation graph can be reduced to a cycle-free graph, then it can be reduced by only applying the $\CR$ operation. We refer to this phenomenon as \emph{reduction-admissibility}, which is defined below. %Moreover, the above theorem also implies that $\wcdgs$ properly subsumes $\cdgs$ since

\begin{definition}[Reduction-admissible] Suppose $S_{1} = \{\rop{i} \ | \ 1 \leq i \leq n\}$ and $S_{2} = \{\rop{j} \ | \ n+1 \leq j \leq k\}$ are two sets of reduction operations. We say that $S_{1}$ is \emph{reduction-admissible} relative to $S_{2}$ \iffi for any rule set $\rs$ and $\rs$-derivation $\der$, if $\dg$ is reducible to a cycle-free graph with $S_{1} \cup S_{2}$, then $\dg$ is reducible to a cycle-free graph with just $S_{2}$.
%Let $S_{1} = \{\rop{i} \ | \ 1 \leq i \leq n\}$ and $S_{2} = \{\rop{j} \ | \ n+1 \leq j \leq k\}$ be two sets of reduction operations. A reduction operation $\ro$ is defined to be \emph{reduction-admissible} relative to $\rop{1}, \ldots, \rop{n}$ \iffi for any rule set $\rs$ and $\rs$-derivation $\der$, if $\dg$ is reducible to a cycle-free graph with $\{\ro,\rop{1}, \ldots, \rop{n}\}$, then $\dg$ is reducible to a cycle-free graph with $\{\rop{1}, \ldots, \rop{n}\}$.
%A reduction operation $\ro$ is \emph{reduction-admissible} \iffi for any rule set $\rs$ and $\rs$-derivation $\der$, if $\dg$ is reducible to a cycle-free graph with $\ro$, then $\dg$ is reducible to a cycle-free graph without $\ro$.
\end{definition}

\begin{corollary}\label{corollaries}
$\{\TR,\AR\}$ is reduction-admissible relative to $\CR$.
\end{corollary}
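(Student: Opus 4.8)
The plan is to unfold the definition of reduction-admissibility and then compose the two directions already established in the proof of \thm~\ref{thm:(w)gbts-(w)adgs-equivalence}. Concretely, I would fix an arbitrary rule set $\rs$ and $\rs$-derivation $\der$ with $\db \drel{\der}{\rs} \ins$, and assume that $\dg$ is reducible to a cycle-free graph using the full operation set $\{\TR,\AR,\CR\}$ (this is the hypothesis ``reducible with $S_{1} \cup S_{2}$'' for $S_{1} = \{\TR,\AR\}$ and $S_{2} = \{\CR\}$). The goal is then to produce a complete reduction sequence for $\dg$ that uses only $\CR$.

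The key observation is that the equivalence $\gbts = \cdgs$ is witnessed not merely at the level of classes but \emph{pointwise}, at the level of an individual derivation and its graph. From the inclusion $\cdgs \subseteq \gbts$ I would extract fact (i): whenever $\dg$ is reducible to a cycle-free graph by any reduction operations, the underlying derivation $\der$ is greedy. From the reverse inclusion $\gbts \subseteq \cdgs$ I would extract fact (ii): whenever $\der$ is greedy, $\dg$ can be reduced to a cycle-free graph by applying $\CR$ alone. Both are exactly the per-derivation statements asserted in the paragraph preceding the corollary.

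Given these two facts, the corollary follows by composition. The assumed reducibility of $\dg$ with $\{\TR,\AR,\CR\}$ satisfies the hypothesis of fact (i), so $\der$ is greedy; applying fact (ii) to this greedy $\der$ then yields a complete reduction sequence for $\dg$ consisting solely of $\CR$ operations. Since $\rs$ and $\der$ were arbitrary, this is precisely the statement that $\{\TR,\AR\}$ is reduction-admissible relative to $\CR$.

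The main obstacle is not the logical composition, which is routine, but making sure that facts (i) and (ii) are genuinely available at the granularity of a single derivation graph rather than only for whole rule-set classes. I would therefore, when invoking \thm~\ref{thm:(w)gbts-(w)adgs-equivalence}, point specifically to the pointwise constructions inside its proof: the forward construction that turns a greedy derivation into a $\CR$-only complete reduction sequence, and the backward argument showing that cycle-freeness of a reduced graph forces the greedy frontier condition back onto $\der$. Provided those constructions are stated in the required pointwise form (as the surrounding discussion indicates), no additional argument is needed.
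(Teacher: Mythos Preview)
Your proposal is correct and matches the paper's proof essentially verbatim: the paper fixes an arbitrary $\rs$-derivation $\der$, invokes the pointwise argument inside the proof of the $\cdgs \subseteq \gbts$ direction to conclude that $\der$ is greedy, and then invokes the pointwise argument inside the proof of the $\gbts \subseteq \cdgs$ direction to obtain a $\CR$-only complete reduction sequence for $\dg$. Your explicit caveat about needing the per-derivation (rather than per-class) versions of these facts is exactly right and is precisely why the paper cites the \emph{proofs} of those lemmas rather than the lemma statements themselves.
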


%=============
%Conclusion 
%=============
\section{Conclusion}\label{sec:conclusion}

In this paper, we revisited the concept of a \emph{greedy} derivation, which immediately gives rise to a bounded-width tree decomposition of the constructed instance. This well-established notion allows us to categorize rule sets as being \emph{(weakly) greedy bounded treewidth sets} ($\wandgbts$), if all (some) derivations of a derivable instance are guaranteed to be greedy, irrespective of the underlying database. By virtue of being subsumed by $\bts$, these classes warrant decidability of BCQ entailment, while at the same time subsuming various popular rule languages, in particular from the guarded family. 

By means of an example together with a proof-theoretic argument, we exposed that $\wgbts$ strictly generalizes $\gbts$. In pursuit of a better understanding and more workable methods to detect and analyze $\wandgbts$ rule sets, we resorted to the previously proposed notion of \emph{derivation graphs}. Through a refinement of the set of reduction methods for derivation graphs, we were able to make more advanced use of this tool, leading to the definition of \emph{(weakly) cycle-free derivation graph sets} ($\wandcdgs$) of rules, of which we were then able to show the respective coincidences with $\wandgbts$. This way, we were able to establish alternative characterizations of $\gbts$ and $\wgbts$ by means of derivation graphs. En passant, we found that the newly introduced \emph{cycle removal} reduction operation over derivation graphs is sufficient by itself and makes the other operations redundant.

For future work, we plan to put our newly found characterizations to use. In particular, we aim to investigate if a rule set's membership in $\gbts$ or $\wgbts$ is decidable. For $\gbts$, this has been widely conjectured, but never formally established. In the positive case, derivation graphs might also be leveraged to pinpoint the precise complexity of the membership problem. We are also confident that the tools and insights in this paper -- partially revived, partially upgraded, partially newly developed -- will prove useful in the greater area of static analysis of existential rule sets. On a general note, we feel that the field of proof theory has a lot to offer for knowledge representation, whereas the cross-fertilization between these disciplines still appears to be underdeveloped.  

%For future work, we are plan to put the newly found characterizations to use. First and foremost, they may be utilized to show that a rule set's membership in $\gbts$ or $\wgbts$ is decidable. For $\gbts$, this has been widely conjectured, but never formally established. In the positive case, derivation graphs might also be leveraged to pinpoint the precise complexity of the membership problem. We are also confident that the tools and insights in this paper -- partially revived, partially upgraded, partially newly developed -- will prove useful in the greater area of static analysis of existential rule sets. On a general note, we feel that the field of proof theory has a lot to offer for knowledge representation, whereas the cross-fertilization between these disciplines still appears to be underdeveloped.  

%\clearpage

%=============
%Acknowledgements
%=============

%\section*{Acknowledgments}

%\tim{TODO}

%=============
%Bibliography 
%=============

\bibliographystyle{splncs04}
\bibliography{bibliography}

\begin{thebibliography}{10}
\providecommand{\url}[1]{\texttt{#1}}
\providecommand{\urlprefix}{URL }
\providecommand{\doi}[1]{https://doi.org/#1}

\bibitem{Bag04}
Baget, J.F.: Improving the forward chaining algorithm for conceptual graphs
  rules. In: Proceedings of the Ninth International Conference on Principles of
  Knowledge Representation and Reasoning. p. 407–414. KR'04, AAAI Press
  (2004)

\bibitem{BagLecMugSal09}
Baget, J.F., Lecl\`{e}re, M., Mugnier, M.L., Salvat, E.: Extending decidable
  cases for rules with existential variables. In: Proceedings of the 21st
  International Jont Conference on Artifical Intelligence. p. 677–682.
  IJCAI'09, Morgan Kaufmann Publishers Inc., San Francisco, CA, USA (2009)

\bibitem{BagLecMug09}
Baget, J.F., Lecl\`{e}re, M., Mugnier, M.L., Salvat, E.: Extending decidable
  cases for rules with existential variables. In: Proceedings of the 21st
  International Joint Conference on Artificial Intelligence. p. 677–682.
  IJCAI'09, Morgan Kaufmann Publishers Inc., San Francisco, CA, USA (2009)

\bibitem{BagMug02}
Baget, J.F., Mugnier, M.L.: Extensions of simple conceptual graphs: the
  complexity of rules and constraints. Journal of Artificial Intelligence
  Research  \textbf{16},  425--465 (2002)

\bibitem{BagLecMugSal11}
Baget, J.F., Leclère, M., Mugnier, M.L., Salvat, E.: On rules with existential
  variables: Walking the decidability line. Artificial Intelligence
  \textbf{175}(9),  1620--1654 (2011).
  \doi{https://doi.org/10.1016/j.artint.2011.03.002},
  \url{https://www.sciencedirect.com/science/article/pii/S0004370211000397}

\bibitem{CaliGK13}
Cal{\`{\i}}, A., Gottlob, G., Kifer, M.: Taming the infinite chase: Query
  answering under expressive relational constraints. Journal of Artificial
  Intelligence Research  \textbf{48},  115--174 (2013).
  \doi{10.1613/jair.3873}, \url{https://doi.org/10.1613/jair.3873}

\bibitem{ChandraLM81}
Chandra, A.K., Lewis, H.R., Makowsky, J.A.: Embedded implicational dependencies
  and their inference problem. In: Proceedings of the 13th Annual {ACM}
  Symposium on Theory of Computing (STOC'81). pp. 342--354. {ACM} (1981).
  \doi{10.1145/800076.802488}, \url{https://doi.org/10.1145/800076.802488}

\bibitem{DeuNasRem08}
Deutsch, A., Nash, A., Remmel, J.B.: The chase revisited. In: Lenzerini, M.,
  Lembo, D. (eds.) Proceedings of the 27th {ACM} {SIGMOD-SIGACT-SIGART}
  Symposium on Principles of Database Systems (PODS'08). pp. 149--158. {ACM}
  (2008). \doi{10.1145/1376916.1376938},
  \url{https://doi.org/10.1145/1376916.1376938}

\bibitem{FagKolMilPop05}
Fagin, R., Kolaitis, P.G., Miller, R.J., Popa, L.: Data exchange: semantics and
  query answering. Theoretical Computer Science  \textbf{336}(1),  89--124
  (2005). \doi{https://doi.org/10.1016/j.tcs.2004.10.033},
  \url{https://www.sciencedirect.com/science/article/pii/S030439750400725X},
  database Theory

\bibitem{FelLyoOstRud23}
Feller, T., Lyon, T.S., Ostropolski-Nalewaja, P., Rudolph, S.:
  {Finite-Cliquewidth Sets of Existential Rules: Toward a General Criterion for
  Decidable yet Highly Expressive Querying}. In: Geerts, F., Vandevoort, B.
  (eds.) 26th International Conference on Database Theory (ICDT 2023). Leibniz
  International Proceedings in Informatics (LIPIcs), vol.~255, pp. 18:1--18:18.
  Schloss Dagstuhl -- Leibniz-Zentrum f{\"u}r Informatik, Dagstuhl, Germany
  (2023). \doi{10.4230/LIPIcs.ICDT.2023.18},
  \url{https://drops.dagstuhl.de/opus/volltexte/2023/17760}

\bibitem{JohKlu82}
Johnson, D.S., Klug, A.: Testing containment of conjunctive queries under
  functional and inclusion dependencies. In: Proceedings of the 1st ACM
  SIGACT-SIGMOD Symposium on Principles of Database Systems. p. 164–169. PODS
  '82, Association for Computing Machinery, New York, NY, USA (1982).
  \doi{10.1145/588111.588138}, \url{https://doi.org/10.1145/588111.588138}

\bibitem{MaiMenSag79}
Maier, D., Mendelzon, A.O., Sagiv, Y.: Testing implications of data
  dependencies. ACM Trans. Database Syst.  \textbf{4}(4),  455–469 (dec
  1979). \doi{10.1145/320107.320115},
  \url{https://doi.org/10.1145/320107.320115}

\bibitem{Tho88}
Thomas, R.: The tree-width compactness theorem for hypergraphs (1988),
  \url{https://thomas.math.gatech.edu/PAP/twcpt.pdf}, unpublished

\bibitem{ThoBagMugRud12}
Thomazo, M., Baget, J.F., Mugnier, M.L., Rudolph, S.: A generic querying
  algorithm for greedy sets of existential rules. In: Brewka, G., Eiter, T.,
  McIlraith, S.A. (eds.) Proceedings of the 13th International Conference on
  Principles of Knowledge Representation and Reasoning (KR'12). {AAAI} (2012),
  \url{http://www.aaai.org/ocs/index.php/KR/KR12/paper/view/4542}

\end{thebibliography}

%=============
%Appendix
%=============
%\onecolumn
\appendix

%\section{Proofs for \sect~3}%~\ref{sec:fes-bts}}

%\input{app-1.tex}

\section{Proofs for \Cref{sec:greedy}}%~\ref{sec:greedy}}

\begin{customlem}{\ref{lem:permutation-lemma}}[Permutation Lemma] Let $\rs$ be a rule set with $\ins_{0}$ an instance. Suppose we have a (greedy) $\rs$-derivation of the following form:
$$
\ins_{0}, \ldots, (\rho_{i},h_{i},\ins_{i}), (\rho_{i+1},h_{i+1},\ins_{i+1}), \ldots, (\rho_{n},h_{n},\ins_{n})
$$
If $\rho_{i+1}$ does not depend on $\rho_{i}$, then the following is a (greedy) $\rs$-derivation too:
$$
\ins_{0}, \ldots, (\rho_{i+1},h_{i+1},\ins_{i-1} \cup (\ins_{i+1} \setminus \ins_{i})), (\rho_{i},h_{i},\ins_{i+1}), \ldots, (\rho_{n},h_{n},\ins_{n}).
$$
\end{customlem}

\begin{proof} By assumption, $\rho_{i+1}$ does not depend on $\rho_{i}$, implying $h_{i+1}(\body(\rho_{i+1})) \subseteq \ins_{i-1}$. Hence, we may apply $\rho_{i+1}$ with $h_{i+1}$ directly to $\ins_{i-1}$ yielding the instance $\ins_{i}' = \ins_{i-1} \cup (\ins_{i+1} \setminus \ins_{i})$. Since $h_{i}(\body(\rho_{i})) \subseteq \ins_{i-1} \subseteq \ins_{i}'$, we may apply $\rho_{i}$ directly after $\rho_{i+1}$ yielding the instance $\ins_{i+1}$. Moreover, if $\der$ is greedy, then (i) $h_{i+1}(\fr(\rho_{i+1})) \subseteq \nullset(\overline{h}_{j}(\head(\rho_{j}))) \cup \conset(\ins_{0},\rs) \cup \nullset(\ins_{0})$ for some $j < i+1$, and (ii) $h_{i}(\fr(\rho_{i})) \subseteq \nullset(\overline{h}_{k}(\head(\rho_{k}))) \cup \conset(\ins_{0},\rs) \cup \nullset(\ins_{0})$ for some $k < i$. As $\rho_{i+1}$ does not depend on $\rho_{i}$, it must be the case that $j \neq i$, and so, we have that $\der'$ will be greedy as well since (i) and (ii) will hold for $j,k < i$ in $\der'$. 
\end{proof}

\begin{lemma}\label{prop:gbts-is-wgbts}
Let $\rs$ be a rule set. If $\rs$ is $\gbts$, then $\rs$ is $\wgbts$.
\end{lemma}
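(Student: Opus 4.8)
The plan is to unfold the definitions of $\gbts$ and $\wgbts$ and observe that the inclusion holds essentially by inspection. Suppose $\rs$ is $\gbts$ and let $\db$ be an arbitrary database. To show $\rs$ is $\wgbts$, I must verify that whenever $\db \drel{\der}{\rs} \ins$ for some $\rs$-derivation $\der$, there exists a greedy $\rs$-derivation $\der'$ with $\db \drel{\der'}{\rs} \ins$. So I would fix such a derivation $\der$ witnessing $\db \drel{\der}{\rs} \ins$.

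The key observation is that the witnessing derivation $\der'$ demanded by $\wgbts$ can be taken to be $\der$ itself. Indeed, since $\rs$ is $\gbts$, the defining condition says that \emph{every} $\rs$-derivation starting from a database is greedy; in particular, the very derivation $\der$ for which $\db \drel{\der}{\rs} \ins$ holds is greedy. Hence setting $\der' := \der$ immediately gives a greedy $\rs$-derivation with $\db \drel{\der'}{\rs} \ins$, which is exactly what the $\wgbts$ condition requires.

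Since $\db$ and the derivation $\der$ were arbitrary, this establishes that $\rs$ satisfies the $\wgbts$ condition, completing the argument. There is no real obstacle here: the proof is a direct consequence of the fact that the $\gbts$ condition is the universal (``for all derivations'') strengthening of the existential (``for some derivation'') $\wgbts$ condition, so the implication is purely definitional. The only point worth stating explicitly is that the existential witness required by $\wgbts$ is supplied trivially by the given derivation, which the $\gbts$ hypothesis already guarantees to be greedy.
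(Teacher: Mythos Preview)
Your proposal is correct and follows exactly the same approach as the paper's proof: both simply observe that the given derivation $\der$ itself serves as the greedy witness $\der'$ required by the $\wgbts$ definition, since $\gbts$ guarantees every derivation from a database is greedy.
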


\begin{proof} Let $\db$ be a database and $\rs$ be a $\gbts$ rule set. If $\db \drel{\der}{\rs} \ins$, then $\der$ is greedy as $\rs$ is $\gbts$. Hence, there exists a greedy $\rs$-derivation (viz. $\der$) of $\ins$ from $\db$, showing that $\rs$ is $\wgbts$ as well.
\end{proof}

\begin{customthm}{\ref{lem:wgbts-contains-gbts}} $\rsii$ is $\wgbts$, but not $\gbts$. Thus, $\wgbts$ properly subsumes $\gbts$.
\end{customthm}

\begin{proof} We know that $\wgbts$ subsumes $\gbts$ by \lem~\ref{prop:gbts-is-wgbts} above, however, to show that $\wgbts$ properly subsumes $\gbts$, we prove that $\rsii$ is $\wgbts$, but not $\gbts$. Therefore, let $\db$ be an arbitrary database and $\ins$ be an instance such that there exists an $\rsii$-derivation $\der_{0}$ of $\ins$ from $\db$. We show by induction on the length of $\der_{0}$ that a greedy $\rsii$-derivation of $\ins$ from $\db$ can always be found.

\textit{Base case.} Any $\rsii$-derivation of an instance $\ins$ from $\db$ of length $n = 0$ or $n = 1$ is trivially greedy by \dfn~\ref{def:greedy-derivation}.

\textit{Inductive step.} Suppose our derivation $\der_{0}$ is of length $n+1$, that is
$$
\der_{0} = \db, (\rho_{1},h_{1},\ins_{1}), \ldots, (\rho_{n},h_{n},\ins_{n}), (\rho_{n+1},h_{n+1},\ins_{n+1})
$$
By IH, we have that a greedy $\rsii$-derivation $\der_{1}$ of $\ins_{n}$ exists; hence, let $\der_{2} = \der_{1}, (\rho_{n+1},h_{n+1},\ins_{n+1})$ and observe that $\der_{2}$ is a valid $\rsii$-derivation as we already know by the structure of $\der_{0}$ above that $\rho_{n+1}$ is triggered in $\ins_{n}$ with the homomorphism $h_{n+1}$. If the last rule $\rho_{n+1}$ applied in $\der_{2}$ is $\rho_{1}$, $\rho_{2}$, or $\rho_{3}$, then since no such rule depends on any rule in $\rsii$, it must be the case $h_{n+1}(\head(\rho_{n+1})) \subseteq \db$, showing that $\der_{2}$ is greedy. Therefore, let us assume that the last rule $\rho_{n+1}$ applied is $\rho_{4}$. Recall that $\body(\rho_{4}) = \{q(x,y,z), s(w,u,v)\}$, and observe that if $\rho_{4}$ is applied, then $h_{n+1}(q(x,y,z)),h_{n+1}(s(w,u,v)) \in \ins_{n}$. We make a case distinction depending on the how $h_{n+1}(q(x,y,z))$ and $h_{n+1}(s(w,u,v))$ entered into the derivation $\der_{1}$ below:

\begin{enumerate}

\item Suppose that $h_{n+1}(q(x,y,z)),h_{n+1}(s(w,u,v)) \in \db$. Then, $\der_{2}$ is greedy since
\begin{align*}
h_{n+1}(\fr(\rho_{n+1})) & \subseteq \conset(\db)\\
& \subseteq \nullset(\overline{h}_{j}(\head(\rho_{j}))) \cup \conset(\db,\rs)
\end{align*}
for any $j < n+1$.

\item Suppose that $h_{n+1}(q(x,y,z)) \in \db$ and $h_{n+1}(s(w,u,v))$ was introduced by an application of $\rho_{2}$ or $\rho_{3}$ at $j < n+1$ (i.e. $\rho_{j} \in \{\rho_{2},\rho_{3}\}$). Then, $\der_{2}$ is greedy since
\begin{align*}
h_{n+1}(\fr(\rho_{n+1})) & \subseteq \nullset(\overline{h}_{j}(\head(\rho_{j}))) \cup \conset(\db)\\
& \subseteq \nullset(\overline{h}_{j}(\head(\rho_{j}))) \cup \conset(\db,\rs).
\end{align*}

\item Suppose that $h_{n+1}(q(x,y,z))$ was introduced by an application of $\rho_{1}$ or $\rho_{3}$ at $j < n+1$ (i.e. $\rho_{j} \in \{\rho_{1},\rho_{3}\}$) and $h_{n+1}(s(w,u,v)) \in \db$. Then, $\der_{2}$ is greedy since
\begin{align*}
h_{n+1}(\fr(\rho_{n+1})) & \subseteq \nullset(\overline{h}_{j}(\head(\rho_{j}))) \cup \conset(\db)\\
& \subseteq \nullset(\overline{h}_{j}(\head(\rho_{j}))) \cup \conset(\db,\rs).
\end{align*}

\item Suppose that $h_{n+1}(q(x,y,z))$ and $h_{n+1}(s(w,u,v))$ were introduced by a single application of $\rho_{3}$ at $j < n+1$ (i.e. $\rho_{j}  = \rho_{3}$). Then, $\der_{2}$ is greedy since
\begin{align*}
h_{n+1}(\fr(\rho_{n+1})) & \subseteq \nullset(\overline{h}_{j}(\head(\rho_{3}))) \cup \conset(\db)\\
& \subseteq \nullset(\overline{h}_{j}(\head(\rho_{j}))) \cup \conset(\db,\rs).
\end{align*}

\item Suppose that $h_{n+1}(q(x,y,z))$ was introduced by an application of $\rho_{j} \in \{\rho_{1},\rho_{3}\}$ and $h_{n+1}(s(w,u,v))$ was introduced by an application of $\rho_{k} \in \{\rho_{2},\rho_{3}\}$ with $j,k < n+1$. We assume that if $\rho_{3}$ introduced both $h_{n+1}(q(x,y,z))$ and $h_{n+1}(s(w,u,v))$, then both applications of $\rho_{3}$ are distinct, and we assume w.l.o.g. that $j < k$. Since $\rho_{k}$ only depends on the database $\db$, we may repeatedly apply the permutation lemma (\lem~\ref{lem:permutation-lemma}) to $\der_{2}$, permuting the application of $\rho_{k}$ earlier in the derivation until we reach the application of $\rho_{j}$, yielding:
$$
\der_{3} = \db, \ldots, (\rho_{j},h_{j},\ins_{j}), (\rho_{k},h_{k},\ins_{j+1}'), \ldots, (\rho_{n+1},h_{n+1},\ins_{n+1})
$$
where $\ins_{j+1}' = \ins_{j} \cup (\ins_{k} \setminus \ins_{k-1})$. By the permutation lemma, we know that the portion of $\der_{3}'$ up to and including $(\rho_{n},h_{n},\ins_{n})$ is greedy. We have four cases to consider, and in each case, we show how to transform $\der_{3}$ into a greedy derivation $\der_{3}'$ of the same conclusion.

\begin{enumerate}

\item If $\rho_{j} = \rho_{1}$ and $\rho_{k} = \rho_{2}$, then replace $(\rho_{j},h_{j},\ins_{j}), (\rho_{k},h_{k},\ins_{j+1}')$ in $\der_{3}$ with $(\rho_{3},h',\ins_{j+1}')$ where $h'(p(x)) = h_{j}(\body(\rho_{j}))$, $h'(r(y)) = h_{j}(\body(\rho_{k}))$, and $\overline{h'}(\head(\rho_{3})) = \overline{h}_{j}(q(x,y,z)) \land \overline{h}_{k}(s(x,y,z))$. This gives the derivation:
$$
\der_{3}' = \db, \ldots, (\rho_{3},h',\ins_{j+1}'), \ldots, (\rho_{n+1},h_{n+1},\ins_{n+1})
$$
One can confirm that $\der_{3}'$ is indeed a valid derivation as $h'(\body(\rho_{3})) \in \db$, showing that $\rho_{3}$ may be applied where it is. Also,
$$
\ins_{j+1}' =  \ins_{j-1} \cup \{\overline{h}_{j}(q(x,y,z)),\overline{h}_{k}(s(x,y,z))\} = \ins_{j-1} \cup \{\overline{h'}(\head(\rho_{3}))\},
$$
showing that $\ins_{j+1}'$ is indeed derived by applying $\rho_{3}$, and for any application of a rule $\rho_{m}$ with $j < m$ (i.e. for any application of a rule occurring after the application of $\rho_{3}$ displayed in $\der_{3}'$ above) if it previously depended on $\rho_{j}$ or $\rho_{k}$, it will now depend on the above application of $\rho_{3}$, which introduces the same atoms as $\rho_{j}$ and $\rho_{k}$. This also shows that  the portion of $\der_{3}'$ up to and including $(\rho_{n},h_{n},\ins_{n})$ is greedy. Last, it follows that $\rho_{n+1} = \rho_{4}$ now depends on the above application of $\rho_{3}$, showing that
\begin{align*}
h_{n+1}(\fr(\rho_{n+1})) & \subseteq \nullset(\overline{h'}(\head(\rho_{3}))) \cup \conset(\db,\rs),
\end{align*}
and hence, $\der_{3}'$ is greedy.

\item If $\rho_{j} = \rho_{1}$ and $\rho_{k} = \rho_{3}$, then replace $(\rho_{j},h_{j},\ins_{j})$ in $\der_{3}$ with $(\rho_{3},h',\ins_{j}')$ such that $h'(p(x)) = h_{j}(p(x))$, $h'(r(y)) = h_{k}(r(x))$, $\overline{h'}(\head(\rho_{3})) = \overline{h}_{j}(q(x,y,z)) \land \overline{h}_{k}(s(x,y,z))$, and
$$
\ins_{j}' = \ins_{j-1} \cup \{\overline{h}_{j}(q(x,y,z)), \overline{h}_{k}(s(x,y,z))\} = \ins_{j+1}'.
$$
Thus, we have the derivation:
$$
\der_{3}' = \db, \ldots, (\rho_{3},h',\ins_{j}'), (\rho_{k},h_{k},\ins_{j+1}'), \ldots, (\rho_{n+1},h_{n+1},\ins_{n+1})
$$
It is straightforward to confirm that $\der_{3}'$ is indeed a valid derivation, and furthermore, for any rule $\rho_{m}$ with $j < m < n+1$, if it depended on $\rho_{j}$, it will now depend on the above application of $\rho_{3}$, showing that for any such $m$ we have
\begin{align*}
h_{m}(\fr(\rho_{m})) & \subseteq \nullset(\overline{h}_{j}(\head(\rho_{1}))) \cup \conset(\db,\rs)\\
& \subseteq \nullset(\overline{h'}(\head(\rho_{3}))) \cup \conset(\db,\rs).
\end{align*}
Moreover, $\rho_{n+1} = \rho_{4}$ can be seen to depend on the application of $\rho_{3}$ displayed in $\der_{3}'$ above, that is to say
\begin{align*}
h_{n+1}(\fr(\rho_{n+1})) & \subseteq \nullset(\overline{h'}(\head(\rho_{3}))) \cup \conset(\db,\rs).
\end{align*}
Hence, it follows that $\der_{3}'$ is greedy.

\item If $\rho_{j} = \rho_{3}$ and $\rho_{k} = \rho_{2}$, then replace $(\rho_{k},h_{k},\ins_{k})$ in $\der_{3}$ with $(\rho_{3},h',\ins_{j+1}')$ such that $h'(p(x)) = h_{j}(p(x))$, $h'(r(y)) = h_{k}(r(x))$, $\overline{h'}(\head(\rho_{3})) = \overline{h}_{j}(q(x,y,z)) \land \overline{h}_{k}(s(x,y,z))$, and
$$
\ins_{j+1}' = \ins_{j} \cup \{\overline{h}_{j}(q(x,y,z)), \overline{h}_{k}(s(x,y,z))\}.
$$
Thus, we have the derivation:
$$
\der_{3}' = \db, \ldots, (\rho_{j},h_{j},\ins_{j}), (\rho_{3},h',\ins_{j+1}'), \ldots, (\rho_{n+1},h_{n+1},\ins_{n+1})
$$
It is straightforward to confirm that $\der_{3}'$ is indeed a valid derivation, and furthermore, for any rule $\rho_{m}$ with $k < m < n+1$, if it depended on $\rho_{k}$, it will now depend on the above application of $\rho_{3}$, showing that for any such $m$ we have
\begin{align*}
h_{m}(\fr(\rho_{m})) & \subseteq \nullset(\overline{h}_{k}(\head(\rho_{2}))) \cup \conset(\db,\rs)\\
& \subseteq \nullset(\overline{h'}(\head(\rho_{3}))) \cup \conset(\db,\rs).
\end{align*}
Additionally, $\rho_{n+1} = \rho_{4}$ can be seen to depend on the application of $\rho_{3}$ displayed in $\der_{3}'$ above, that is to say
\begin{align*}
h_{n+1}(\fr(\rho_{n+1})) & \subseteq \nullset(\overline{h'}(\head(\rho_{3}))) \cup \conset(\db,\rs).
\end{align*}
Hence, it follows that $\der_{3}'$ is greedy.

\item If $\rho_{j} = \rho_{3}$ and $\rho_{k} = \rho_{3}$, then add $(\rho_{3},h',\ins_{j+1}')$ after the two inferences $(\rho_{j},h_{j},\ins_{j}),(\rho_{k},h_{k},\ins_{j+1}')$ in $\der_{3}$ where $h'(p(x)) = h_{j}(p(x))$, $h'(r(y)) = h_{k}(r(y))$, and $\overline{h'}(\head(\rho_{3})) = \overline{h}_{j}(q(x,y,z)) \land \overline{h}_{k}(s(x,y,z))$. Thus, we have the derivation: $\der_{3}' =$
$$
\db, \ldots, (\rho_{j},h_{j},\ins_{j}), (\rho_{k},h_{k},\ins_{j+1}'), (\rho_{3},h',\ins_{j+1}'), \ldots, (\rho_{n+1},h_{n+1},\ins_{n+1})
$$
It is straightforward to confirm that $\der_{3}'$ is indeed a valid derivation. Also, observe
$$
h'(\fr(\rho_{3}))  \subseteq \conset(\db,\rs) \subseteq \nullset(\overline{h}_{l}(\head(\rho_{l}))) \cup \conset(\db,\rs).
$$
for any $l \leq k$. Moreover, for every rule $\rho_{m}$ with $k < m < n+1$, if 
\begin{align*}
h_{m}(\fr(\rho_{m})) & \subseteq \nullset(\overline{h}_{m'}(\head(\rho_{m'}))) \cup \conset(\db,\rs).
\end{align*}
held in $\der_{3}$ with $m' < m$, then it will continue to hold in $\der_{3}'$. Last, $\rho_{n+1} = \rho_{4}$ can be seen to depend on the application of $\rho_{3}$ displayed in $\der_{3}'$ above, that is to say
\begin{align*}
h_{n+1}(\fr(\rho_{n+1})) & \subseteq \nullset(\overline{h'}(\head(\rho_{3}))) \cup \conset(\db,\rs).
\end{align*}
Hence, it follows that $\der_{3}'$ is greedy, and concludes our proof that $\rsii$ is a $\wgbts$, but is not a $\gbts$.
\end{enumerate}
\end{enumerate}
\end{proof}

\section{Proofs for \Cref{sec:derivation-graphs}}%~\ref{sec:derivation-graphs}}

\begin{lemma}\label{lem:basic-dg-properties}
Let $\db$ be a database and $\rs$ a rule set. If $\db \drel{\rs}{\der} \ins$ with $\os(\dg) = (\nodes, \edges, \at, \lbl)$ a (potentially reduced) derivation graph and $\os$ a reduction sequence, then $\os(\dg)$ has the following properties:
\begin{enumerate}

%\item for the initial node $X_{0} \in \nodes$, $\at(X_{0}) = \db$;

\item for each non-initial node $X_{\vn} \in \nodes$, there exists a $\rho \in \rs$ with $\rho = \phi(\vec{x},\vec{y}) \rightarrow \exists \vec{z} \psi(\vec{y},\vec{z})$ and a homomorphism $\overline{h}$ such that $\at(X_{\vn}) = \overline{h}(\psi(\vec{y},\vec{z}))$;

\item if $(X_{\vn},X_{\vm}) \in \edges$, then $\vn \ilt \vm$.

\end{enumerate}
\end{lemma}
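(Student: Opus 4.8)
The plan is to prove both properties around a single structural observation: every reduction operation in $\mathsf{RO}$ alters only the arc set $\edges$ and the labelling $\lbl$, and never the node labelling $\at$ (inspect Definition~\ref{def:reduction-operations}; $\AR$, $\TR$, and $\CR$ all return a graph with the same $\at$). Consequently, property~1 is completely insensitive to reductions and need only be verified on the \emph{original} derivation graph $\dg$ (the case $\os = \varepsilon$), whereas property~2 will follow by induction on the length of $\os$, checking that each operation preserves the invariant ``every arc points from a strictly lower to a strictly higher index''.

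\textbf{Property 1.} Since $\at$ is untouched by every reduction, it suffices to treat $\os = \varepsilon$. Let $X_i$ be a non-initial node, so $i > 0$ and $\at(X_i) = \ins_i \setminus \ins_{i-1}$ by Definition~\ref{def:derivation-graph}. The $i$-th step of $\der$ applies a rule $\rho_i = \phi(\vec{x},\vec{y}) \imp \exists \vec{z}\, \psi(\vec{y},\vec{z}) \in \rs$ via a homomorphism $h_i$, and by the chase step $\ins_i = \ch(\ins_{i-1},\rho_i,h_i) = \ins_{i-1} \cup \overline{h}_i(\psi(\vec{y},\vec{z}))$. Taking $\rho = \rho_i$ and $\overline{h} = \overline{h}_i$ yields $\at(X_i) = \overline{h}_i(\psi(\vec{y},\vec{z}))$; here I use that $\overline{h}_i$ sends every existential variable to a \emph{fresh} null, so that the atoms newly produced at step $i$ are exactly those contributed by the head image.

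\textbf{Property 2.} For the base case $\os = \varepsilon$, recall that the instances of a derivation grow monotonically, $\ins_0 \subseteq \ins_1 \subseteq \cdots \subseteq \ins_n$. Suppose $(X_i, X_j) \in \edges$; by Definition~\ref{def:derivation-graph} this arc exists because some frontier atom $p(\vec{t'})$ of $\rho_j$ satisfies $h_j(p(\vec{t'})) = p(\vec{t}) \in \at(X_i)$. As $h_j$ is the trigger homomorphism from $\body(\rho_j)$ into $\ins_{j-1}$, the atom $p(\vec{t'})$ lies in the body, whence $p(\vec{t}) = h_j(p(\vec{t'})) \in \ins_{j-1}$; and since $p(\vec{t}) \in \at(X_i) = \ins_i \setminus \ins_{i-1}$, the atom $p(\vec{t})$ first appears at step $i$. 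Monotonicity then forces $i \leq j-1$, i.e.\ $i < j$. For the inductive step, $\AR$ only deletes an arc and $\TR$ only edits a label, so neither introduces a new arc and the invariant is preserved; $\CR^{[i,j,k,\ell]}$ deletes the two arcs $(X_i, X_k), (X_j, X_k)$ and inserts the single arc $(X_\ell, X_k)$, whose applicability condition explicitly requires $\ell < k$, so the inserted arc respects the invariant as well.

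\textbf{Main obstacle.} Neither property is deep; the points demanding the most care are (i) the exact equality in property~1, which rests on reading off from the chase step that $\at(X_i)$ is precisely the head image $\overline{h}_i(\psi)$ rather than a proper subset, the fresh-null discipline for existential variables being what secures this; and (ii) correctly unpacking the arc-generation clause of Definition~\ref{def:derivation-graph} in property~2 so as to locate $p(\vec{t})$ simultaneously in $\ins_{j-1}$ (it triggers $\rho_j$) and in $\ins_i \setminus \ins_{i-1}$ (it is born at step $i$). Once these are in place, the monotonicity argument and the $\CR$ bookkeeping are routine.
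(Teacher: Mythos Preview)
Your proposal is correct and follows the same approach as the paper, which dispatches the lemma in a single sentence (``Both claims follow from the definition of a derivation graph along with the fact that the reduction operations only affect arcs and labels''). You have simply spelled out the details: that $\at$ is invariant under all reductions (settling property~1 at the base case), and that the only arc-creating operation $\CR$ inserts an arc $(X_\ell,X_k)$ whose applicability condition already enforces $\ell<k$ (preserving property~2 through the induction).
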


\begin{proof} Both claims follow from the definition of a derivation graph along with the fact that the reduction operations only affect arcs and labels.
\end{proof}

\begin{definition}[$x$-Generative, Source Node]
Let $\db$ be a database, $\rs$ be a rule set, $\db \drel{\der}{\rs} \ins$, and $\os$ be a reduction sequence applicable to $\dg = (\nodes,\edges,\at,\lbl)$. We define a node in $\os(\dg) = (\nodes',\edges',\at',\lbl')$ to be \emph{$x$-generative} with $x \in \nonc(\ins)$ \iffi for every node $X_{k} \in \nodes'$, if $x \in \nonc(X_{k})$, then $n \leq k$. We define a node $X \in \nodes'$ to be a \emph{source node} \iffi no node $Y \in \nodes'$ exists such that $(Y,X) \in \edges'$, and we define $X$ to be \emph{non-source node} otherwise.
\end{definition}

%Could also potentially show that if a node X contains a null x, then "a parent node Y always exists containing x such that L(X,Y) contains x as well". This should be sufficient for the (ar) and (cr) cases below in Lemma 25. This could be a better result actually because it seems that the condition in the lemma below is not needed actually.
\begin{lemma}\label{lem:existing-parent-node}
Let $\db$ be a database, $\rs$ be a rule set, $\db \drel{\der}{\rs} \ins$, and $\os$ be a reduction sequence with $\os(\dg) = (\nodes,\edges,\at,\lbl)$. For any nodes $X_{i}, X_{j} \in \nodes$, if $x \in \nonc(X_{i}) \cap \nonc(X_{j})$, $(X_{i},X_{j}) \in \edges$, and $x \not\in \lbl(X_{i},X_{j})$, then there exists a node $X_{m} \in \nodes$ such that $x \in \nonc(X_{m})$, $(X_{m},X_{j}) \in \edges$, and $x \in \lbl(X_{m},X_{j})$.
\end{lemma}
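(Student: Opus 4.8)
The plan is to reduce the statement to a single invariant about (possibly reduced) derivation graphs and to prove that invariant by induction on the length of the reduction sequence $\os$. Since reductions leave the atom-labelling $\at$ untouched (they only affect arcs and labels), the sets $\nonc(X_n)$ and the notion of a \emph{frontier null} of $\rho_j$ (i.e. a term in $h_j(\fr(\rho_j)) \setminus C$, with $C = \conset(\db,\rs)$) are reduction-invariant. The invariant I maintain is:
\begin{center}
(*) for every node $X_j$ and every $x \in h_j(\fr(\rho_j)) \setminus C$, there is an arc $(X_m,X_j)$ with $x \in \lbl(X_m,X_j)$;
\end{center}
together with the auxiliary invariant that $\lbl(X_m,X_n) \subseteq \nonc(X_m)$ for every arc $(X_m,X_n)$. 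Once (*) is in hand the lemma is immediate: since $(X_i,X_j) \in \edges$, \lem~\ref{lem:basic-dg-properties}(2) gives $i < j$, so $x \in \nonc(X_i)$ already occurs in $\ins_{j-1}$; as $x$ also occurs in $\at(X_j) = \overline{h}_j(\head(\rho_j))$ and the existential nulls minted at step $j$ are fresh (hence absent from $\ins_{j-1}$), $x$ must be a frontier null of $\rho_j$. Invariant (*) then supplies an arc $(X_m,X_j)$ carrying $x$, and the auxiliary invariant gives $x \in \nonc(X_m)$; since $x \notin \lbl(X_i,X_j)$ by hypothesis, this $X_m$ is automatically distinct from $X_i$.

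For the base case ($\os$ empty) I argue directly on $\dg$. Given a frontier null $x$ of $\rho_j$, I pick a frontier variable $y$ with $h_j(y) = x$ and a body frontier atom $p(\vec{t'})$ of $\rho_j$ containing $y$. Because $\rho_j$ is triggered in $\ins_{j-1}$ via $h_j$, the atom $h_j(p(\vec{t'}))$ lies in $\ins_{j-1}$, hence in some node $\at(X_m)$ with $m < j$; by the arc clause of \dfn~\ref{def:derivation-graph} this forces $(X_m,X_j) \in \edges$ with $x = h_j(y) \in \lbl(X_m,X_j)$, and since the label terms are drawn from $\at(X_m)$ we also get $\lbl(X_m,X_j) \subseteq \nonc(X_m)$.

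For the inductive step I check the three operations of \dfn~\ref{def:reduction-operations} against a graph already satisfying both invariants. $\AR$ only deletes an empty-labelled arc, so it removes no witness for any $x$. $\TR^{[a,b,c,t]}$ deletes $t$ from $\lbl(X_b,X_c)$, but only when $t \in \lbl(X_a,X_c)$ with $X_a \neq X_b$, so the arc $(X_a,X_c)$ still witnesses $t$ at $X_c$. The delicate case, and the main obstacle, is $\CR^{[a,b,c,\ell]}$, which deletes the two arcs $(X_a,X_c),(X_b,X_c)$ and replaces them by $(X_\ell,X_c)$ labelled $\lbl(X_a,X_c) \cup \lbl(X_b,X_c)$: here the unique witness for some $x$ at $X_c$ could be exactly one of the two deleted arcs. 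The point is that the new arc inherits the \emph{union} of the two labels, so any $x$ previously carried by $(X_a,X_c)$ or $(X_b,X_c)$ is now carried by $(X_\ell,X_c)$; and the $\CR$ side condition $\lbl(X_a,X_c)\cup\lbl(X_b,X_c)\subseteq \termset(X_\ell)$ is precisely what guarantees that the auxiliary invariant $\lbl(X_\ell,X_c)\subseteq \nonc(X_\ell)$ survives. I expect verifying that $\CR$ never strands a frontier null to be the only genuinely load-bearing part of the argument, as the remaining operations leave the incoming labels at a node untouched except for removing provably redundant copies.
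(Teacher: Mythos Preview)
The paper states this lemma without proof, so there is nothing to compare against directly. Your argument is correct. In fact, your invariant (*) is the $\subseteq$-direction of \lem~\ref{lem:derivation-graph-reduction-properties}(1) and your auxiliary invariant is precisely \lem~\ref{lem:derivation-graph-reduction-properties}(2); the paper establishes both of these later and independently of the present lemma, so one could equivalently obtain \lem~\ref{lem:existing-parent-node} as an immediate consequence of \lem~\ref{lem:derivation-graph-reduction-properties}(1)--(2) together with your opening observation that any null in $\nonc(X_i)\cap\nonc(X_j)$ with $i<j$ is necessarily a frontier null of $\rho_j$. One minor point worth recording: in the $\CR$ case you handle the deleted arcs $(X_a,X_c),(X_b,X_c)$, but not the sub-case where $(X_\ell,X_c)$ was already present---by \dfn~\ref{def:reduction-operations} its previous label is overwritten, which could in principle drop a witness. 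This same corner case is glossed over in the paper's own proof of \lem~\ref{lem:derivation-graph-reduction-properties}(1), and the intended reading appears to be that $X_\ell$ is not an existing parent of $X_c$ (or that the label is augmented rather than replaced), under which your proof is complete.
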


%note that the above lemma holds because (1) derivation graphs are finite, and (2) we can always find another parent node of a node that has x in its label.

\begin{lemma}\label{lem:connect-t-gen}
Let $\db$ be a database and $\rs$ be a rule set. If $\db \drel{\der}{\rs} \ins$, then for any reduction sequence $\os$, $\os(\dg) = (\nodes,\edges,\at,\lbl)$ satisfies the following two conditions:
\begin{enumerate}

\item if $x \in \nonc(\ins)$, then there exists a unique $x$-generative node $X \in \nodes$;

\item if $X_{n}$ is the $x$-generative node in $\os(\dg)$, then for every $X_{k} \in \nodes$ such that $x \in \nonc(X_{k})$, there is a directed path from $X_{n}$ to $X_{k}$ in $\os(\dg)$ such that for every node $X_{\ell}$ along the path, $\ell \leq k$ and $x \in \nonc(X_{\ell})$.

\end{enumerate}
\end{lemma}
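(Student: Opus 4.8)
The plan is to prove the two claims separately, dispatching the first quickly and concentrating effort on the second. For the first claim, I would observe that no reduction operation alters the node set $\nodes$ or the node-labeling $\at$, so the set $\nonc(X_i) = \termset(\at(X_i)) \setminus \cons$ is identical in $\dg$ and in every reduced graph $\os(\dg)$. Consequently, for a fixed $x \in \nonc(\ins)$ the index set $S_x = \{\, i : x \in \nonc(X_i) \,\}$ is independent of $\os$ and nonempty (by decomposition property~2 of \Cref{lem:decomposition-properties}), and the $x$-generative node is exactly $X_n$ with $n = \min S_x$. Uniqueness is immediate, as distinct nodes carry distinct indices.

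For the second claim I would extract a single invariant and show it holds of $\os(\dg)$ for \emph{every} reduction sequence $\os$: \emph{for each $x \in \nonc(\ins)$ and each node $X_k$ with $x \in \nonc(X_k)$ and $k \neq n$ (where $X_n$ is the $x$-generative node), there is an arc $(X_m,X_k) \in \edges$ with $m < k$ and $x \in \nonc(X_m)$}; call such an $X_m$ a \emph{witness} for $X_k$. Granting this invariant, the desired path is obtained by tracing witnesses: starting from $X_k$, repeatedly pass to a witness of strictly smaller index that still contains $x$. Since every index in this chain lies in $S_x$ and is therefore at least $n$, while the indices strictly decrease as long as they differ from $n$, the chain necessarily terminates at $X_n$; reversing it yields a directed path $X_n \to \cdots \to X_k$ all of whose nodes contain $x$ and have index at most $k$, which is precisely what the claim requires.

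It remains to establish the invariant, which I would do by induction on the length of $\os$. For the base case $\os = \varepsilon$, consider $X_k$ with $x \in \nonc(X_k)$ and $k \neq n$. Since $x$ is not freshly introduced at step $k$ (a fresh null cannot occur in any $\ins_{k-1}$, which would force $k = \min S_x = n$), it must be that $x = h_k(y)$ for some frontier variable $y$ of $\rho_k$; hence $x$ occurs in the body image $h_k(\body(\rho_k)) \subseteq \ins_{k-1}$, the atom carrying it lies in a unique node $X_m$ with $m < k$, and the arc definition yields $(X_m,X_k) \in \edges$ with $x \in \nonc(X_m)$, so $X_m$ is a witness. In the inductive step, $\TR$ alters only a label and leaves $\nodes$, $\edges$, $\at$ untouched, so the invariant is preserved trivially; for $\AR$, which deletes an arc $(X_i,X_j)$ with empty label, only $X_j$ can lose its witness, and if that witness was $X_i$ then \Cref{lem:existing-parent-node} (applicable to the pre-operation graph since $x \in \nonc(X_i) \cap \nonc(X_j)$ and $x \notin \lbl(X_i,X_j) = \emptyset$) supplies a parent $X_{m'} \neq X_i$ of $X_j$ with $x \in \nonc(X_{m'})$; this arc survives the deletion and $m' < j$ by \Cref{lem:basic-dg-properties}.

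The main obstacle is the cycle-removal step $\CR^{[i,j,k,\ell]}$, which deletes both $(X_i,X_k)$ and $(X_j,X_k)$ and inserts $(X_\ell,X_k)$ with $\ell < k$ and label $\lbl(X_i,X_k) \cup \lbl(X_j,X_k)$. Only the incoming arcs of $X_k$ change, so I would only need to re-establish the invariant for $X_k$, and only when its former witness was one of the two deleted parents, say $X_i$. Here I would split on whether $x \in \lbl(X_i,X_k)$: if so, then $x \in \lbl(X_i,X_k) \cup \lbl(X_j,X_k) \subseteq \termset(X_\ell)$, so $x \in \nonc(X_\ell)$ and the newly inserted arc $(X_\ell,X_k)$ with $\ell < k$ is a witness; if not, \Cref{lem:existing-parent-node} yields a parent $X_{m'}$ of $X_k$ with $x \in \lbl(X_{m'},X_k)$, and either $X_{m'} \in \{X_i,X_j\}$ --- which again forces $x$ into $\lbl(X_i,X_k) \cup \lbl(X_j,X_k) \subseteq \termset(X_\ell)$, so that $(X_\ell,X_k)$ is a witness --- or $X_{m'} \notin \{X_i,X_j\}$, in which case its arc survives $\CR$ and is a witness directly (with $m' < k$ by \Cref{lem:basic-dg-properties}). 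The delicate point is precisely that the side condition $\lbl(X_i,X_k) \cup \lbl(X_j,X_k) \subseteq \termset(X_\ell)$ guaranteeing applicability of $\CR$ is exactly what allows the new parent $X_\ell$ to inherit responsibility for $x$ from the two deleted arcs.
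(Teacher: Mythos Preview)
Your proof is correct and reaches the same conclusion as the paper, but the organization differs in a meaningful way. The paper argues statement~2 by a lexicographic double induction on $(|\der|,|\os|)$: in the base case $|\os|=0$ it handles the unreduced derivation graph by inducting on derivation length (finding a parent $X_m$ of $X_k$ with $x\in\lbl(X_m,X_k)$ and invoking the IH to get a path $X_n\to X_m$), and in the inductive step it treats each reduction operation by assuming a full path exists before the operation and repairing it afterward, using \Cref{lem:existing-parent-node} exactly where you do. You instead isolate a purely local invariant (every non-generative $X_k$ containing $x$ has an in-arc from some $X_m$ with $m<k$ and $x\in\nonc(X_m)$), prove it by a single induction on $|\os|$, and then obtain the path by repeatedly following witnesses. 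Your decomposition is arguably cleaner: it separates the graph-theoretic path-chasing from the preservation argument, avoids the second induction parameter, and makes the role of the $\CR$ side condition $\lbl(X_i,X_k)\cup\lbl(X_j,X_k)\subseteq\termset(X_\ell)$ completely transparent. The paper's route, on the other hand, keeps the argument closer to the statement being proved and does not require formulating an auxiliary invariant. Both rely on \Cref{lem:existing-parent-node} and \Cref{lem:basic-dg-properties} in the same places.
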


\begin{proof} Statement 1 is evident as there must be a first rule application in $\der$ that introduces the null $x$. Let $\os(\dg) = (\nodes,\edges,\at,\lbl)$. We argue statement 2 by induction on the lexicographic ordering of pairs $(|\der|,|\os|)$, where $|\der|$ is the length of the derivation and $|\os|$ is the length of the reduction sequence. Suppose $X_{n}$ is the $x$-generative node in $\os(\dg)$ and let $X_{k} \in \nodes$ such that $x \in \nonc(X_{k})$. We aim to show that a directed path exists from $X_{n}$ to $X_{k}$ such that for every node $X_{\ell}$ along the path, $\ell \leq k$ and $x \in \nonc(X_{\ell})$.

\textit{Base case.} If $|\der| = 0$, meaning $\der = \db$, then the result trivially follows. If $|\os| = 0$, then $\os(\dg) = \dg$ with $\dg = (\nodes,\edges,\at,\lbl)$. If $X_{k} = X_{n}$, then the claim trivially holds. However, if $X_{k} \neq X_{n}$, then let us consider the derivation $\db \drel{\der'}{\rs} \ins_{k-1}, (\rho,h,\ins_{k})$, where the application of $\rho$ produces the node $X_{k}$. Since $x \in \nonc(X_{n})$ and $x \in \nonc(X_{k})$, we know there exists a node $X_{m}$ such that $x \in \nonc(X_{m})$, $(X_{m},X_{k}) \in \edges$, and $x \in \lbl(X_{m},X_{k})$. By IH, there is a directed path from $X_{n}$ to $X_{m}$ such that for every $X_{\ell}$ along the path $\ell \leq m$ and $x \in \nonc(X_{\ell})$. Therefore, since $(X_{m},X_{k}) \in \edges$, we know that such a directed path from $X_{n}$ to $X_{k}$ of the required shape exists as well.

\textit{Inductive step.} Let $\ro \in \{\TR, \AR, \CR\}$ with $\os =\ro \os'$. Let $\os'(\dg) = (\nodes',\edges',\at',\lbl')$. We consider the cases where $\ro$ is either $\AR$ or $\CR$ as the case when $\ro$ is $\TR$ is trivial as all paths are preserved after the reduction operation is applied.

$\AR$. Suppose that an arc $(X_{i},X_{j}) \in \edges'$ exists such that $\lbl'(X_{i},X_{j}) = \emptyset$, which is removed by applying $\AR$ to $\os'(\dg)$. By IH, we know that a directed path from $X_{n}$ to $X_{k}$ exists in $\os'(\dg)$ such that for every node $X_{\ell}$ along the path $\ell \leq k$ and $x \in \nonc(X_{\ell})$. Let us suppose that $(X_{i},X_{j})$ occurs along this path, since otherwise, the result trivially follows. Then, we know that $x \in \nonc(X_{i})$ and $x \in \nonc(X_{j})$. Since $\lbl'(X_{i},X_{j}) = \emptyset$, we know $x \not\in \lbl'(X_{i},X_{j})$, and therefore, by \lem~\ref{lem:existing-parent-node}, some $X_{m} \neq X_{i}$ exists such that $X_{m} \in \nodes'$, $x \in \nonc(X_{m})$, $(X_{m},X_{j}) \in \edges'$, and $x \in \lbl'(X_{m},X_{j})$. By IH, there exists a directed path from $X_{n}$ to $X_{m}$ such that for every node $X_{\ell}$ along the path $\ell \leq m$ and $x \in \nonc(X_{\ell})$. After $\AR$ is applied, this path will still be present, and so, a path of the desired shape will exist from $X_{n}$ to $X_{k}$.

$\CR$. Suppose that $(X_{i},X_{m}), (X_{j},X_{m}) \in \edges'$, and there exists a node $X_{\ell}$ such that $\ell < m$ and $\lbl'(X_{i},X_{m}) \cup \lbl'(X_{j},X_{m}) \subseteq \termset(X_{\ell})$.  After applying $\CR$, we suppose that $(X_{i},X_{m}), (X_{j},X_{m})$ are removed from the set of arcs and $(X_{\ell},X_{m})$ is added such that $\lbl(X_{\ell},X_{m}) = \lbl'(X_{i},X_{m}) \cup \lbl'(X_{j},X_{m})$. By IH, a directed path from $X_{n}$ to $X_{k}$ exists in $\os'(\dg)$ such that for every node $X_{u}$ along the path $u \leq k$ and $x \in \nonc(X_{u})$. We assume w.l.o.g. that $(X_{i},X_{m})$ occurs along this path, since the other cases are trivial or similar. If $x \in \lbl'(X_{i},X_{m})$, then $x \in \nonc(X_{\ell})$ by assumption, implying that a directed path exists from $X_{n}$ to $X_{\ell}$ (because $X_{n}$ is assumed $x$-generative) of the required form. Hence, after applying $\CR$, a directed path of the required form will exist consisting of the path from $X_{n}$ to $X_{\ell}$, the arc $(X_{\ell},X_{m})$, and the path from $X_{m}$ to $X_{k}$. However, if $x \not\in \lbl'(X_{i},X_{m})$, then as in the $\AR$ case above, there exists some $X_{v} \neq X_{i}$ such that $X_{v} \in \nodes'$, $x \in \nonc(X_{v})$, $(X_{v},X_{m}) \in \edges'$, and $x \in \lbl'(X_{v},X_{m})$ (by \lem~\ref{lem:existing-parent-node}). By an argument similar to the $\AR$ case, we find that a directed path of the required form exists from $X_{n}$ to $X_{k}$ in $\os(\dg)$.
\end{proof}

\begin{customlem}{\ref{lem:reduct-ops-preserve-decomp-props}}
Let $\db$ be a database and $\rs$ be a rule set. If $\db \drel{\der}{\rs} \ins$, then for any reduction sequence $\os$, $\os(\dg) = (\nodes,\edges,\at,\lbl)$ satisfies the decomposition properties 1-4 in \lem~\ref{lem:decomposition-properties}, i.e. the following four conditions:
\begin{enumerate}
\item $\bigcup_{X_{\vn} \in \nodes} \termset(X_{\vn}) = \termset(\ins)$;
\item For each $p(\vec{t}) \in \ins$, there is an $X_{\vn} \in \nodes$ such that $p(\vec{t}) \in \at(X_{\vn})$;
\item For each term $x \in \nonc(\ins)$, the subgraph of $\os(\dg)$ induced by the nodes $X_{\vn}$ such that $x \in \nonc(X_{\vn})$ is connected;
\item For each $X_{\vn} \in \nodes$ the size of $\termset(X_{\vn})$ is bounded by an integer that only depends on the size of $(\db,\rs)$, \textit{viz.} $max\{|\termset(\db)|,|\termset(\head(\rho_{i}))|_{\rho_{i} \in \rs}\} + |\cons|$.
\end{enumerate}
\end{customlem}

\begin{proof} It is straightforward to confirm properties 1, 2, and 4. Property 3 follows from \lem~\ref{lem:connect-t-gen}.
\end{proof}

\begin{lemma}\label{lem:compactness} Let $\rs$ be a rule set. If for every database $\db$, there exists an $n \in \mathbb{N}$ such that for every $k \in \mathbb{N}$, $\tw{\chk{k}{\db,\rs}} \leq n$, then $\rs$ is $\sbts$.
\end{lemma}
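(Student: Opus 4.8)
The plan is to exploit the fact, recalled in the preliminaries, that the chase $\ch_{\infty}(\db,\rs)$ is a universal model of $(\db,\rs)$, and to show that under the stated hypothesis this universal model has finite treewidth; since $\db$ is arbitrary, this yields $\sbts$ membership. First I would fix a database $\db$ and let $n$ be the bound supplied by the hypothesis, so that $\tw{\ch_{k}(\db,\rs)} \le n$ for every $k \in \mathbb{N}$. Recalling that $\ch_{\infty}(\db,\rs) = \bigcup_{k \in \mathbb{N}} \ch_{k}(\db,\rs)$ and that each stage $\ch_{k}(\db,\rs)$ is a \emph{finite} instance (as $\db$ and $\rs$ are finite, each one-step application is a union over finitely many rules and finitely many triggering homomorphisms, each contributing finitely many atoms), the goal reduces to the single inequality $\tw{\ch_{\infty}(\db,\rs)} \le n$.

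The reduction to a compactness statement is the core idea. I would first record monotonicity of treewidth under sub-instances: if $\atseti \subseteq \atsetii$, then intersecting every bag of a tree decomposition of $\atsetii$ with $\termset(\atseti)$ yields a tree decomposition of $\atseti$ of no greater width (the three defining conditions, including connectedness, are preserved under this intersection), so $\tw{\atseti} \le \tw{\atsetii}$. Next, every \emph{finite} sub-instance $\atseti$ of $\ch_{\infty}(\db,\rs)$ is contained in some stage $\ch_{k}(\db,\rs)$: indeed $\atseti$ consists of finitely many atoms, each of which appears at some finite stage, and as the stages form an increasing chain, taking the maximum of the relevant indices places all of $\atseti$ inside a single $\ch_{k}(\db,\rs)$. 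Combining the two observations, every finite sub-instance of $\ch_{\infty}(\db,\rs)$ has treewidth at most $n$.

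It remains to lift this ``finite'' bound to the infinite instance, and this compactness step is where the real work lies. The tool I would invoke is the classical fact that treewidth is \emph{compact}: a countable instance all of whose finite sub-instances have treewidth at most $n$ itself has treewidth at most $n$. The cleanest self-contained route is a K\"onig's-lemma argument over the chain $\ch_{0}(\db,\rs) \subseteq \ch_{1}(\db,\rs) \subseteq \cdots$: for each $k$ one collects the finite, nonempty set of width-$\le n$ tree decompositions of $\ch_{k}(\db,\rs)$ in a normal form with a bounded number of bags, organizes these into a finitely branching tree whose parent relation is ``restriction to the previous stage'', and extracts an infinite branch whose limit (the union of the bags along the branch) is a width-$\le n$ tree decomposition of $\ch_{\infty}(\db,\rs)$. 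I expect the main obstacle to be precisely this step: making restriction canonical enough that the restricted decompositions stay inside the chosen finite normal-form sets and remain mutually consistent along the branch, so that K\"onig's lemma genuinely applies and the limiting family satisfies all three tree-decomposition conditions (in particular the connectedness condition) for every term of $\ch_{\infty}(\db,\rs)$. Alternatively, this compactness statement can simply be cited.

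Finally, since $\tw{\ch_{\infty}(\db,\rs)} \le n < \infty$ and $\ch_{\infty}(\db,\rs)$ is a universal model of $(\db,\rs)$, the knowledge base $(\db,\rs)$ admits a universal model of finite treewidth; as $\db$ was an arbitrary database, $\rs$ is $\sbts$, as required.
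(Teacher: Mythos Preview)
Your proposal is correct and follows essentially the same route as the paper's proof: fix a database, observe that the chase stages form an increasing chain of finite instances so that every finite sub-instance of $\ch_{\infty}(\db,\rs)$ sits inside some $\ch_{k}(\db,\rs)$ and hence has treewidth at most $n$, then invoke the treewidth compactness theorem (the paper simply cites Thomas~\cite{Tho88} rather than sketching a K\"onig's-lemma argument) to bound $\tw{\ch_{\infty}(\db,\rs)}$, and conclude via the chase being a universal model. Your extra remarks on monotonicity of treewidth under sub-instances and on the compactness argument are more detailed than what the paper spells out, but the structure and key ideas are identical.
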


\begin{proof} Let $\rs$ be a rule set such that for every database $\db$, there exists an $n \in \mathbb{N}$ such that for every $k \in \mathbb{N}$, $\tw{\chk{k}{\db,\rs}} \leq n$. Let $\db$ be an arbitrary database. As $\chk{k}{\db,\rs}$ is finite for every $k \in \mathbb{N}$ and monotonically increases (relative to the subset relation) as $k$ increases, we have that for every finite subset of $\ch_{\infty}(\db,\rs)$, the treewidth of that subset is bounded by $n$. Thus, by the the treewidth compactness theorem~\cite{Tho88}, $\tw{\ch_{\infty}(\db,\rs)} \leq n$. Since $\ch_{\infty}(\db,\rs)$ is a universal model of $(\db,\rs)$, it follows that $(\db,\rs)$ has a universal model of finite treewidth. Last, since $\db$ was assumed arbitrary, we have that $\rs$ is $\bts$.
\end{proof}

\begin{customprop}{\ref{prop:(w)adgs-relationships}} Every $\cdgs$ rule set is $\wcdgs$ and every $\wcdgs$ rule set is $\bts$.
\end{customprop}

\begin{proof} We prove each conjunct of the claim in turn:
\begin{enumerate}

\item Suppose that $\rs$ is $\cdgs$ and let $\db$ be an arbitrary database. Then, if $\db \drel{\rs}{\der} \ins$, it follows that a derivation $\der' = \der$ exists such that $\db \drel{\rs}{\der'} \ins$ and $G_{\der'}$ can be reduced to a cycle-free graph (since $\rs$ is $\cdgs$). Hence, $\rs$ is $\wcdgs$.

\item Suppose that $\rs$ is $\wcdgs$, $\db$ is a database, let $C = \conset(\db,\rs)$, and let $n = max\{|\termset(\db)|,|\termset(\head(\rho_{i}))|_{\rho_{i} \in \rs}\} + |\cons|$, and assume that $\db \drel{\der}{\rs} \ins$. Our first aim is to show that $\tw{\ins} \leq n$, thus showing that any $\rs$-derivable instance from $\db$ has a treewidth bounded by $n$. Since $\rs$ is $\wcdgs$, we know there exists an $\rs$-derivation $\der'$ and a complete reduction sequence $\os$ such that $\os(G_{\der'}) = (\nodes',\edges',\at',\lbl')$ is a cycle-free graph. Let us define a tree decomposition $\td = (\tdv,\tde)$ of $\ins$ by making use of $\os(G_{\der'})$, where $X \in \tdv$ \iffi there exists a node $X' \in \nodes'$ such that $X = \termset(X')$. We then define $(X,Y) \in \tde''$ \iffi there exists an arc $(X',Y') \in \edges'$ such that $X = \termset(X')$ and $Y = \termset(Y')$. In general, $\td' = (\tdv,\tde'')$ will be a finite forest, so if we place each tree of $\td'$ in a line and connect the root of the first tree to the root of the second, the root of the second tree to the root of the third, etc., then this yields a tree decomposition $\td = (\tdv,\tde)$ (where $\tde$ extends $\tde''$ with the edges just mentioned). By \lem~\ref{lem:reduct-ops-preserve-decomp-props}, $\td$ is indeed a tree decomposition, and furthermore, $\w{\td} \leq n$. Thus, we have shown that any $\rs$-derivable instance from $\db$ has a treewidth bounded by $n$. Now, observe that $\db \drel{\der}{\rs} \chk{k}{\db,\rs}$ for every $k \in \mathbb{N}$. Therefore, since $\db$ was assumed arbitrary, we know that for every database $\db$, there exists an $n \in \mathbb{N}$ such that for every $k \in \mathbb{N}$, $\tw{\chk{k}{\db,\rs}} \leq n$. By \lem~\ref{lem:compactness} above, it follows that $\rs$ is $\bts$, establishing the claim.
\end{enumerate}
\end{proof}

\begin{definition}\label{def:fr-nul-con-node} Let $\os(\dg) = (\nodes, \edges, \at, \lbl)$ be a derivation graph with $\os$ a reduction sequence and $X_{\vn} \in \nodes$. Moreover, let $\db$ be a database, $\rs$ be a rule set, and $C = \conset(\db,\rs)$. We define the \emph{frontier} $\fr(X_{\vn})$ of a node $X_{\vn} \in \nodes$ relative to $(\db,\rs)$ accordingly:
\[
  \fr(X_{n}) =
  \begin{cases}
     \emptyset & \text{if $X_{n}$ is a source node;} \\
     \overline{h}_{i}(\vec{y}_{i}) \setminus C & \text{otherwise.}
  \end{cases}
\]
where $\at(X_{n}) = \overline{h}_{i}(\psi_{i}(\vec{y}_{i},\vec{z}_{i}))$.
\end{definition}

\begin{lemma}\label{lem:derivation-graph-reduction-properties}
Let $\db$ be a database, $\rs$ be a rule set, $C = \conset(\db,\rs)$, and assume that $\db \drel{\rs}{\der} \ins$. Then, for $\os$ a reduction sequence, the derivation graph $\os(\dg) = (\nodes, \edges, \at, \lbl)$ satsifies the following properties:
\begin{enumerate}

\item for each $X_{\vn_{0}} \in \nodes$ with parent nodes $X_{\vn_{1}}, \ldots, X_{\vn_{k}} \in \nodes$, 
$$
\fr(X_{\vn_{0}}) = \bigcup_{i \in \{1,\ldots, k\}} \lbl(X_{\vn_{i}},X_{\vn_{0}});
$$

\item for each $(X_{\vm},X_{\vn}) \in \edges$, $\lbl(X_{\vm},X_{\vn}) \subseteq \termset(X_{\vm})$;

\item for each $X_{\vn_{0}} \in \nodes$ with parent nodes $X_{\vn_{1}}, \ldots, X_{\vn_{k}} \in \nodes$, 
$$
\bigcup_{i \in \{1,\ldots, k\}} \lbl(X_{\vn_{i}},X_{\vn_{0}}) \subseteq \bigcup_{i \in \{1,\ldots, k\}} \termset(X_{\vn_{i}}).
$$

\end{enumerate}
\end{lemma}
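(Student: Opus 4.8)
The plan is to establish all three properties by induction on the length of the reduction sequence $\os$, carrying the single invariant that for every node $X$ the union of the labels on the arcs entering $X$ equals $\fr(X)$; this is exactly property~1, and property~3 will then drop out of property~2, so the real content lies in properties~1 and~2.

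I would dispose of properties~2 and~3 first, since neither needs delicate bookkeeping. For the unreduced graph $\dg$, \Cref{def:derivation-graph} builds each label $\lbl(X_{\vm},X_{\vn})$ out of $h_{\vn}$-images of terms occurring in some atom of $\at(X_{\vm})$, so $\lbl(X_{\vm},X_{\vn}) \subseteq \termset(\at(X_{\vm})) \subseteq \termset(X_{\vm})$. Every reduction only shrinks, deletes, or redirects labels: $\TR$ deletes a term, $\AR$ deletes an arc carrying $\emptyset$, and $\CR$ deletes $(X_{i},X_{k}),(X_{j},X_{k})$ and installs $(X_{\ell},X_{k})$ with label $\lbl(X_{i},X_{k}) \cup \lbl(X_{j},X_{k})$, which is contained in $\termset(X_{\ell})$ by the very side condition that licenses $\CR$ in \Cref{def:reduction-operations}. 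Hence property~2 is invariant, and property~3 follows at once, as each summand $\lbl(X_{\vn_{i}},X_{\vn_{0}})$ on the left sits inside the matching $\termset(X_{\vn_{i}})$ on the right.

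For property~1 the base case ($\os$ empty) is read straight off the definitions. Let $\rho = \phi(\vec{x},\vec{y}) \imp \exists \vec{z}\, \psi(\vec{y},\vec{z})$ be the rule producing $X_{\vn_{0}}$ with homomorphism $\overline{h}$ (\Cref{def:fr-nul-con-node}). Every frontier variable of $\rho$ occurs in a body atom, which is a frontier atom and is sent by $\overline{h}$ into $\ins_{\vn_{0}-1}$, hence into $\at(X_{\vn_{i}})$ for some parent $X_{\vn_{i}}$; summing these contributions gives $\bigcup_{i} \lbl(X_{\vn_{i}},X_{\vn_{0}}) = \overline{h}(\vec{y}) \setminus C = \fr(X_{\vn_{0}})$, while a source node (empty frontier or all-constant frontier) makes both sides $\emptyset$. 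For the inductive step I would observe that $\fr(X)$ depends only on $\at(X)$ and so is untouched by reductions, whence it suffices to check that the union of the incoming labels at every node is preserved: under $\TR$ the deleted term $t$ survives on the sibling arc $(X_{i},X_{k})$ and so is not lost from the union; under $\AR$ the deleted arc contributed only $\emptyset$; and under $\CR$ the single new arc $(X_{\ell},X_{k})$ carries precisely the union of the two deleted labels.

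The step I expect to fight with is the $\CR$ case of property~1, and for two reasons. First, when $X_{\ell}$ is already a parent of $X_{k}$, the operation overwrites the old label $\lbl(X_{\ell},X_{k})$, so I must make sure no term is dropped from the union at $X_{k}$ --- either by reading the label assignment as an accumulation onto the existing arc, or by confirming that this configuration reduces to the harmless case $\ell \in \{i,j\}$. Second, I must keep the source/non-source dichotomy of \Cref{def:fr-nul-con-node} consistent with the invariant: since $\CR$ replaces two incoming arcs by one, a node can lose its last incoming arc only through $\AR$, and only when that arc (and hence, by the invariant, all earlier-removed incoming arcs) carried $\emptyset$; at that moment both sides of property~1 have already collapsed to $\emptyset$, so switching $\fr(X_{k})$ to its source-node value does no damage.
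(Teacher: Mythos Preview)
Your approach is essentially identical to the paper's: induction on the length of $\os$, observing that property~3 follows from property~2, and then verifying that each of $\TR$, $\AR$, $\CR$ preserves properties~1 and~2 (with the base case read off \Cref{def:derivation-graph}). The two edge cases you flag in the $\CR$ step---the possibility that $X_{\ell}$ is already a parent of $X_{k}$, and the source/non-source transition under $\AR$---are legitimate subtleties that the paper's own argument simply glosses over, so your extra care there is well-placed even though the overall skeleton matches.
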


\begin{proof} Since 3 follows from 2, we only prove 1 and 2. We prove each claim in turn by induction on the length of the reduction sequence $\os$.

\begin{enumerate}

%%%%%%%%%%%%Property 1
\item \textit{Base case.} Suppose that $\os = \empstr$, so that $\os(\dg) = \empstr(\dg) = \dg$. Observe that for any $X_{n} \in \nodes$ with (a non-empty set of) parent nodes $X_{n_{1}}, \ldots, X_{n_{k}} \in \nodes$, $\fr(X_{n}) = \overline{h}(\vec{y}) \setminus C$ for $\psi(\vec{y},\vec{z}) = \head(\rho)$ for some $\rho \in \rs$. Moreover, by definition, it follows that
$$
\fr(X_{n}) = \bigcup_{i \in \{1,\ldots, k\}} \lbl(X_{n_{i}},X_{n}).
$$

\textit{Inductive step.} We assume for IH that the property holds for $\os(\dg)$ and show that the property holds for $\RO \os(\dg) = (\nodes',\edges',\at',\lbl')$ with $\RO \in \{\TR, \AR, \CR\}$. We make a case distinction based on the last reduction operation $\RO$ applied.

\medskip

%%%%%AR
$\AR$. Let $(X_{\vn_{1}},X_{\vn_{0}}) \in \edges$ such that $\lbl(X_{\vn_{1}},X_{\vn_{0}}) = \emptyset$. Assume that $\AR$ was applied, so that $(X_{\vn_{1}},X_{\vn_{0}}) \not\in \edges'$. For any node $X_{\vm} \neq X_{\vn_{0}}$ in $\RO\os(\dg)$ property 2 holds by IH, and for the node $X_{\vn_{0}}$ with parent nodes $X_{\vm_{1}}, \ldots, X_{\vm_{k}}$ in $\AR\os(\dg)$ we have
\begin{align*}
\fr(X_{\vn_{0}}) & = \bigcup_{i \in \{1,\ldots, k\}} \lbl(X_{\vm_{i}},X_{\vn_{0}}) \cup \lbl(X_{\vn_{1}},X_{\vn_{0}})\\
& = \bigcup_{i \in \{1,\ldots, k\}} \lbl'(X_{\vm_{i}},X_{\vn_{0}}) \cup \emptyset \\
& = \bigcup_{i \in \{1,\ldots, k\}} \lbl'(X_{\vm_{i}},X_{\vn_{0}})
\end{align*}
where the first equality follows by IH, and the second by the definition of $\lbl'$ along with the fact that $\lbl(X_{\vn_{1}},X_{\vn_{0}}) = \emptyset$.

\medskip

%%%%%TR
$\TR$. Let $(X_{\vn_{1}},X_{\vn_{0}}), (X_{\vn_{2}},X_{\vn_{0}}) \in \edges$ with $t \in \lbl(X_{\vn_{1}},X_{\vn_{0}}) \cap \lbl(X_{\vn_{2}},X_{\vn_{0}})$. Suppose we apply $\TR$, so that $\lbl'(X_{\vn_{2}},X_{\vn_{0}}) = \lbl(X_{\vn_{2}},X_{\vn_{0}}) \setminus \{t\}$. For any node $X_{\vm} \neq X_{\vn_{0}}$ in $\TR\os(\dg)$ the result holds by IH, and for the node $X_{\vn_{0}}$ with parent nodes $X_{\vm_{1}}, \ldots, X_{\vm_{k}}$ we have
$$
\fr(X_{\vn_{0}}) = \bigcup_{i \in \{1,\ldots, k\}} \lbl(X_{\vm_{i}},X_{\vn_{0}}) = \bigcup_{i \in \{1,\ldots, k\}} \lbl'(X_{\vm_{i}},X_{\vn_{0}})
$$
as $t \in \lbl'(X_{\vn_{1}},X_{\vn_{0}})$.

\medskip

%%%%%CR
$\CR$. Let $(X_{\vn_{1}},X_{\vn_{0}}), (X_{\vn_{2}},X_{\vn_{0}}) \in \edges$ with a node $X_{\vm} \in \nodes$ such that $\vm \ilt \vn_{0}$ and $\lbl(X_{\vn_{1}},X_{\vn_{0}}) \cup \lbl(X_{\vn_{2}},X_{\vn_{0}}) \subseteq \termset(X_{\vm})$. Assume that $\CR$ was applied, so that $(X_{\vm},X_{\vn_{0}}) \in \edges'$ with $\lbl'(X_{\vm},X_{\vn_{0}})  = \lbl(X_{\vn_{1}},X_{\vn_{0}}) \cup \lbl(X_{\vn_{2}},X_{\vn_{0}})$. For any node $X_{\vk} \neq X_{\vn_{0}}$ in $\CR\os(\dg)$, property 2 holds by IH, so let us consider the node $X_{\vn_{0}}$, which has parents $X_{\vm_{1}}, \ldots, X_{\vm_{k}}$, $X_{\vn_{1}}$, and $X_{\vn_{2}}$ in $\os(\dg)$ and parents $X_{\vm_{1}}, \ldots, X_{\vm_{k}}$, and $X_{\vm}$ in $\CR\os(\dg)$. By IH, we have the first equality below, and the second follows from the definition of $\lbl'$, giving the desired result:
\begin{align*}
\fr(X_{\vn_{0}}) & = \bigcup_{i \in \{1,\ldots, k\}} \lbl(X_{\vm_{i}},X_{\vn_{0}}) \cup \lbl(X_{\vn_{1}},X_{\vn_{0}}) \cup \lbl(X_{\vn_{2}},X_{\vn_{0}})\\
& = \bigcup_{i \in \{1,\ldots, k\}} \lbl'(X_{\vm_{i}},X_{\vn_{0}}) \cup \lbl'(X_{\vm},X_{\vn_{0}}).
\end{align*}

%%%%%%%%%%%%%Property 2
\item \textit{Base case.} Suppose that $\os = \empstr$, so that $\os(\dg) = \empstr(\dg) = \dg$. The result immediately follows from the definition of an derivation graph.

\medskip

\textit{Inductive step.} We assume for IH that the property holds for $\os(\dg)$ and show that the property holds for $\RO \os(\dg) = (\nodes',\edges',\at',\lbl')$ with $\RO \in \{\TR, \AR, \CR\}$.

\medskip

%%%%%TR
$\TR$. Let $(X_{\vn_{1}},X_{\vn_{0}}), (X_{\vn_{2}},X_{\vn_{0}}) \in \edges$ with $t \in \lbl(X_{\vn_{1}},X_{\vn_{0}}) \cap \lbl(X_{\vn_{2}},X_{\vn_{0}})$. Suppose we apply $\TR$, so that $\lbl'(X_{\vn_{2}},X_{\vn_{0}}) = \lbl(X_{\vn_{2}},X_{\vn_{0}}) \setminus \{t\}$. For any arc $(X_{\vm_{1}},X_{\vm_{0}}) \neq (X_{\vn_{2}},X_{\vn_{0}})$ in $\TR\os(\dg)$, the result holds by IH, so let us focus on $(X_{\vn_{2}},X_{\vn_{0}}) \in \edges'$. Observe that $\lbl'(X_{\vn_{2}},X_{\vn_{0}}) \subseteq \lbl(X_{\vn_{2}},X_{\vn_{0}}) \subseteq \termset(X_{\vn_{2}})$.

\medskip

%%%%%AR
$\AR$. Let $(X_{\vn_{1}},X_{\vn_{0}}) \in \edges$ such that $\lbl(X_{\vn_{1}},X_{\vn_{0}}) = \emptyset$. Assume that $\AR$ was applied, so that $(X_{\vn_{1}},X_{\vn_{0}}) \not\in \edges'$. For any $(X_{\vm_{1}},X_{\vm_{0}}) \in \edges'$, $\lbl'(X_{\vm_{1}},X_{\vm_{0}}) = \lbl(X_{\vm_{1}},X_{\vm_{0}}) \subseteq \termset(X_{\vm_{1}})$ by the definition of $\lbl'$ and IH. 

\medskip

%%%%%CR
$\CR$. Let $(X_{\vn_{1}},X_{\vn_{0}}), (X_{\vn_{1}},X_{\vn_{0}}) \in \edges$ with a node $X_{\vm} \in \nodes$ such that $\vm \ilt \vn_{0}$ and $\lbl(X_{\vn_{1}},X_{\vn_{0}}) \cup \lbl(X_{\vn_{1}},X_{\vn_{0}}) \subseteq \termset(X_{\vm})$. Assume that $\CR$ was applied, so that $(X_{\vm},X_{\vn_{0}}) \in \edges'$ with $\lbl'(X_{\vm},X_{\vn_{0}})  = \lbl(X_{\vn_{1}},X_{\vn_{0}}) \cup \lbl(X_{\vn_{1}},X_{\vn_{0}})$. For any arc $(X_{\vk_{1}},X_{\vk_{2}}) \neq (X_{\vm},X_{\vn_{0}})$ in $\CR\os(\dg)$, the result holds by IH, so let us focus on $(X_{\vm},X_{\vn_{0}}) \in \edges$. We have $\lbl'(X_{\vm},X_{\vn_{0}}) = \lbl(X_{\vn_{1}},X_{\vn_{0}}) \cup \lbl(X_{\vn_{1}},X_{\vn_{0}}) \subseteq \termset(X_{\vm})$ by the definition of $\lbl'$ and the condition required to apply $\CR$. This concludes the proof of the case.
\end{enumerate}
\end{proof}

\begin{definition}[Sub-reduction Sequence] Let $\os = \mathsf{(r_{1})} \cdots \mathsf{(r_{n})}$ be a reduction sequence. We define a \emph{sub-reduction sequence} $\os'$ of $\os$ to be a reduction sequence of the form $\mathsf{(r_{1})} \cdots \mathsf{(r_{i})}$ with $0 \leq i \leq n$, which is the empty reduction sequence $\empstr$ when $n = 0$. If $\os'$ is a sub-reduction sequence of $\os$, then we write $\os' \subos \os$, and we note that we take $\os'$ to be the same instances of the reduction operations occurring within the reduction sequence $\os$.
\end{definition}

\begin{lemma}\label{lem:reverse-preservation-of-derivation-graph-properties}
Let $\db$ be a database, $\rs$ be a rule set, and assume that $\db \drel{\rs}{\der} \ins$. Moreover, assume that $\os(\dg) = (\nodes, \edges, \at, \lbl)$ is a cycle-free derivation graph with $\os$ a complete reduction sequence. For each $\os' \subos \os$, the derivation graph $\os'(\dg) = (\nodes',\edges',\at',\lbl')$ satisfies the following: For each non-source node $X_{\vn} \in \nodes'$, there exists a node $X_{\vm} \in \nodes'$ such that $\vm \ilt \vn$ and $\fr(X_{\vn}) \subseteq \termset(X_{\vm})$.
\end{lemma}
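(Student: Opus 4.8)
The plan is to argue by a \emph{reverse} induction along the reduction sequence $\os$, exploiting the observation that the defining condition ``there is an $X_m \in \nodes$ with $m \ilt n$ and $\fr(X_n) \subseteq \termset(X_m)$'' depends only on node indices and on the sets $\termset(X_m)$ and $\fr(X_n)$ --- none of which are altered by the reduction operations, since these only touch $\edges$ and $\lbl$, and since by \lem~\ref{lem:derivation-graph-reduction-properties}(1) the frontier of any non-source node always equals the union of its incoming labels, i.e.\ the intrinsic value $\overline{h}(\vec y)\setminus C$ fixed by \dfn~\ref{def:fr-nul-con-node}. Hence the only genuinely graph-dependent ingredient is \emph{which} nodes count as non-source, and this is exactly what varies between prefixes $\os' \subos \os$. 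I would therefore induct downward on $\dlen{\os}-\dlen{\os'}$, taking the fully reduced graph $\os(\dg)$ as the base case, and in the inductive step undoing the single operation $(r)$ that transforms $\os'(\dg)$ into $\os'(r)(\dg)$.

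\emph{Base case} ($\os' = \os$). Here $\os(\dg)$ is cycle-free, and the crux is that cycle-freeness forces every non-source node $X_n$ to have a single parent $X_m$. Granting this, \lem~\ref{lem:derivation-graph-reduction-properties}(1) yields $\fr(X_n) = \lbl(X_m,X_n)$, \lem~\ref{lem:derivation-graph-reduction-properties}(2) yields $\lbl(X_m,X_n) \subseteq \termset(X_m)$, and \lem~\ref{lem:basic-dg-properties}(2) yields $m \ilt n$, so $X_m$ is the required witness. To establish the single-parent fact I would show that a non-source node with two distinct parents $X_p \neq X_q$ is incompatible with cycle-freeness: the converging arcs give the undirected path $X_p - X_n - X_q$, and the directed term-generative paths guaranteed by \lem~\ref{lem:connect-t-gen}(2), together with the connectedness property preserved by \lem~\ref{lem:reduct-ops-preserve-decomp-props}, would be leveraged to exhibit a second connection between $X_p$ and $X_q$, contradicting acyclicity.

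\emph{Inductive step.} Assuming the statement for the more reduced graph $\os'(r)(\dg)$, I would split on $(r) \in \{\TR,\AR,\CR\}$ and deduce it for $\os'(\dg)$. For $(r)=\TR$ nothing relevant changes: the arcs, the source status of every node, and (by \lem~\ref{lem:derivation-graph-reduction-properties}(1)) every frontier are identical, so each witness carries over verbatim. For $(r)=\CR$ only the arcs into the child $X_k$ differ (two converging arcs in place of the merged arc $(X_\ell,X_k)$), yet $X_k$ remains non-source, its frontier is unchanged by \lem~\ref{lem:derivation-graph-reduction-properties}(1), and every other node is untouched; since the witness condition is intrinsic to indices and term sets, the witness supplied by the induction hypothesis stays valid. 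The only case needing a short argument is $(r)=\AR$, which deletes an empty-label arc $(X_i,X_j)$: undoing it can turn a source node of $\os'(r)(\dg)$ into a non-source node $X_j$ of $\os'(\dg)$, about which the hypothesis is silent. But in $\os'(\dg)$ the reinstated arc $(X_i,X_j)$ carries the empty label, so \lem~\ref{lem:derivation-graph-reduction-properties}(1) forces $\fr(X_j)=\emptyset \subseteq \termset(X_i)$, and $i \ilt j$ by \lem~\ref{lem:basic-dg-properties}(2); thus $X_i$ witnesses the claim, while all other witnesses transfer unchanged.

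The main obstacle is the base case --- converting graph-theoretic cycle-freeness into the single-parent (equivalently, single-containment) property for frontiers, which is precisely where the connectedness machinery of \lem~\ref{lem:connect-t-gen} and \lem~\ref{lem:reduct-ops-preserve-decomp-props} must be brought to bear. The inductive step, by contrast, is essentially bookkeeping: the target condition is invariant under the reduction operations, and only the tracking of source status, settled by the empty-label observation in the $\AR$ case, requires care.
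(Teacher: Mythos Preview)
Your proof follows the paper's structure exactly: reverse induction along $\os$, with the fully reduced $\os(\dg)$ as base case and a case split on the undone operation in the inductive step. Your treatment of the inductive step---$\TR$ and $\CR$ trivial since node sets, indices, term sets and frontiers are unchanged, and $\AR$ handled via the empty-label observation to supply a witness for the possibly newly non-source node $X_j$---is precisely the paper's argument.

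The only divergence is in how much weight you place on the base case. You flag it as ``the main obstacle'' and propose to bring in the connectedness machinery of \lem~\ref{lem:connect-t-gen} and \lem~\ref{lem:reduct-ops-preserve-decomp-props} to derive the single-parent property from cycle-freeness. The paper, by contrast, dispatches this step in one line: ``since $\os(\dg)$ is cycle-free, we know that $\os(\dg)$ is a forest, implying that each non-source node has a single parent node,'' after which \lem~\ref{lem:derivation-graph-reduction-properties} gives $\fr(X_n)=\lbl(X_m,X_n)\subseteq\termset(X_m)$ and \lem~\ref{lem:basic-dg-properties}(2) gives $m<n$. Neither \lem~\ref{lem:connect-t-gen} nor \lem~\ref{lem:reduct-ops-preserve-decomp-props} is invoked anywhere in the paper's proof of this lemma; the paper effectively reads ``cycle-free'' as already precluding converging incoming arcs (consistent with $\CR$ being named \emph{cycle} removal), so the single-parent property is taken as definitional rather than something to be argued via term-generative paths.
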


\begin{proof} We first show (1) that the claim holds for $\os(\dg)$, and then (2) show that if the claim holds for $\os'(\dg)$ with $\os' = \RO \os''$ and $\RO \in \{\TR,\AR,\CR\}$, then it holds for $\os''(\dg)$.

(1) Let $X_{\vn} \in \nodes$ be a non-source node of $\os(\dg)$ with parent nodes $X_{\vn_{i}}$ for $i \in \{1, \ldots, k\}$. By \lem~\ref{lem:derivation-graph-reduction-properties}, we know that
$$
\fr(X_{\vn}) = \bigcup_{i \in \{1,\ldots, k\}} \lbl(X_{\vn_{i}},X_{\N}) \subseteq \bigcup_{i \in \{1,\ldots, k\}} \termset(X_{\vn_{i}}).
$$
Since $\os$ is a complete reduction sequence and $\os(\dg)$ is cycle-free, we know that $\os(\dg)$ is a forest, implying that each non-source node has a single parent node. Hence, $X_{\vn}$ has a single parent node $X_{\vm}$, implying that $\fr(X_{\vn}) \subseteq \termset(X_{\vm})$, 
thus confirming the desired result as $\vm \ilt \vn$ by \lem~\ref{lem:basic-dg-properties}. 

(2) Let $\os'(\dg) = (\nodes',\edges',\at',\lbl')$, $\os''(\dg) = (\nodes'',\edges'',\at'',\lbl'')$, and suppose that for every non-source node $X_{\vn} \in \nodes'$, there exists a node $X_{\vm} \in \nodes'$ such that $\vm \ilt \vn$ and $\fr(X_{\vn}) \subseteq \termset(X_{\vm})$. We show the claim by a case distinction on if $\TR$, $\AR$, or $\CR$ was applied last in $\os'$.

$\TR$. Observe that if $\TR$ was applied last in $\os'$, then the only difference between $\os'(\dg)$ and $\os''(\dg)$ is that for some arc $(X_{\vk_{1}},X_{\vk_{0}}) \in \edges' \cap \edges''$, $\lbl''(X_{\vk_{1}},X_{\vk_{0}}) = \lbl'(X_{\vk_{1}},X_{\vk_{0}}) \cup \{t\}$ , for some term $t$. Hence, for an arbitrary non-source node $X_{\vn} \in \nodes''$, $X_{\vn} \in \nodes'$ since $\nodes'' = \nodes'$, implying that there exists a node $X_{\vm} \in \nodes' = \nodes''$ such that $\vm \ilt \vn$ and $\fr(X_{\vn}) \subseteq \termset(X_{\vm})$, completing the proof of the case.

$\AR$. If $\AR$ was applied last in $\os'$, then the only difference between $\os'(\dg)$ and $\os''(\dg)$ is that for some arc $(X_{\vk_{1}},X_{\vk_{0}})$, $\edges'' = \edges' \cup \{(X_{\vk_{1}},X_{\vk_{0}})\}$, where $\lbl''(X_{\vk_{1}},X_{\vk_{0}}) = \emptyset$. For any non-source node $X_{\vn} \in \nodes''$ such that $X_{\vn}$ is a non-source node in $\nodes'$, the result immediately holds. However, it could be the case that even though $X_{\vk_{0}}$ is a non-source node in $\nodes''$, $X_{\vk_{0}}$ is a source node in $\os'(\dg)$ as $(X_{\vk_{1}},X_{\vk_{0}}) \in \edges''$. In this case, by \lem~\ref{lem:derivation-graph-reduction-properties} and the fact that $X_{\vk_{1}}$ is the only parent of $X_{\vk_{0}} \in \nodes''$, we know that $\fr(X_{\vk_{0}}) \subseteq \lbl''(X_{\vk_{1}},X_{\vk_{0}}) = \emptyset$, implying that $\fr(X_{\vk_{0}}) = \emptyset$. As $X_{\vk_{1}}$ is a parent of $X_{\vk_{0}}$ in $\os''(\dg)$, we know that $\vk_{1} \ilt \vk_{0}$ by \lem~\ref{lem:basic-dg-properties}, and trivially $\fr(X_{\vk_{0}}) \subseteq \termset(X_{\vk_{1}})$, proving the case.

$\CR$. If $\CR$ is applied last in $\os'$, then the only difference between $\os''(\dg)$ and $\os'(\dg)$ is that there exist arcs $(X_{n_{1}},X_{n_{0}}), (X_{n_{2}},X_{n_{0}}) \in \edges''$ and $\edges' = (\edges'' \setminus \{(X_{n_{1}},X_{n_{0}}), (X_{n_{2}},X_{n_{0}})\}) \cup \{(X_{m},X_{n_{0}})\}$ as there exists a node $X_{m} \in \nodes''$ such that $m \ilt n_{0}$ and $\lbl''(X_{n_{1}},X_{n_{0}}) \cup \lbl''(X_{n_{2}},X_{n_{0}}) = \termset(X_{m})$. Hence, for an arbitrary non-source node $X_{k} \in \nodes''$, $X_{k} \in \nodes'$ as $\nodes'' = \nodes'$, implying the existence of a node $X_{k'}$ such that $k' \ilt k$ and $\fr(X_{k}) \subseteq \termset(X_{k'})$, thus completing the proof.
\end{proof}

\begin{lemma}\label{lem:(w)gbts-is-(w)adgs}
Let $\rs$ be a rule set. Then,
\begin{enumerate}

\item If $\rs$ is $\gbts$, then $\rs$ is $\cdgs$;

\item if $\rs$ is $\wgbts$, then $\rs$ is $\wcdgs$.

\end{enumerate}
\end{lemma}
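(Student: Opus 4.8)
The plan is to reduce both statements to a single core fact: \emph{the derivation graph of any greedy derivation from a database can be reduced to a cycle-free graph using only the $\CR$ operation.} Granting this, statement~1 is immediate, since if $\rs$ is $\gbts$ then every $\der$ with $\db \drel{\der}{\rs} \ins$ is greedy, so $\dg$ reduces to a cycle-free graph and $\rs$ is $\cdgs$. Statement~2 follows in the same way: if $\rs$ is $\wgbts$, then for any $\db \drel{\der}{\rs} \ins$ there is a greedy $\der'$ with $\db \drel{\der'}{\rs} \ins$, and applying the core fact to $\der'$ witnesses membership in $\wcdgs$.

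To prove the core fact, I would first establish the property that links greediness to derivation graphs: for each non-source node $X_{i}$ there is a node $X_{j}$ with $j < i$ and $\fr(X_{i}) \subseteq \termset(X_{j})$. This is the graph-theoretic shadow of \dfn~\ref{def:greedy-derivation}. Since $\ins_{0} = \db$ contains no nulls, greediness yields some $j < i$ with $h_{i}(\fr(\rho_{i})) \subseteq \nullset(\overline{h}_{j}(\head(\rho_{j}))) \cup \conset(\db,\rs)$, so $\fr(X_{i}) = h_{i}(\fr(\rho_{i})) \setminus C$ consists only of nulls introduced by the single rule application at step $j$; as every such null occurs in a (necessarily fresh) atom of $\at(X_{j})$, we obtain $\fr(X_{i}) \subseteq \nonc(X_{j}) \subseteq \termset(X_{j})$.

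With this property in hand, I would reduce $\dg$ node by node. Fix a non-source node $X_{i}$ together with its witness $X_{j}$ ($j < i$). By the first property of \lem~\ref{lem:derivation-graph-reduction-properties}, the union of the labels on the arcs entering $X_{i}$ equals $\fr(X_{i})$, so each such label lies in $\fr(X_{i}) \subseteq \termset(X_{j})$. I then apply $\CR$ repeatedly, each time merging two arcs entering $X_{i}$ into the single arc $(X_{j},X_{i})$: the merge is legal because the union of the two labels stays inside $\fr(X_{i}) \subseteq \termset(X_{j})$ and $j < i$, and \lem~\ref{lem:derivation-graph-reduction-properties} guarantees that this invariant (labels union to the fixed set $\fr(X_{i})$) survives each $\CR$. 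After finitely many steps $X_{i}$ has the unique parent $X_{j}$. Because applying $\CR$ at $X_{i}$ alters only the arcs entering $X_{i}$, the nodes can be processed independently; doing this for every non-source node yields a graph in which each node has in-degree at most one, and---since every arc runs from a lower to a higher index by \lem~\ref{lem:basic-dg-properties}---this graph is a forest, hence cycle-free.

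The step I expect to be the main obstacle is the translation from the greedy condition to the containment $\fr(X_{i}) \subseteq \termset(X_{j})$. The delicate point is that $\termset(X_{j})$ records only terms occurring in the \emph{new} atoms $\at(X_{j}) = \ins_{j} \setminus \ins_{j-1}$, so one must argue that the nulls witnessing greediness really are the nulls \emph{introduced} at step $j$ (all of which appear in fresh head atoms, hence in $\at(X_{j})$), and not frontier nulls of step $j$ that might occur only in head atoms already present in $\ins_{j-1}$. Pinning this down---and checking that the restriction to derivations from databases removes the $\nullset(\ins_{0})$ contribution from \dfn~\ref{def:greedy-derivation}---is the crux; once it is secured, the $\CR$-only reduction argument above goes through mechanically.
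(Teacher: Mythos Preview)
Your proposal is correct and takes essentially the same approach as the paper: both pass from greediness to the containment $\fr(X_{i})\subseteq\termset(X_{j})$ for some $j<i$, then use \lem~\ref{lem:derivation-graph-reduction-properties} to conclude that any two arcs into $X_{i}$ have labels whose union lies in $\termset(X_{j})$, and iterate $\CR$ until every node has in-degree at most one. Regarding the subtlety you flag about head atoms possibly already lying in $\ins_{j-1}$, the paper's proof simply asserts the containment $\fr(X_{n_0})\subseteq\termset(X_{m})$ from greediness without further comment, so on this point your write-up is in fact more scrupulous than the paper's own argument.
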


\begin{proof} We argue claim 1 since the proof of claim 2 is similar. Let $\db$ be a database, $\rs$ be $\gbts$, and assume $\db \drel{\der}{\rs} \ins$. Since $\rs$ is $\gbts$, we know that the $\rs$-derivation
$$
\der = \db,(\rho_{1},h_{1},\ins_{1}), \ldots, (\rho_{n},h_{n},\ins_{n})
$$
is greedy. Therefore, for each $i$ such that $0 < i < n$, there exists a $j < i$ such that $h_{i}(\fr(\rho_{i})) \subseteq \nullset(\overline{h}_{j}(\head(\rho_{j}))) \cup \conset(\db,\rs)$. Let us now show that $\rs$ is $\cdgs$ by arguing that $\dg = (\nodes,\edges,\at,\lbl)$ is reducible to a cycle-free graph. 

Let us suppose that there exist arcs $(X_{\vn_{1}},X_{\vn_{0}}), (X_{\vn_{2}},X_{\vn_{0}}) \in \edges$. By our assumption that $\der$ is greedy, we know that there exists a node $X_{\vm} \in \nodes$ such that $m < \vn_{0}$ and $\fr(X_{\vn_{0}}) \subseteq \termset(X_{\vm})$. By \lem~\ref{lem:derivation-graph-reduction-properties}, it follows that $\lbl(X_{\vn_{1}},X_{\vn_{0}}) \cup \lbl(X_{\vn_{2}},X_{\vn_{0}}) \subseteq \termset(X_{\vm})$, meaning we can apply $\CR$ to $\dg$. Observe that $\CR(\dg)$ has one less ``convergence point'' as $(X_{\vn_{1}},X_{\vn_{0}})$ and $(X_{\vn_{2}},X_{\vn_{0}})$ have been replaced by the single arc $(X_{\vm},X_{\vn_{0}})$. By repeating this process, all such convergence points will be removed, yielding a reduced, cycle-free derivation graph. Hence, $\rs$ is $\cdgs$.
\end{proof}

\begin{lemma}\label{lem:(w)adgs-is-(w)gbts}
Let $\rs$ be a rule set. Then,
\begin{enumerate}

\item If $\rs$ is $\cdgs$, then $\rs$ is $\gbts$;

\item if $\rs$ is $\wcdgs$, then $\rs$ is $\wgbts$.

\end{enumerate}
\end{lemma}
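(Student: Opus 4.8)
The plan is to establish claim~1, that $\cdgs \subseteq \gbts$, in full, and then to obtain claim~2 ($\wcdgs \subseteq \wgbts$) by running the identical argument on the witnessing derivation supplied by the $\wcdgs$ assumption. The decisive observation is that the bulk of the work is already carried out by \lem~\ref{lem:reverse-preservation-of-derivation-graph-properties}: since the empty reduction sequence satisfies $\empstr \subos \os$ for \emph{every} complete reduction sequence $\os$, and $\empstr(\dg) = \dg$, that lemma applies verbatim to the \emph{unreduced} derivation graph $\dg$. What remains is purely a matter of rewriting its geometric conclusion into the numerical inequality that defines greediness.

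Concretely, for claim~1 I would fix an arbitrary database $\db$ and an $\rs$-derivation $\der = \db, (\rho_1, h_1, \ins_1), \ldots, (\rho_n, h_n, \ins_n)$ with $\db \drel{\der}{\rs} \ins$, and aim to show $\der$ is greedy. Because $\rs$ is $\cdgs$, there is a complete reduction sequence $\os$ for which $\os(\dg)$ is cycle-free, so the hypotheses of \lem~\ref{lem:reverse-preservation-of-derivation-graph-properties} hold; instantiating it at the sub-reduction sequence $\empstr$ yields, for every non-source node $X_i$ of $\dg$, a node $X_m$ with $m < i$ and $\fr(X_i) \subseteq \termset(X_m)$.

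The translation to greediness then proceeds by the following bookkeeping, where $C = \conset(\db,\rs)$. By construction $\at(X_i) = \overline{h}_i(\head(\rho_i))$, so \dfn~\ref{def:fr-nul-con-node} gives $\fr(X_i) = \overline{h}_i(\fr(\rho_i)) \setminus C = h_i(\fr(\rho_i)) \setminus C$, using that $h_i$ and $\overline{h}_i$ agree on the frontier variables. Since a derivation introduces only fresh nulls and reuses constants solely from $C$, we have $\conset(\at(X_m)) \subseteq C$, whence $\termset(X_m) = \termset(\at(X_m)) \cup C \subseteq \nullset(\overline{h}_m(\head(\rho_m))) \cup C$. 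Chaining the inclusions, $h_i(\fr(\rho_i)) \subseteq \fr(X_i) \cup C \subseteq \termset(X_m) \subseteq \nullset(\overline{h}_m(\head(\rho_m))) \cup C$, which is exactly the greedy condition for step $i$ witnessed by $j = m < i$. For source nodes $X_i$ (in particular every step whose rule has empty frontier, as well as the initial node, where the witness $X_0$ carries no nulls), \dfn~\ref{def:fr-nul-con-node} sets $\fr(X_i) = \emptyset$, i.e.\ $h_i(\fr(\rho_i)) \subseteq C$, so greediness holds trivially. As every step is thereby covered, $\der$ is greedy; $\db$ and $\der$ being arbitrary, $\rs$ is $\gbts$.

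I expect the only delicate point — and hence the main obstacle — to be the constant/null accounting: one must argue carefully that the non-constant terms of $\termset(X_m)$ are precisely the nulls of $\overline{h}_m(\head(\rho_m))$, which rests on the fact that no derivation step manufactures a constant outside $C$, so that $\termset(X_m) \setminus C \subseteq \nullset(\overline{h}_m(\head(\rho_m)))$. Everything else is routine, together with the remark that source nodes fall into the trivially greedy case. Finally, for claim~2 I would note that $\wcdgs$ hands us, for each instance $\ins$ with $\db \drel{\der}{\rs} \ins$, a derivation $\der'$ with $\db \drel{\der'}{\rs} \ins$ whose graph $G_{\der'}$ reduces to a cycle-free graph; applying the argument above to $\der'$ shows $\der'$ is greedy, exhibiting a greedy derivation of $\ins$ from $\db$ and so witnessing $\wgbts$.
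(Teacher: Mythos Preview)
Your proposal is correct and follows essentially the same route as the paper: invoke \lem~\ref{lem:reverse-preservation-of-derivation-graph-properties} at the empty sub-reduction sequence $\empstr$ to obtain, for every non-source node $X_i$ of the unreduced derivation graph, some $X_m$ with $m<i$ and $\fr(X_i)\subseteq\termset(X_m)$, and then unfold the definitions of $\fr(X_i)$ and $\termset(X_m)$ to recover the greedy inequality, treating source nodes trivially via $h_i(\fr(\rho_i))\subseteq C$. The only cosmetic difference is that the paper spells out claim~2 and declares claim~1 analogous, whereas you do the reverse.
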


\begin{proof} We prove claim 2 since claim 1 is shown in a similar fashion. Let $\db$ be a database, $\rs$ be $\wcdgs$, $C = \conset(\db,\rs)$, and assume $\db \drel{\der}{\rs} \ins$. Since $\rs$ is $\wcdgs$, we know there exists an $\rs$-derivation 
$$
\der' = \db,(\rho_{1},h_{1},\ins_{1}), \ldots, (\rho_{k},h_{k},\ins_{k})
$$ 
such that $G_{\der'} = (\nodes,\edges,\at,\lbl)$ is reducible to a cycle-free graph. That is to say, there exists a (complete) reduction sequence $\os$ such that $\os(G_{\der'})$ is cycle-free. By \lem~\ref{lem:reverse-preservation-of-derivation-graph-properties}, we know that for every $\os' \subos \os$, $\os'(G_{\der'}) = (\nodes',\edges',\at',\lbl')$ satisfies the following property: For each non-source node $X_{\vn} \in \nodes'$, there exists a node $X_{\vm} \in \nodes'$ such that  $m < n$ and $\fr(X_{\vn}) \subseteq \termset(X_{\vm})$. In particular, this property holds for $\os' = \empstr$, i.e. for $G_{\der'}$. Since $h_{k}(\fr(\rho_{k})) \subseteq \cons$ when $X_{k} \in \nodes$ is a source node, and due to the fact that for each non-source node $X_{\vn} \in \nodes$, $\fr(X_{\vn}) = \overline{h}_{\vn}(\fr(\rho_{\vn})) \setminus C = h_{\vn}(\fr(\rho_{\vn})) \setminus C$, we have that
$$
h_{\vn}(\fr(\rho_{\vn})) \subseteq \fr(X_{\vn}) \cup C \subseteq \termset(X_{\vm}) = \nullset(\overline{h}_{\vm}(\head(\rho_{\vm})) \cup \conset(\db,\rs),
$$
 for each $0 < \vn \leq k$ and some $\vm < \vn$, where the last equality above follows from the definition of $\termset(X_{\vm})$. Therefore, $\der'$ is greedy, showing that $\rs$ is $\wgbts$.
\end{proof}

\begin{customthm}{\ref{thm:(w)gbts-(w)adgs-equivalence}}
$\gbts$ coincides with $\cdgs$ and $\wgbts$ coincides with $\wcdgs$. Membership in $\cdgs$, $\gbts$, $\wcdgs$, or $\wgbts$ warrants decidable BCQ entailment.
\end{customthm}

\begin{proof}
The first statement follows from \lem~\ref{lem:(w)gbts-is-(w)adgs} and \lem~\ref{lem:(w)adgs-is-(w)gbts}.
The second statement follows from the fact that BCQ entailment is decidable for $\bts$ and every class of rule sets mentioned is a subset of $\bts$.
\end{proof}

\begin{customcor}{\ref{corollaries}}
$\{\TR,\AR\}$ is reduction-admissible relative to $\CR$.
\end{customcor}

\begin{proof} 
Let $\der$ be an arbitrary $\rs$-derivation and assume that $\dg$ can be reduced to a cycle-free graph. By the proof of \lem~\ref{lem:(w)adgs-is-(w)gbts} above, $\der$ is a greedy $\rs$-derivation. Thus, by the proof of \lem~\ref{lem:(w)gbts-is-(w)adgs}, $\der$ is reducible to a cycle-free graph using only the $\CR$ operation.
\end{proof}

\end{document}